\newcommand{\vertiii}[1]{{\left\vert\kern-0.25ex\left\vert\kern-0.25ex\left\vert #1 
    \right\vert\kern-0.25ex\right\vert\kern-0.25ex\right\vert}}
\newtheorem{thm}{Theorem}
\newtheorem{lem}{Lemma}
\newtheorem*{thm*}{Theorem}
\newcommand{\setthmtag}[1]{
  \let\oldthethm\thethm
  \newcommand{\thethm}{#1}
  \g@addto@macro\endthm{
    \addtocounter{thm}{-1}
    \global\let\thethm\oldthethm}
  }
\newtheorem{prop}[thm]{Proposition}
\newtheorem*{prop*}{Proposition}
\newtheorem{lemma}[thm]{Lemma}
\newtheorem*{lemma*}{Lemma}
\newtheorem{cor}[thm]{Corollary}
\newtheorem*{cor*}{Corollary}
\newtheorem*{cj*}{Conjecture}
\newtheorem{Def}[thm]{Definition}
\newtheorem*{Def*}{Definition}
\theoremstyle{definition}
\newtheorem*{rem*}{Remark}
\def\beq{\begin{equation}}
\def\eeq{\end{equation}}
\def\bq{\begin{quote}}
\def\eq{\end{quote}}
\def\ben{\begin{enumerate}}
\def\een{\end{enumerate}}
\def\bit{\begin{itemize}}
\def\eit{\end{itemize}}
\def\lb{\left(}
\def\rb{\right)}
\def\r|{\right|}
\newcommand\cM{\mathcal{M}}
\newcommand\cA{\mathcal{A}}
\newcommand\cH{\mathcal{H}}
\newcommand\cB{\mathcal{B}}
\newcommand\cL{\mathcal{L}}
\newcommand\cD{\mathcal{D}}
\newcommand{\U}{\mathcal{U}}
\newcommand{\Em}{\mathcal{E}}
\newcommand{\Lm}{\mathcal{L}}
\newcommand{\tr}[1]{\operatorname{Tr}\lb#1\rb}
\newcommand{\norm}[1]{\left\|#1\right\|}
\newcommand{\cO}{\mathcal{O}}
\newcommand{\cU}{\mathcal{U}}
\newcommand\be{\begin{equation}}
\newcommand\ee{\end{equation}}
\definecolor{ms}{rgb}{0,.4,1}
\newcommand{\Norm}[1]{\left\Vert #1 \right\Vert}
\def\r{\textbf{r}}
\newcommand*\bigcdot{\mathpalette\bigcdot@{.6}}
\newcommand*\bigcdot@[2]{\mathbin{\vcenter{\hbox{\scalebox{#2}{$\m@th#1\bullet$}}}}}
\newcommand{\bigo}[1]{\mathcal{O}\left (#1\right)}
\newcommand{\rom}[1]{\uppercase\expandafter{\romannumeral #1\relax}}
\newcommand{\norbra}[1]{\left( #1\right)}
\newcommand{\sqrbra}[1]{\left[ #1\right]}
\newcommand{\lind}{\mathcal{L}}
\newcommand{\de}{{\rm d}}
\newcommand{\E}{\mathcal{E}}
\begin{document}
\title{Polynomial-time thermalization and Gibbs sampling from system-bath couplings
}

\author{\begingroup
\hypersetup{urlcolor=navyblue}
\href{https://orcid.org/0000-0001-9699-5994}{Sam Slezak }
\endgroup}
\email[Samuel Slezak ]{samuel.slezak@ens-lyon.fr}
\affiliation{Univ Lyon, ENS Lyon, UCBL, CNRS, Inria, LIP, F-69342, Lyon Cedex 07, France}

\author{\begingroup
\hypersetup{urlcolor=navyblue}
\href{https://orcid.org/0000-0002-6395-3971}{Matteo Scandi }
\endgroup}
\email[Matteo Scandi ]{matteo.scandi@csic.es}
\affiliation{Instituto de F\'{i}sica T\'{e}orica UAM/CSIC, C. Nicol\'{a}s Cabrera 13-15, Cantoblanco, 28049 Madrid, Spain}

\author{\begingroup
\hypersetup{urlcolor=navyblue}
\href{https://orcid.org/0000-0001-9699-5994}{Daniel Stilck Fran\c{c}a }
\endgroup}
\email[Daniel Stilck Fran\c ca ]{dsfranca@math.ku.dk}
\affiliation{Department of Mathematical Sciences, University of Copenhagen, Universitetsparken 5, 2100 Denmark}

\author{\begingroup
\hypersetup{urlcolor=navyblue}
\href{https://orcid.org/0000-0002-5889-4022}{\'Alvaro M. Alhambra
\endgroup}
}
\email[\'{A}lvaro M. Alhambra ]{alvaro.alhambra@csic.es}
 \affiliation{Instituto de F\'{i}sica T\'{e}orica UAM/CSIC, C. Nicol\'{a}s Cabrera 13-15, Cantoblanco, 28049 Madrid, Spain}

 \author{\begingroup
\hypersetup{urlcolor=navyblue}
\href{https://orcid.org/0000-0001-7712-6582}{Cambyse Rouz\'{e}
\endgroup}
}
\email[Cambyse Rouz\'{e} ]{cambyse.rouze@inria.fr}
 \affiliation{Inria, T\'{e}l\'{e}com Paris - LTCI, Institut Polytechnique de Paris, 91120 Palaiseau, France}

\begin{abstract}
Many physical phenomena, including thermalization in open quantum systems and quantum Gibbs sampling, are modeled by Lindbladians approximating a system weakly coupled to a bath. Understanding the convergence speed of these Lindbladians to their steady states is crucial for bounding algorithmic runtimes and thermalization timescales. We study two such families of processes: one characterizing a repeated-interaction Gibbs sampling algorithm, and another modeling open many-body quantum thermalization. We prove that both converge in polynomial time for several non-commuting systems, including high-temperature local lattices, weakly interacting fermions, and 1D spin chains. These results demonstrate that simple dissipative quantum algorithms can prepare complex Gibbs states and that Lindblad dynamics accurately capture thermal relaxation. Our proofs rely on a novel technical result that extrapolates spectral gap lower bounds from quasi-local Lindbladians to the non-local generators governing these dynamics.
\end{abstract}

\maketitle

\section{Introduction}\label{sec:intro}
\noindent Systems in nature often \textit{thermalize}, converging to an equilibrium state that is independent of the microscopic details of the initial conditions. This process is typically understood as a system which is weakly coupled to a heat bath, which in the quantum realm is often described through a Lindbladian. Many previous works have attempted to mimic this thermalization process with quantum computers \cite{kastoryano2023quantum,chen2023efficient,Ding_2024_single,ding2024efficientquantumgibbssamplers,gilyén2024quantumgeneralizationsglaubermetropolis,hagan2025thermodynamiccostignorancethermal,jiang2024quantummetropolissamplingweak,Lambert_2024Fixing,lloyd2025quasiparticlecoolingalgorithmsquantum,Rall_2023,Temme_2011,Verstraete2009,zhan2025rapidquantumgroundstate}, with techniques falling under the umbrella term of \textit{quantum Gibbs sampling}. While the topic has a long history, recently relevant progress has been made in our understanding of both this family of processes and their quantum simulation, especially in the many-body context. In particular, the works of \cite{chen2023efficient,ding2024efficientquantumgibbssamplers} introduced a method to simulate Lindblad evolutions that both have an efficient implementation, and also satisfy Kubo-Martin-Schwinger (KMS) detailed balance exactly, which guarantees convergence to the Gibbs state. Later work characterized the mixing time of these Lindbladians in a variety of situations including high temperatures, weak interactions, and $1$D systems \cite{rouzé2024efficientthermalizationuniversalquantum,rouzé2024optimalquantumalgorithmgibbs,tong2025fastmixingweaklyinteracting,zhan2025rapidquantumgroundstate,smid2025rapidmixingquantumgibbs,bergamaschi2025quantumspinchainsthermalize}.  

The algorithms in \cite{chen2023efficient,ding2024efficientquantumgibbssamplers}, while conceptually elegant and asymptotically optimal, have complicated implementations including block encodings and operator Fourier transforms, and are not suitable for near term implementation.
In light of this, simpler Gibbs sampling algorithms with implementations much closer to the physical models they imitate were introduced in \cite{ding2025endtoendefficientquantumthermal,hahn2025provably,lloyd2025quantumthermalstatepreparation,hagan2025thermodynamiccostignorancethermal,Lambert_2024Fixing,lloyd2025quasiparticlecoolingalgorithmsquantum,Ramon_Escandell_2025}, putting us closer to the goal of generating complex equilibrium states in a quantum computer. These schemes are akin to the so-called \emph{collision} or \emph{repeated interaction} models \cite{Cattaneo_2021,Ciccarello_2022,Pocrnic_2025}. 

In particular, the algorithm introduced in \cite{ding2025endtoendefficientquantumthermal}, consisting of weak, repeated interactions with thermal qubits, showed end-to-end performance guarantees for restricted situations, such as commuting Hamiltonians. In our first contribution, we prove a collection of efficient preparation results for this algorithm, extending the performance guarantees to all the aforementioned fast mixing systems, which includes a wide range of physically relevant many-body settings \cite{rouzé2024efficientthermalizationuniversalquantum,rouzé2024optimalquantumalgorithmgibbs,tong2025fastmixingweaklyinteracting,smid2025rapidmixingquantumgibbs,bergamaschi2025quantumspinchainsthermalize}. This shows that a simple, early fault tolerant scheme is able to efficiently produce a variety of complex quantum Gibbs states. 

The notion of KMS detailed balance is relevant beyond quantum algorithms, since it also seems to be the correct notion of open-system thermalization towards the many-body Gibbs state \cite{Nathan_2020,Mozgunov_2020,scandi2025thermalizationopenmanybodysystems} (see \cite{Fazio_2025,stefanini2025lindbladme} for reviews). We also turn to the model of thermalization satisfying KMS detailed balance derived in \cite{scandi2025thermalizationopenmanybodysystems} from a weak system-bath interaction under natural open-system assumptions. We show that this evolution converges to its equilibrium state in polynomial time for two related settings: arbitrary local Hamiltonians at high temperatures \cite{rouzé2024optimalquantumalgorithmgibbs}, and deviations from product Hamiltonians \cite{smid2025rapidmixingquantumgibbs}. This proves from first principles that open many body systems weakly coupled to high-temperature baths reach their Gibbs steady state both quickly and accurately, as expected on physical grounds. Thus, KMS Lindbladians accurately capture the natural many-body thermalization, down to the steady state. Previous results on physically-generated Lindbladians were only known for commuting models \cite{kastoryano2016quantumgibbssamplerscommuting,Rapid1D,kochanowski2024rapidthermalizationdissipativemanybody}, and their associated Davies generators in the limit of vanishing coupling with the bath. As an additional result in this direction for commuting Hamiltonians, we use recently proven \emph{modified logarithmic Sobolev inequality} results to strengthen arguments on the thermalization times of quantum memories \cite{stengele2025modified}.

Our results on KMS Lindbladians are enabled by the same underlying technical contribution, which allows us to extrapolate spectral gap lower bounds from quasi-local to non-local Lindbladians, which we believe is of independent interest. Spectral gaps are well-known to govern the speed at which a dissipative evolution convergences to its equilibrium. The majority of currently known  proofs for spectral gaps of KMS symmetric Lindbladians rely on the Lindbladians consisting of terms with quasi-local support, allowing for an extrapolation to a trivially fast-mixing point. Both the algorithm in \cite{ding2025endtoendefficientquantumthermal} and the model in \cite{scandi2025thermalizationopenmanybodysystems} are precise only when they are too non-local for existing proof techniques to be applicable. Our lemma allows us to overcome this difficulty. Furthermore, we believe this technique can be used to show that many other Lindbladians out of the reach of current proof techniques are fast mixing as well and it allows us to conclude that various Gibbs samplers proposed in the literature have a qualitatively similar behaviour when it comes to their mixing times.

Overall, our results give theoretical footing to the early-fault tolerant dissipative preparation of complex quantum states, which has recently seen relevant experimental progress \cite{Mi_2024}
, and appears to be a promising avenue for practical quantum advantage \cite{Lin_2025}. 


\section{Background}

\begin{figure}[t]
    \centering
    \includegraphics[width=0.475\textwidth]{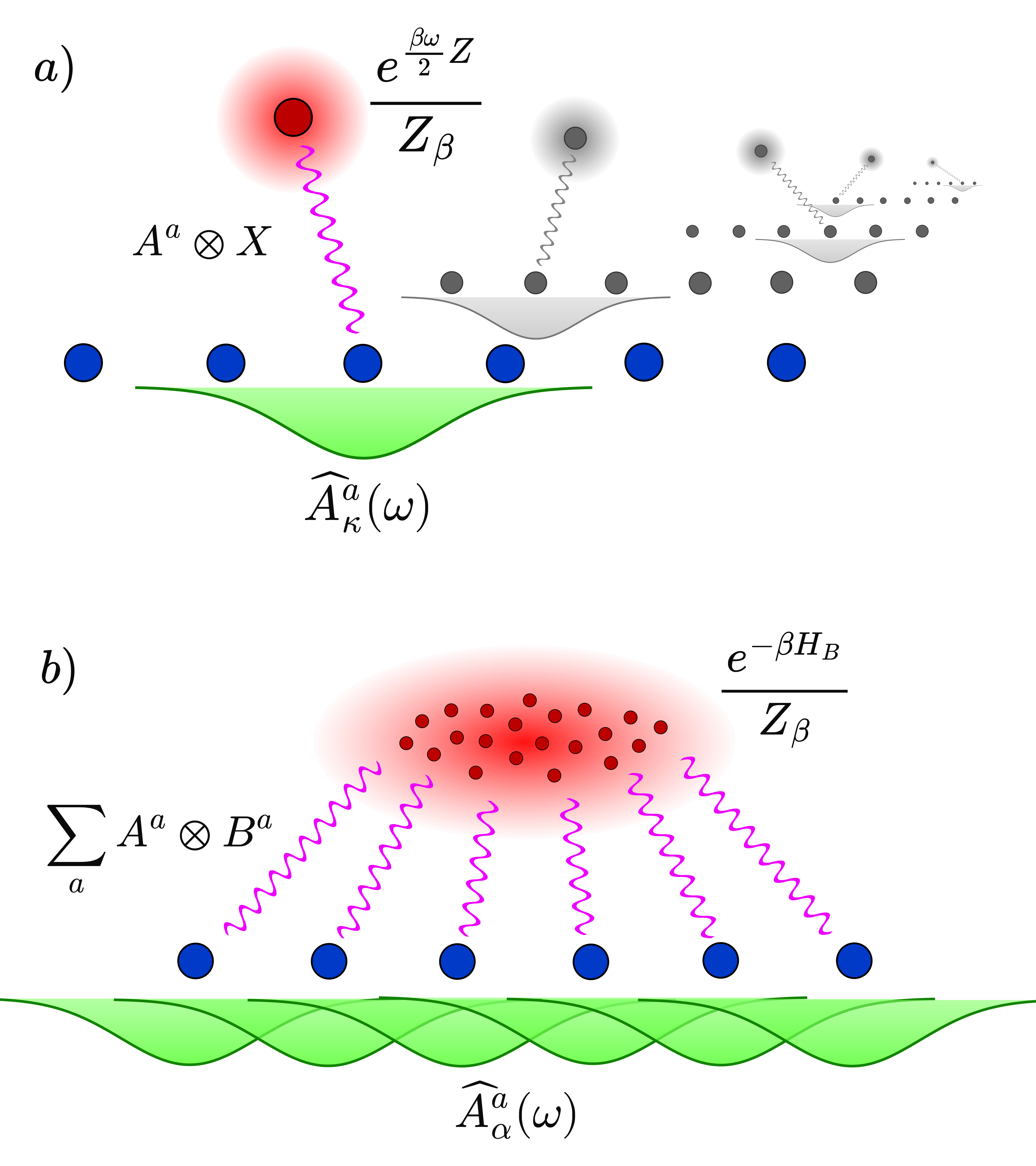}
    \caption{Representations of the system-bath couplings we consider. In a), each step of the algorithm couples a single qubit bath with a random frequency $\omega$ to a randomly chosen qubit with the interaction $V=A^a\otimes X$ with time dependent strength $\alpha f_\kappa(t)$, resulting in the jump operators $\widehat{A}_\kappa^a(\omega)$ in the Lindbladian with their locality is determined by the width of the function $f_\kappa$. In b) a bath of qubits is coupled to the system via the interaction $V=\sum_a A^a\otimes B^a$ with strength $\alpha$, leading to jump operators $\widehat{A}^a_\alpha(\omega)$ with their locality being determined by the bath correlation functions and the coupling strength $\alpha$.}
    \label{fig:system-bath-couplings}
\end{figure}
\noindent  We consider Lindbladians $\Lm$ which are self-adjoint with respect to the KMS inner product in the Heisenberg picture; that is:
\begin{align*}
    \langle Y, \Lm^\dagger(X)\rangle_{\rho_\beta} = \langle \Lm^\dagger(Y), X\rangle_{\rho_\beta}\;\forall\;X,Y\in\mathcal{B}(\mathcal H)
\end{align*}
 for $\langle Y, X\rangle_{\rho_\beta}:=\textrm{Tr}[\rho_\beta^{1/2}Y^\dagger\rho_\beta^{1/2}X]$ the KMS inner product with respect to the (full rank) state $\rho_\beta$. This property implies that $\rho_\beta$ is a fixed point of the Lindblad evolution. Further details on KMS symmetric Lindbladians are provided in Appendix \ref{app:KMSsymm}.

The specific families of Lindbladians we consider arise as approximations to the unitary dynamics of a system weakly coupled to a thermal bath as in Figure \ref{fig:system-bath-couplings}, with a general Hamiltonian given by:
\begin{align}\label{eq:system-bath-H}
    H(t) = H_S + \alpha f(t)V + H_B
\end{align}
where $H_S$ is the system Hamiltonian, $H_B$ is the bath Hamiltonian and $V = \sum_aA^a\otimes B^a$ is the system-bath interaction. We also consider averages over different realizations of \eqref{eq:system-bath-H}. In what follows, we consider that the set $\{A^a\}$ corresponds to either single-qubit Paulis or Majorana operators. We are interested in the reduced dynamics on the subsystem $S$ after the joint system has evolved under the time-dependent unitary $U(t,t') =\mathcal T \exp\left(-i\int_{t'}^tH(s)ds\right) $ for some times $t'$ to $t$:
\begin{align*}
    \mathbb{E}\Big[\operatorname{Tr}_B\left[U(t,t')(\;\bigcdot\, \otimes \rho_\beta^B)U^\dagger(t,t')\right]\Big]
\end{align*}
where the initial state of the $B$ subsystem is the Gibbs state $\rho_\beta^B = \frac{e^{-\beta H_B}}{\operatorname{Tr}\left[e^{-\beta H_B} \right]}$ and the expectation over the different realizations of $H(t)$ has been taken.
\smallskip 
\paragraph{Repeated interaction models.} First, we consider algorithms based on repeated interactions with ancillary systems \cite{ding2025endtoendefficientquantumthermal,hahn2025provably,lloyd2025quantumthermalstatepreparation,hagan2025thermodynamiccostignorancethermal,Lambert_2024Fixing,lloyd2025quasiparticlecoolingalgorithmsquantum,Ramon_Escandell_2025} as applied to Gibbs state preparation. In particular, we consider a version of the algorithm in \cite{ding2025endtoendefficientquantumthermal}, which we refer to hereafter as \textit{repeated interaction Gibbs sampling}, where a single qubit bath Hamiltonian $H_B = -\frac{\omega}{2}Z$ with a randomly sampled frequency $\omega$ is coupled to a randomly sampled single site Pauli or Majorana operator $A^a$ (where $a$ indexes both the site and the type of Pauli or Majorana) from the set $\cA = \{A^a,-A^a\}_a$  via $ V = A^a\otimes X$ by the Gaussian 
\begin{align*}
    f(t)\to f_\kappa(t) = \frac{e^{-t^2/(\kappa\beta)^2}}{\sqrt{(\pi/2)^{1/2}\kappa\beta}}
\end{align*}
with width $\kappa\beta$. The algorithm consists of repeated applications of channel:

\begin{align}\label{eq:channel-near-term-gibbs}
    \Phi_{(\alpha,\kappa)} \!=\!\mathbb E_{A^a,\omega}\!\Big[\operatorname{Tr}_B\!\left[U(T,-T)(\;\bigcdot\, \otimes \rho_\beta^B)U^\dagger(T,-T)\right]\!\Big]
\end{align}
 where the corresponding Hamiltonian \eqref{eq:system-bath-H} is evolved from time $-T$ to $T$, and where the expectation is over the jump operators $A^a\in\cA$ and the bath frequency $\omega$. 
It was shown in \cite{ding2025endtoendefficientquantumthermal} that, for $\kappa = \cO(\text{poly}(N,\epsilon^{-1}))$ where $N$ is the system size, 
repeated applications of \eqref{eq:channel-near-term-gibbs} converges to a state $\epsilon$ close in $1$-norm to the Gibbs state $\rho_\beta^S = \frac{e^{-\beta H_S}}{\operatorname{Tr}\left[e^{-\beta H_S}\right]}$. Furthermore, for an appropriate choice of distribution of $\{\omega\}$, the dynamics of \eqref{eq:channel-near-term-gibbs} is well approximated by the Lindbladian evolution: 
\begin{align}
\Lm^{\operatorname{RI}}_\kappa \equiv  -i[H^{\operatorname{LS}},\bigcdot\;] + J\Lm^{\operatorname{G}}_{\kappa}.
\end{align}
Here, $\cL_{\kappa}^{\operatorname{G}}$ is an extensively normalized variant of the KMS symmetric Lindbladian introduced in \cite{chen2023efficient} with respect to $\rho_\beta^S$ and with jump operators:\begin{align}\label{eq:jump-operators-kappa}
    \widehat{A}_\kappa^a(\omega) =  \int_{-\infty}^\infty f_\kappa(t)A^a(t)e^{-i\omega t}dt,
 \end{align}
 where $A^a(t) := e^{iH_St}A^ae^{-iH_St}$. $H^{\operatorname{LS}}$ is the so-called \emph{Lamb shift} Hamiltonian, and $J$ is a normalization factor that here scales as $\Theta(\beta/N)$. 
 Both the Hamiltonian simulation cost and the spatial extent of the jump operators \eqref{eq:jump-operators-kappa} are proportional to $\kappa$. When $\kappa=\cO(1)$
 they are quasi-local and indeed $\cL_1^{\operatorname{G}}$ corresponds to the Lindbladians known to be fast mixing for the situations discussed in the introduction \cite{rouzé2024efficientthermalizationuniversalquantum,rouzé2024optimalquantumalgorithmgibbs,tong2025fastmixingweaklyinteracting,zhan2025rapidquantumgroundstate,smid2025rapidmixingquantumgibbs,bergamaschi2025quantumspinchainsthermalize}. Quasi-locality is an essential ingredient in the proofs of these results, which do not apply when $\kappa$ is not a constant.  We stress that even though the algorithm is modeled by Lindbladians with non-local jump operators, the implementation of the algorithm is local once a suitable Hamiltonian simulation subroutine is chosen (i.e. it can be compiled into a sequence of local gates with minimal overhead).

\smallskip

\paragraph{Macroscopic baths} A second model we consider is given by choosing the system-bath interaction  with $f(t)=1$, where the bath is understood as a single large uncontrolled system, such that $i)$ it is initially uncorrelated with the system, and $ii)$ it is a Gibbs state $\rho_\beta^B = \frac{e^{-\beta H_B}}{\operatorname{Tr}\left[e^{-\beta H_B} \right]}$ of a quadratic Hamiltonian.
Additionally, we also assume that the bath correlation functions $C_{aa}(t)=\tr{B^a(t)B^a \rho_\beta^B}$ have well-defined timescales $\Gamma^a_0= \int_{-\infty}^\infty \operatorname{d} t \vert C_{aa}(t) \vert $ and $\Gamma^a_0 \tau^a_0= \int_{-\infty}^\infty \operatorname{d} t \vert t C_{aa}(t) \vert $. These are standard in the theory of open systems \cite{stefanini2025lindbladme}. 

In \cite{scandi2025thermalizationopenmanybodysystems}, it was shown that the system's evolution $\rho_S(t) =\operatorname{Tr}_B[e^{-iHt }(\rho_{S}(0)\otimes \rho_\beta^B) e^{iHt }]$ is well approximated by the dynamics generated by a KMS Lindbladian $\mathcal{L}^{\operatorname{MB}}_\alpha$ satisfying (see Lemma \ref{le:approxKMS})
\begin{align} \label{eq:closenessKMSintro}
	&\| \rho_S(t)\!-\! e^{t \alpha^2 \mathcal{L}^{\operatorname{MB}}_\alpha}\rho_S(0)\|_1 
	\!=\!  \mathcal{O}\!\left(\alpha^3 \Gamma^2 t \!\left(  \!\tau \!+\!  \beta e^{(\alpha \Gamma \beta )^2}  \right) \right)\!,
\end{align}
where $\Gamma= \sum_a \Gamma_0^a$, $\Gamma \tau = \sum_a \Gamma_0^a \tau_0^a$, $\gamma_{\max}:=\max_a\Gamma^a_0$,
and where $\mathcal{L}^{\operatorname{MB}}_\alpha$ has jumps and rates related to $C_a(t)$ and its Fourier transform. In this case, one can avoid the need for a Lamb-shift term $H^{\operatorname{LS}}$ by renormalizing the system Hamiltonian. The Lindbladian explicitly depends on the coupling $\alpha$ (see Appendix \ref{app:physical}), and the jump operators become less local as $\alpha$ decreases and \eqref{eq:closenessKMSintro} becomes smaller. In particular, for $\alpha$ scaling inverse polynomially with the system size, a condition necessary for the approximation \eqref{eq:closenessKMSintro} to become precise, there is no known general tool to prove the fast convergence of the dynamics generated by $\mathcal{L}^{\operatorname{MB}}_\alpha$, to the best of our knowledge.


\section{Main Results}\label{sec:main-results}
\paragraph{Polynomial time Gibbs sampling via repeated interactions.} For our first result, we show that repeated interaction Gibbs sampling can prepare Gibbs states in polynomial time for a variety of systems consisting of lattice models of $N$ sites in the following regimes: high temperature local qubit models \cite{rouzé2024efficientthermalizationuniversalquantum}, weakly interacting fermionic models at any constant temperature \cite{tong2025fastmixingweaklyinteracting}, and {$1$-D} spin chains at any constant temperature \cite{bergamaschi2025quantumspinchainsthermalize}. We present the cost of the algorithm in terms of the total Hamiltonian simulation time required to prepare a state $\epsilon$-close to $\rho_\beta^S$ in $1$-norm. This is given by $t_{\operatorname{total}}(\epsilon)= 2T \times t_{\textup{therm}}(\epsilon)$ where $T$ is the time parameter appearing in \eqref{eq:channel-near-term-gibbs} and $t_{\textup{therm}}(\epsilon)$ is the number of times \eqref{eq:channel-near-term-gibbs} needs to be applied to ensure that $\|\Phi_{(\alpha,\kappa)}^{t_{\textup{therm}}(\epsilon)}(\sigma)-\rho_\beta^S\|_1\leq \epsilon$ for all $\sigma$. In what follows, we refer to $t_{\textup{therm}}(\epsilon)$ as the \textit{thermalization index}. We remain agnostic as to which Hamiltonian simulation subroutine is used, noting that once one is chosen the elementary gate cost of the algorithm will typically be increased by a factor of $\cO(\|H_S\|)$ \cite{Haah_2021}. We combine all of these results in the following corollary:
\begin{thm}\label{thm:repeated-interaction-efficient-prep}
Repeated interaction Gibbs sampling prepares a state $\epsilon$ close in $1$-norm to the Gibbs state $\rho_\beta^S$ with total Hamiltonian simulation time: 
$$t_{\textup{total}}(\epsilon)=\widetilde{\cO}\left( \frac{N^{10}}{\epsilon^4}\right)$$ for sufficiently small $\beta \le \beta^*= \mathcal{O}(1)$ for  $(k,l)$-local Hamiltonians with maximum interaction strength $h$ as in \cite{rouzé2024efficientthermalizationuniversalquantum}, and at any constant $\beta$ for local weakly interacting Fermionic Hamiltonians with sufficiently small maximum interaction strength $U \le U^*=\mathcal{O}(1)$ as in \cite{tong2025fastmixingweaklyinteracting}, as well as $1$-D nearest neighbor Hamiltonians with local dimension $2^q$ as in \cite{bergamaschi2025quantumspinchainsthermalize}. 
\end{thm}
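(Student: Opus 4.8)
The plan is to reduce the convergence of the discrete-time channel $\Phi_{(\alpha,\kappa)}$ to a spectral gap statement for the continuous-time generator $\Lm^{\operatorname{RI}}_\kappa$, and then to obtain that gap from the known gap of the quasi-local Gibbs sampler $\cL_1^{\operatorname{G}}$ via the paper's extrapolation lemma. First I would fix $\kappa = \widetilde{\cO}(\mathrm{poly}(N,\epsilon^{-1}))$ as required by the simulation bound of~\cite{ding2025endtoendefficientquantumthermal}, so that $\|\Phi_{(\alpha,\kappa)}^{t}(\sigma)-\rho_\beta^S\|_1$ is controlled by $\|e^{t J \Lm^{\operatorname{RI}}_\kappa}\|$-type decay plus an $O(\epsilon)$ Trotter/weak-coupling error accumulated over $t_{\textup{therm}}(\epsilon)$ steps; balancing these two contributions is what fixes the exponents. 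The thermalization index is then $t_{\textup{therm}}(\epsilon) = \widetilde{\cO}\big(\lambda(\Lm^{\operatorname{RI}}_\kappa)^{-1}\log(1/\epsilon')\big)$, where $\lambda$ is the spectral gap in the KMS inner product and $\epsilon'$ accounts for the $\chi^2$-to-$1$-norm conversion, which costs a factor $\log\|\rho_\beta^{S,-1}\| = \widetilde{\cO}(N)$.

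The core step is the gap bound for $\Lm^{\operatorname{RI}}_\kappa = -i[H^{\operatorname{LS}},\bigcdot] + J\Lm^{\operatorname{G}}_\kappa$. Here I would invoke the paper's main technical lemma to extrapolate: the gap of $\cL_1^{\operatorname{G}}$ is $\Omega(\mathrm{poly}(N)^{-1})$ in each of the three regimes (high-temperature $(k,l)$-local Hamiltonians~\cite{rouzé2024efficientthermalizationuniversalquantum}, weakly interacting fermions~\cite{tong2025fastmixingweaklyinteracting}, 1D chains~\cite{bergamaschi2025quantumspinchainsthermalize}), and the lemma upgrades this to a gap for $\Lm^{\operatorname{G}}_\kappa$ with non-local jumps~\eqref{eq:jump-operators-kappa} at the price of a factor polynomial in $\kappa$. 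The normalization $J = \Theta(\beta/N)$ then enters multiplicatively. One also needs that the Lamb-shift commutator term $-i[H^{\operatorname{LS}},\bigcdot]$ does not destroy the gap: since it is anti-self-adjoint with respect to the KMS inner product and commutes with the projection onto $\rho_\beta^S$, it contributes only an imaginary part to the spectrum and the real part of the gap is unchanged — this is the standard argument that the Dirichlet form is unaffected by the Hamiltonian part of a KMS-symmetric Lindbladian. Collecting these: $\lambda(\Lm^{\operatorname{RI}}_\kappa)^{-1} = \widetilde{\cO}(\mathrm{poly}(N,\kappa))$, hence $t_{\textup{therm}}(\epsilon) = \widetilde{\cO}(\mathrm{poly}(N,\epsilon^{-1}))$, and multiplying by $2T$ (with $T = \widetilde{\cO}(\kappa\beta)$) gives $t_{\textup{total}}(\epsilon) = \widetilde{\cO}(N^{10}/\epsilon^4)$ after tracking all exponents.

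The main obstacle I anticipate is making the extrapolation quantitatively tight enough that the $\kappa$-dependence, combined with the $\kappa = \mathrm{poly}(N,\epsilon^{-1})$ requirement, still lands at the stated $N^{10}/\epsilon^4$ rather than a worse polynomial — this requires being careful about how the perturbation of jump operators ($\widehat A^a_\kappa(\omega)$ versus $\widehat A^a_1(\omega)$) propagates through the Dirichlet form, and about how the error $\epsilon$ in the weak-coupling approximation of~\cite{ding2025endtoendefficientquantumthermal} forces $\kappa$ to grow. A secondary technical point is verifying that the hypotheses of the extrapolation lemma (which presumably requires some uniform control on the quasi-local terms, e.g. a Lieb--Robinson-type bound or a bound on the operator norms of truncated jumps) are met in all three parameter regimes simultaneously; this is likely where most of the case-by-case work lives, but it should follow by importing the structural estimates already established in~\cite{rouzé2024efficientthermalizationuniversalquantum,tong2025fastmixingweaklyinteracting,bergamaschi2025quantumspinchainsthermalize}.
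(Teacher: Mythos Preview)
Your overall plan is correct in outline --- reduce to the gap of $\cL_\kappa^{\operatorname{G}}$, pull that back to the gap of $\cL_1^{\operatorname{G}}$ via the extrapolation lemma, and invoke the three cited gap results --- but two of your key steps are wrong as stated and would not go through.

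First, the Lamb shift. The commutator $-i[C,\,\cdot\,]$ (the $C$ of Theorem~\ref{thm:L-close-to-KMS}, which is what remains after matching to the normalized $\cL_\kappa^{\operatorname{G}}$) is \emph{not} anti-self-adjoint in the KMS inner product unless $[C,H_S]=0$, which fails here; so it genuinely shifts the real part of the spectrum and your ``Dirichlet form is unaffected'' argument does not apply. The paper instead uses the structural bound $\|\rho_\beta^{-1/4}C\rho_\beta^{1/4}-\rho_\beta^{1/4}C\rho_\beta^{-1/4}\|=\cO(\beta/\kappa)$ from \cite{ding2025endtoendefficientquantumthermal} together with Lemma~\ref{lem:lamb-shift-mixing-time-error-bound}: choosing $\kappa=\Omega(\|\cA\|_{\operatorname{G}}/\lambda)$ makes this perturbation at most half the rescaled gap, and only then does the mixing-time bound follow.

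Second, you have misread both the extrapolation lemma and the input gaps. Lemma~\ref{lem:CKG-monotonicity} is not a perturbative estimate with a polynomial-in-$\kappa$ loss; it is an exact \emph{monotonicity}, $\lambda_{\operatorname{Gap}}(\cL_\kappa^{\operatorname{G}})\ge\lambda_{\operatorname{Gap}}(\cL_1^{\operatorname{G}})$ for all $\kappa\ge 1$, proved by writing the Dirichlet form as a Gaussian convolution --- no Lieb--Robinson or truncation hypotheses need checking, so your ``secondary technical point'' is a non-issue. And the gaps imported from \cite{rouzé2024efficientthermalizationuniversalquantum,tong2025fastmixingweaklyinteracting,bergamaschi2025quantumspinchainsthermalize} are $\Omega(1)$, not $\Omega(\operatorname{poly}(N)^{-1})$. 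The $N^{10}/\epsilon^4$ then falls out directly from Lemma~\ref{lem:gap-implies-efficient-prep-epsilon}: the bound is $\widetilde{\cO}\big(\beta\|\cA\|_{\operatorname{G}}^5\lambda^{-5}\epsilon^{-4}\log(\|\rho_\beta^{S\,-1}\|)^5\big)$, and substituting $\lambda=\Omega(1)$, $\|\cA\|_{\operatorname{G}}=\cO(N)$, $\log\|\rho_\beta^{S\,-1}\|=\cO(N)$ gives $N^{10}/\epsilon^4$. The high powers arise from shrinking $\alpha$ to kill the $\alpha^4T^4$ Dyson error in Theorem~\ref{thm:channel-to-lindbladian-error-bound}, not from any degradation of the gap with $\kappa$.
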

\noindent Here, the notation $\widetilde{\cO}$ suppresses subleading polylogarithmic factors in $N$ and $\epsilon$.
\noindent We refer to Section~\ref{sec:methods} and Appendix \ref{app:RIM} for proof details. Similar results likely apply to the weakly interacting qubit cases considered in \cite{smid2025rapidmixingquantumgibbs}, where the authors show a lower bound on the spectral gap of the Lindbladian from \cite{ding2024efficientquantumgibbssamplers}.

We expect that this result can be improved. A factor of $N^6/\epsilon^3$ comes from the effective slowing of the dynamics due to $\alpha$ needing to be small in order to control an error term that comes from truncating the Dyson expansion of $U(t,t') =\mathcal T \exp\left(-i\int_{t'}^tH(s)ds\right)$. In recent work, \cite{wang2025lindbladdynamicsrigorousguarantees} the corresponding error term has a considerably better error scaling with $\alpha$. It would be interesting to see if their results can be applied to our setting. 

\smallskip

\paragraph{Efficient physical thermalization} We also prove fast convergence of the generators originated through weak coupling with a bath \cite{scandi2025thermalizationopenmanybodysystems} at high temperatures.
Given \eqref{eq:closenessKMSintro}, the generators only resemble the exact system's dynamics for very small $\alpha$, where they are too non-local for mixing time proofs to apply. Our methods below allow us to overcome this.

In order for the bath to thermalize well, we need two physically motivated assumptions on its correlation functions, summarized as (see Appendix \ref{app:physical} for the precise formulation)
\begin{itemize}
    \item The bath correlation function decays rapidly with time.
    \item The timescale is $\tau = \mathcal{O}(\beta)$ at high temperatures.
\end{itemize}
These essentially imply that the information dissipates sufficiently quickly in the bath and that it induces white noise at high temperatures, so that in particular at $\beta=0$ one obtains depolarizing noise.
Under these, we obtain the following result.

\begin{thm}\label{thm:bath}
    Let $H$ be a $(k,l)$ local Hamiltonian, and the bath be at a temperature $\beta \le \beta^*$ for some $\beta^*= \mathcal{O}(1)$. Then, at a time $t = \widetilde{\mathcal{O}}\left( N^7 \epsilon^{-2} \right)$ and a sufficiently weak coupling $\alpha= \widetilde{\mathcal{O}}\left( N^{-3} \epsilon \right )$, the system's state $\rho_S(t)$ is $\epsilon$-close to the Gibbs state in 1-norm.
\end{thm}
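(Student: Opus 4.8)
The plan is to combine three ingredients: (i) the approximation bound \eqref{eq:closenessKMSintro} relating the true system dynamics $\rho_S(t)$ to the semigroup $e^{t\alpha^2 \mathcal L^{\operatorname{MB}}_\alpha}$; (ii) a spectral gap lower bound for $\mathcal L^{\operatorname{MB}}_\alpha$ obtained by extrapolating from the quasi-local Gibbs sampler $\mathcal L^{\operatorname{G}}_1$, which is known to be fast mixing for $(k,l)$-local Hamiltonians at $\beta \le \beta^*$ by \cite{rouzé2024optimalquantumalgorithmgibbs}, via the paper's main technical lemma; and (iii) a triangle inequality to balance the two error contributions against the runtime budget. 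First I would invoke the high-temperature assumptions on the bath correlation functions: the rapid decay gives $\Gamma = O(N)$ (extensively many jump operators, each with $O(1)$ weight) and the assumption $\tau = O(\beta)$ at $\beta \le \beta^*$ means the prefactor in \eqref{eq:closenessKMSintro} simplifies, since $e^{(\alpha\Gamma\beta)^2} = O(1)$ once $\alpha\Gamma\beta = O(1)$, i.e. once $\alpha = \widetilde O(N^{-1})$, which will be comfortably satisfied. This reduces \eqref{eq:closenessKMSintro} to roughly $\|\rho_S(t) - e^{t\alpha^2\mathcal L^{\operatorname{MB}}_\alpha}\rho_S(0)\|_1 = O(\alpha^3 N^2 \beta t)$.

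Next I would establish the spectral gap bound. The Lindbladian $\mathcal L^{\operatorname{MB}}_\alpha$ has jump operators whose locality degrades as $\alpha \to 0$ — this is precisely the regime where existing quasi-local proofs fail — so the point is to use the extrapolation lemma advertised in the introduction. Concretely, $\mathcal L^{\operatorname{MB}}_\alpha$ should be writable (up to controlled corrections and the Hamiltonian renormalization mentioned after \eqref{eq:closenessKMSintro}) in a form close to the extensively-normalized Gibbs-sampler Lindbladian, and the lemma transfers the $\Omega(1/\mathrm{poly}(N))$ gap of the quasi-local generator to $\mathcal L^{\operatorname{MB}}_\alpha$. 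Tracking the normalization, $\alpha^2 \mathcal L^{\operatorname{MB}}_\alpha$ then has gap $\lambda = \Omega(\alpha^2/\mathrm{poly}(N))$, so $\|e^{t\alpha^2\mathcal L^{\operatorname{MB}}_\alpha}\rho_S(0) - \rho_\beta^S\|_1 \le \sqrt{\|\rho_\beta^{S,-1}\|} \, e^{-\lambda t}$, and the dimensional prefactor contributes only an $O(N)$ factor inside the logarithm. Demanding this be $\le \epsilon/2$ gives $t = \widetilde O(\alpha^{-2}\,\mathrm{poly}(N)\,\log(1/\epsilon))$.

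Finally I would balance: from the approximation term, $\alpha^3 N^2 \beta t \le \epsilon/2$ forces $\alpha \lesssim (\epsilon/(N^2 t))^{1/3}$; substituting $t = \widetilde O(\alpha^{-2} N^{c})$ for the relevant polynomial power $c$ and solving the resulting inequality for $\alpha$ and $t$ jointly yields $\alpha = \widetilde O(N^{-3}\epsilon)$ and $t = \widetilde O(N^{7}\epsilon^{-2})$, matching the claimed scalings once the constant temperature is absorbed. I would then check consistency: this $\alpha$ indeed satisfies $\alpha\Gamma\beta = O(1)$, so the simplification of \eqref{eq:closenessKMSintro} used in the first step is self-consistent, and the Hamiltonian-renormalization step introduces only subleading corrections.

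The main obstacle I anticipate is step (ii): verifying that $\mathcal L^{\operatorname{MB}}_\alpha$ is genuinely in the form required by the extrapolation lemma — that is, that its jump operators and rates, after the Hamiltonian renormalization and after absorbing the $\alpha$-dependence, are a controlled perturbation of the extensively-normalized quasi-local Gibbs sampler $\mathcal L^{\operatorname{G}}_1$ to which the known fast-mixing results apply — and that the error of this identification does not itself scale badly with $N$ or force $\alpha$ smaller than the target. The approximation bound \eqref{eq:closenessKMSintro} and the bookkeeping in step (iii) are comparatively routine; the conceptual and technical weight sits in making the non-local-to-quasi-local comparison rigorous, which is exactly what the paper's new lemma is designed to handle.
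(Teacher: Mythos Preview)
Your overall architecture --- triangle inequality, approximation bound, gap bound, balance --- matches the paper. The gap in your proposal is step (ii), and it is precisely the obstacle you flag at the end, but the resolution is not what you expect.

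You propose to obtain the gap of $\mathcal{L}^{\operatorname{MB}}_\alpha$ by writing it as a controlled perturbation of the Gibbs-sampler Lindbladian $\mathcal{L}^{\operatorname{G}}_1$ and importing the known gap of the latter. This is not what the paper does, and it is not clear it can be made to work: the two families have genuinely different rate structures. In the frequency representation, $\mathcal{L}^{\operatorname{G}}_\kappa$ has coefficients $\Lambda^{\operatorname{G}}_{\nu_1,\nu_2}\propto e^{-\kappa^2\beta^2(\nu_1-\nu_2)^2/8}\,e^{-(\beta\nu_1+\beta\nu_2+2)^2/16}$, whereas $\mathcal{L}^{\operatorname{MB}}_\alpha$ has $\Lambda^{\operatorname{MB},a}_{\nu_1,\nu_2}= e^{-T(\alpha)^2(\nu_1-\nu_2)^2/4}\,\widehat g_a(-\nu_1)\widehat g_a(-\nu_2)$ with $\widehat g_a$ determined by the bath correlation function. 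The $(\nu_1-\nu_2)$ part is the same Gaussian, but the $(\nu_1+\nu_2)$ dependence is model-specific and there is no reason $\widehat g_a$ should be close to a Gaussian. So a direct perturbative comparison of $\mathcal{L}^{\operatorname{MB}}_\alpha$ to $\mathcal{L}^{\operatorname{G}}_1$ will not in general be small, and the monotonicity lemma does not transfer gaps across the two families.

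What the paper actually does is use item~2 of Lemma~\ref{lem:CKG-monotonicity}, which is a monotonicity statement \emph{internal to the $\{\mathcal{L}^{\operatorname{MB}}_\alpha\}_\alpha$ family}: the gap of $\mathcal{L}^{\operatorname{MB}}_\alpha$ is at least that of $\mathcal{L}^{\operatorname{MB}}_{\alpha^*}$ for the specific $\alpha^*=\frac{\sqrt{2+3\Gamma\tau}}{2\beta\Gamma}$ at which $T(\alpha^*)=\beta$ and the jump operators are quasi-local. The gap at that point must then be established for $\mathcal{L}^{\operatorname{MB}}_{\alpha^*}$ itself, which the paper does from scratch (Lemma~\ref{lem:gapHT}) by rerunning the oscillator-norm argument of \cite{rouzé2024optimalquantumalgorithmgibbs} directly on the macroscopic-bath generator. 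This is where the mixing-bath assumptions (fast decay of $g_a$, $\tau_a=\mathcal{O}(\beta)$) are actually used: they guarantee quasi-locality of $\mathcal{L}^{\operatorname{MB}}_{\alpha^*}$ and that its $\beta\to 0$ limit is single-qubit depolarizing noise. The result is a gap $\lambda=\Omega(1)$, not $\Omega(1/\mathrm{poly}(N))$ as you write. A minor omission: the steady state of $\mathcal{L}^{\operatorname{MB}}_\alpha$ is the Gibbs state of the renormalized Hamiltonian $H_S^*$, so the triangle inequality \eqref{eq:trianglebound} also picks up the term $\|e^{-\beta H_S}/Z - e^{-\beta H_S^*}/Z^*\|_1\le \alpha^2\beta\Gamma$, which is subleading but should be tracked.
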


\noindent In Appendix \ref{app:physical}, we show that Theorem \ref{thm:bath} also straightforwardly extends to the related setting of a product (or 1-local) Hamiltonian $H_0=\sum_i h_i$ to which we add a quasilocal perturbation $H=H_0+\lambda V$ of bounded strength $\lambda \le \lambda^*$, a setting previously studied for the generators of \cite{chen2023efficient,ding2024efficientquantumgibbssamplers} in \cite{smid2025rapidmixingquantumgibbs}.

In Theorem \ref{thm:bath} we required the coupling constant to be inversely decaying with system size, which also means the time grows as a polynomial in $N$. This is because we are using a rather strong notion of closeness to the Gibbs state, where the final state is close in a (global) 1-norm. If one only requires closeness in local observables, Lieb-Robinson bounds would guarantee that a coupling constant independent of system size suffices.

\smallskip

\paragraph{Faster thermalization for commuting models}
We also consider the generators from macroscopic baths \cite{scandi2025thermalizationopenmanybodysystems} when the system's Hamiltonian is commuting. Commuting models have highly degenerate frequencies, so that the generator quickly approaches that of Davies' as $\alpha$ becomes small (see Appendix \ref{app:davies}). For these, a recent body of work provides tighter bounds on thermalization times \cite{capel2020modified,bardet2023rapid,kochanowski2024rapidthermalizationdissipativemanybody,stengele2025modified}. In particular, it was recently shown that for CSS codes in low dimensions akin to the $2$ dimensional toric code, their associated Davies dynamics satisfies a faster exponential decay rate known as the modified logarithmic Sobolev inequality \cite{stengele2025modified}. Combined with an  analogous bound to \eqref{eq:closenessKMSintro} but for approximation of the dynamics by Davies, we derive the following result:

\begin{thm}\label{thmCSSmemory}
The Macroscopic Bath dynamics associated to the $2$-dimensional Toric code with coupling constant $\alpha=\widetilde{\mathcal{O}}\left(\frac{\epsilon}{\gamma_{\min}^2N^2 \tau}\right)$ thermalizes to its Gibbs state in time $t=\widetilde{\mathcal{O}}\Big(\frac{\gamma_{\max}^4N^4\tau^2}{\epsilon^2}\Big)$. 
\end{thm}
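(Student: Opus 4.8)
The plan is to combine three ingredients: (i) the modified logarithmic Sobolev inequality (MLSI) for the Davies generator of the 2D Toric code established in \cite{stengele2025modified}, which gives a system-size-independent (or at most polylogarithmically decaying) MLSI constant $\lambda_{\mathrm{MLSI}}=\widetilde\Omega(1)$; (ii) an approximation bound, analogous to \eqref{eq:closenessKMSintro} but for the Davies generator, stating that for commuting $H_S$ the macroscopic-bath Lindbladian $\mathcal{L}^{\operatorname{MB}}_\alpha$ converges to the Davies generator $\mathcal{L}^{\operatorname{D}}$ as $\alpha\to 0$, with the true reduced dynamics $\rho_S(t)$ tracked by $e^{t\alpha^2\mathcal{L}^{\operatorname{D}}}\rho_S(0)$ up to an error of the schematic form $\mathcal{O}\big(\alpha^3\Gamma^2 t(\tau+\beta e^{(\alpha\Gamma\beta)^2})\big)$ plus a Davies-vs-$\mathcal{L}^{\operatorname{MB}}_\alpha$ discrepancy controlled by $\alpha$ and the spectral resolution of the commuting model; and (iii) the fact that on the extensively normalized scale, $\Gamma = \sum_a\Gamma_0^a = \Theta(\gamma_{\max} N)$ while the relevant norms of the jump operators are $O(1)$ per site.

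First I would invoke (ii) to write $\|\rho_S(t)-e^{t\alpha^2\mathcal{L}^{\operatorname{D}}}\rho_S(0)\|_1 \le \delta_{\mathrm{approx}}(\alpha,t)$ where, after substituting $\Gamma=\Theta(\gamma_{\max}N)$ and $\tau=\mathcal{O}(\beta)=\mathcal{O}(1)$ at the fixed temperature under consideration, $\delta_{\mathrm{approx}}=\widetilde{\mathcal O}\big(\alpha^3\gamma_{\max}^2 N^2 t\,\tau\big)$ for $\alpha$ small enough that the exponential factor $e^{(\alpha\Gamma\beta)^2}=e^{\widetilde O((\alpha\gamma_{\max}N)^2)}$ is $O(1)$ — which holds precisely in the regime $\alpha=\widetilde{\mathcal O}(1/(\gamma_{\max}N))$, consistent with the claimed scaling. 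Next I would use (i): the MLSI for the Davies generator $\mathcal{L}^{\operatorname{D}}$ gives exponential entropy decay, hence $\|e^{s\mathcal{L}^{\operatorname{D}}}\rho_S(0)-\rho_\beta^S\|_1 \le \sqrt{2\log(1/\rho_{\min})}\,e^{-s\,\lambda_{\mathrm{MLSI}}/2}$ (Pinsker plus the standard relative-entropy contraction), and since $\mathcal{L}^{\operatorname{D}}$ appears in the dynamics rescaled by $\alpha^2$ the effective rate is $s=\alpha^2 t$. So to make this term $\le \epsilon/2$ it suffices to take $\alpha^2 t = \widetilde{\mathcal O}(\log(1/\epsilon))\cdot\lambda_{\mathrm{MLSI}}^{-1} = \widetilde{\mathcal O}(1)$, where the $\log(1/\rho_{\min}) = \widetilde{O}(N)$ term is absorbed into the $\widetilde{\mathcal O}$.

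Finally I would balance the two error contributions. Setting $\alpha^2 t = \widetilde\Theta(1)$ from the mixing requirement and demanding $\delta_{\mathrm{approx}}=\widetilde{\mathcal O}(\alpha^3\gamma_{\max}^2 N^2 t\,\tau) = \widetilde{\mathcal O}(\alpha\,\gamma_{\max}^2 N^2\tau)\le \epsilon$ forces $\alpha = \widetilde{\mathcal O}\big(\epsilon/(\gamma_{\max}^2 N^2\tau)\big)$, which is the stated coupling; then $t = \widetilde\Theta(\alpha^{-2}) = \widetilde{\mathcal O}\big(\gamma_{\max}^4 N^4\tau^2/\epsilon^2\big)$, matching the claimed thermalization time. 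I expect the main obstacle to be step (ii): one must verify that the Davies-approximation analogue of Lemma \ref{le:approxKMS} genuinely holds with the stated error for the commuting Toric code — in particular that the near-degeneracy of the frequencies makes $\mathcal{L}^{\operatorname{MB}}_\alpha$ approach $\mathcal{L}^{\operatorname{D}}$ without an extra system-size penalty beyond what is already captured by $\Gamma$, and that the MLSI constant from \cite{stengele2025modified} is stated for exactly the normalization and jump-operator conventions used here (otherwise one picks up additional $\mathrm{poly}(N)$ factors that would have to be tracked through the final balancing). The MLSI input (i) and the Pinsker/entropy-decay bookkeeping in the last step are essentially routine given the cited results.
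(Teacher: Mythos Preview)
Your proposal is correct and follows essentially the same route as the paper: the proof in Appendix~\ref{app:davies} establishes the Davies approximation bound (your step (ii)) as Lemma~\ref{lem.Daviesapprox}, giving $\|\rho_S(t)-e^{t\alpha^2\mathcal{L}^{\operatorname{D}}}\rho_S(0)\|_1=\mathcal{O}(\alpha^3\Gamma^2 t\,\tau\,(1+d_k^6))$ directly for commuting $k$-local Hamiltonians, then combines it with the MLSI from \cite{stengele2025modified} and balances exactly as you do. The only cosmetic difference is that the paper's Davies-approximation lemma does not carry the $\beta e^{(\alpha\Gamma\beta)^2}$ factor from \eqref{eq:closenessKMSintro}; it goes through the coarse-grained Lindbladian and uses the system-size-independent minimum Bohr-frequency gap $\nu_-^{\min}$ of the commuting model to kill the off-diagonal rates, so your concern about an extra system-size penalty in step (ii) is resolved precisely by the commutativity.
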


\noindent The same conclusion extends to a broader class of CSS codes \cite{stengele2025modified}. From a physical perspective, this result provides the first complete dynamical justification of the long-standing claim that CSS codes in low lattice dimensions cannot serve as good quantum memories \cite{yoshida2011feasibility}. A formal proof of Theorem~\ref{thmCSSmemory} is given in Appendix~\ref{app:davies}.


\section{Methods}\label{sec:methods} 
\noindent Our results rely on the following lemma, which shows that the spectral gaps of certain single parameter families of Lindbladians are monotonic. 
For the Lindbladian $\mathcal{L}_\kappa^{\operatorname{G}}$, the parameter corresponds to the width $\kappa$ of the function $f_\kappa$ \eqref{eq:jump-operators-kappa}, while for the Lindbladian in \cite{scandi2025thermalizationopenmanybodysystems} it is the coupling strength $\alpha$. Both of these quantities determine the locality and the precision of the jump operators, with both increasing monotonically as $\kappa$ increases and $\alpha$ decreases respectively.
We recall that the spectral gap $\lambda_{\operatorname{Gap}}(\cL)$ of a KMS symmetric Lindbladian is defined as its smallest non-zero eigenvalue in absolute value. It controls the speed at which the evolution generated by $\cL$ converges to its Gibbs state:
\begin{align*}
\big\|e^{t\mathcal{L}}(\sigma)-\rho_{\beta}\big\|_1\le2 e^{-\lambda_{\operatorname{Gap}}(\cL) t}\|\rho_\beta^{-1}\|. 
\end{align*}

\begin{lem}\label{lem:CKG-monotonicity}
The following gap monotonicity statements hold true for the families $\{\cL^{\operatorname{G}}_\kappa\}_{\kappa\ge 1}$ and $\{\cL^{\operatorname{MB}}_\alpha\}_{\alpha\ge 0}$:
\begin{enumerate}
    \item\label{item:Gaussian-Paramerization} For any two $1\!\le\! \kappa\!\le \!\kappa'\!<\!\infty$, $\lambda_{\operatorname{Gap}}(\cL^{\operatorname{G}}_\kappa)\le \lambda_{\operatorname{Gap}}(\cL^{\operatorname{G}}_{\kappa'})$; 
       
    \item\label{item:Open-System-Paramerization} For any two $0<\alpha\le\alpha'$, $\lambda_{\operatorname{Gap}}(\cL^{\operatorname{MB}}_{\alpha'})\le \lambda_{\operatorname{Gap}}(\cL^{\operatorname{MB}}_{\alpha})$.
\end{enumerate}
\end{lem}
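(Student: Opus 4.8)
The key structural fact is that both families of Lindbladians, up to the overall normalization factor, can be written in the form $\cL = \sum_a \cL_a$ where each $\cL_a$ is itself a KMS-symmetric Lindbladian built from the jump operators $\widehat{A}^a(\omega)$ associated with a single Pauli/Majorana label $a$ and a single frequency $\omega$ (integrated against the appropriate transition-rate weight). Since the spectral gap of a KMS-symmetric Lindbladian equals the spectral gap of the associated Dirichlet form, i.e. $\lambda_{\operatorname{Gap}}(\cL) = \inf\{ \mathcal{E}_{\cL}(X,X) : \langle X,X\rangle_{\rho_\beta}=1,\ \langle \one,X\rangle_{\rho_\beta}=0\}$ with $\mathcal{E}_{\cL}(X,X) = -\langle X, \cL^\dagger(X)\rangle_{\rho_\beta}$, and Dirichlet forms add, it suffices to show the \emph{pointwise} (operator) monotonicity $\mathcal{E}_{\cL^{\operatorname{G}}_\kappa}(X,X) \le \mathcal{E}_{\cL^{\operatorname{G}}_{\kappa'}}(X,X)$ for all $X$ when $\kappa \le \kappa'$, and likewise $\mathcal{E}_{\cL^{\operatorname{MB}}_{\alpha'}}(X,X) \le \mathcal{E}_{\cL^{\operatorname{MB}}_{\alpha}}(X,X)$ for $\alpha \le \alpha'$. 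Taking the infimum over the admissible $X$ then yields the gap inequalities. So the whole problem reduces to comparing quadratic forms.

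\textbf{Reduction to a scalar kernel comparison.} The next step is to make the $\kappa$- (resp. $\alpha$-) dependence explicit. Writing the jump operator \eqref{eq:jump-operators-kappa} in the energy eigenbasis, $\widehat{A}^a_\kappa(\omega)$ decomposes over Bohr frequencies $\nu = E_i - E_j$ of $H_S$ with coefficients $\widehat{f_\kappa}(\omega - \nu)$, where $\widehat{f_\kappa}$ is the Fourier transform of the Gaussian $f_\kappa$. After integrating over $\omega$ against the Metropolis-type weight used in $\cL^{\operatorname{G}}_\kappa$, the Dirichlet form becomes a double sum over pairs of Bohr frequencies $(\nu,\nu')$ of a fixed $X$-dependent (positive-semidefinite) bilinear expression weighted by a scalar kernel $K_\kappa(\nu,\nu')$ that is obtained by convolving the Gaussian profile against itself and against the rate function. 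The claim then follows if $K_\kappa(\nu,\nu') \le K_{\kappa'}(\nu,\nu')$ for all $\nu,\nu'$ and all $1\le\kappa\le\kappa'$, and the resulting $X$-dependent matrix is PSD so that the inequality survives summation. For the macroscopic-bath family the analogous kernel is built from the bath correlation function $C_{aa}$ and its Fourier transform, and one shows the ordering is reversed in $\alpha$ because decreasing $\alpha$ (while the dynamics is normalized by $\alpha^2$) sharpens the effective filter. Concretely one expects the monotone quantity to be something like the ratio of an energy-resolved weight to its value at the steady state, and the needed inequality to reduce to a one-variable statement about Gaussians of different widths (or about the scaled bath kernel), provable by an elementary calculus/convexity argument — e.g. that $\sqrt{\kappa\beta}\,\widehat{f_\kappa}$ is, after the relevant rescaling, a monotone family, or that a certain Gaussian integral is monotone in its variance parameter on the relevant range $\kappa\ge1$.

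\textbf{Main obstacle.} The delicate point is the last one: verifying that the scalar kernel is genuinely monotone in the parameter \emph{on the stated range} (note the hypothesis $\kappa \ge 1$, not $\kappa>0$ — so monotonicity presumably fails for small $\kappa$ and the restriction is essential), and simultaneously that the $X$-dependent coefficient matrix multiplying it is positive semidefinite so that a pointwise kernel inequality really does imply a Dirichlet-form inequality term by term. This PSD property should follow from the fact that each $\cL_a$ is a completed-positive-detailed-balance generator, so its Dirichlet form has the standard $\sum_\omega \gamma(\omega)\,\langle\!\langle \cdot\rangle\!\rangle$ representation with nonnegative weights, but one must be careful that the \emph{difference} $\cL^{\operatorname{G}}_{\kappa'} - \cL^{\operatorname{G}}_\kappa$ — which need not itself be a Lindbladian — still has a nonnegative Dirichlet form; this is where the explicit kernel comparison, rather than an abstract completely-positive-ordering argument, is needed. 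I would also need to handle the Lamb-shift/Hamiltonian piece: since $\cL^{\operatorname{RI}}_\kappa$ contains $-i[H^{\operatorname{LS}},\cdot]$, I would either absorb it (it is anti-self-adjoint and does not contribute to the real Dirichlet form that controls the gap for KMS-symmetric generators) or note that the statement of Lemma~\ref{lem:CKG-monotonicity} is about $\cL^{\operatorname{G}}_\kappa$ and $\cL^{\operatorname{MB}}_\alpha$ themselves, which by construction are KMS-symmetric, so the coherent term is irrelevant to the argument.
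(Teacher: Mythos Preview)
Your setup is correct: the variational characterization of the gap via the Dirichlet form, the decomposition over jump labels and Bohr frequencies, and the irrelevance of the Lamb-shift term are all as in the paper. However, the central step of your plan --- proving the \emph{pointwise} Dirichlet-form inequality $\mathcal{E}_{\cL^{\operatorname{G}}_\kappa}(X,X) \le \mathcal{E}_{\cL^{\operatorname{G}}_{\kappa'}}(X,X)$ for every fixed $X$ --- does not hold, and your kernel comparison cannot be made to work. In the frequency representation the only $\kappa$-dependent factor is $C(\kappa)\,e^{-\kappa^2(\beta\nu_1-\beta\nu_2)^2/8}$ with $C(\kappa)=\sqrt{(2-\kappa^{-2})/2}$: as $\kappa$ grows, the off-diagonal entries ($\nu_1\ne\nu_2$) \emph{decrease} while the diagonal entries change only through the mild prefactor, so $K_{\kappa'}-K_\kappa$ is neither entrywise nonnegative nor positive semidefinite as a kernel (a difference of two Gaussians with nearly equal amplitudes and different widths changes sign). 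Since the cross terms $\langle[A^a_{\nu_1},X],[A^a_{\nu_2},X]\rangle_{\rho_\beta}$ can have either sign, no blanket PSD argument rescues the summation. In short, $\mathcal{E}_{\kappa'}(X)-\mathcal{E}_\kappa(X)$ is genuinely sign-indefinite as a function of $X$.

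The paper's proof avoids this obstruction by a different mechanism. Passing to the time domain, the $\kappa$-dependent Gaussian in $(\nu_1-\nu_2)$ becomes a Gaussian probability density in a time variable, and one obtains the identity (with $\delta=1/\kappa$)
\[
\mathcal{E}_{\delta}(X)=\int_{\mathbb R} g_{\delta,\delta'}(s)\,\mathcal{E}_{\delta'}\!\big(X(-\beta s)\big)\,ds,\qquad X(t):=e^{iH_St}Xe^{-iH_St},
\]
where $g_{\delta,\delta'}\ge 0$ is a normalized Gaussian arising from the semigroup property of Gaussians under convolution. This is \emph{not} an inequality between $\mathcal{E}_\delta(X)$ and $\mathcal{E}_{\delta'}(X)$; it expresses the former as an average of the latter over Heisenberg-evolved arguments. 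The decisive observation is that the \emph{denominator} of the variational ratio, $\|X-\langle\mathbb I,X\rangle_{\rho_\beta}\mathbb I\|_{\rho_\beta}^2$, is invariant under $X\mapsto X(-\beta s)$, since Heisenberg evolution is a $\langle\cdot,\cdot\rangle_{\rho_\beta}$-isometry fixing $\mathbb I$. Thus the integrand ratio is uniformly $\ge\lambda_{\operatorname{Gap}}(\cL_{\delta'})$, and integrating against $g_{\delta,\delta'}$ yields the gap inequality directly. The ``one-variable Gaussian fact'' you anticipated is there, but it is the convolution semigroup property $f_\delta=g_{\delta,\delta'}\star f_{\delta'}$, used together with unitary invariance of the KMS norm --- not a monotone pointwise kernel bound.
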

\noindent Lemma \ref{lem:CKG-monotonicity} thus allows us to connect the gaps of two complementary regimes: that in which the dynamics are accurately approximated by Lindbladians, and that in which spectral gap estimates are within reach. 
It is proven in Appendix \ref{app:KMSsymm} and relies on the ability to write the Dirichlet forms of the associated Lindbladians as convolutions. An identical result holds for the closely related \emph{modified logarithmic Sobolev inequality} (MLSI) constant, which controls the rate at which the entropy of the evolved state converges to that of the Gibbs state $\rho_\beta^S$
\cite{Kastoryano_2013}; see Appendix \ref{app:KMSsymm} for more details. 

\smallskip

\paragraph{Repeated interaction Gibbs sampling.} We now show how Lemma~\ref{lem:CKG-monotonicity} can be used to prove the efficient preparation results presented in Section~\ref{sec:main-results} for Gibbs sampling based on repeated interactions. In order to ensure that the fixed point of the channel \eqref{eq:channel-near-term-gibbs} is $\epsilon$ close to the Gibbs state $\rho_\beta^S$ the parameter $\kappa$ typically needs to scale as a $\text{poly}(N,\epsilon^{-1})$, which in turn leads to jump operators \eqref{eq:jump-operators-kappa} with extensive support. Lemma~\ref{lem:CKG-monotonicity} allows us to bootstrap existing spectral gap lower bound proofs that require quasi-local Lindbladians to the non-local ones considered here. We formalize this in the following lemma, which provides an upper bound on the total Hamiltonian simulation time $t_{\text{total}}(\epsilon)$ needed to prepare a state $\epsilon$-close to $\rho_\beta^S$ whenever a lower bound on the gap for the Lindbladian $\Lm^{\operatorname{G}}_1$ can be shown. 
\begin{lem}\label{lem:gap-implies-efficient-prep-epsilon}
    Let $\left\{\Lm^{\operatorname{G}}_\kappa\right\}_{\kappa\ge 1}$ be a family of parametrized Lindbladians as defined in item \ref{item:Gaussian-Paramerization} of Lemma~\ref{lem:CKG-monotonicity} with fixed point $\rho_\beta^S$, and let $\lambda$ be a lower bound on the spectral gap of $\Lm^{\operatorname{G}}_1$. For the set $\cA$ from which the coupling operators $A^a$ are sampled, define $\|\cA\|_{\operatorname{G}}= \|\sum_aA ^{a\dagger}A^a\|$. Then for any $\epsilon>0$ the repeated interaction Gibbs sampling algorithm prepares a state $\sigma$ such that $\|\sigma-\rho^S_\beta\|_1\leq \epsilon$ using total Hamiltonian simulation time:
    \begin{align*}
        t_{\textup{total}}(\epsilon) = \widetilde{\cO}\left( \frac{\beta\|\cA\|_{\operatorname{G}}^5}{\epsilon^4\lambda^5}\log\left( \|\rho_\beta^{S\;-1}\|\right)^5\right),
    \end{align*}
     where the $\widetilde{\cO}$ notation suppresses subleading polylogarithmic factors in $\beta, \lambda, \epsilon$,$\|\cA\|_{\operatorname{G}}$, and $\log\left( \|\rho_\beta^{S\;-1}\|\right)$.
\end{lem}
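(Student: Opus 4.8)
The plan is to chain together three ingredients: (i) the convergence estimate for a single Lindbladian in terms of its spectral gap, stated just before Lemma~\ref{lem:CKG-monotonicity}; (ii) the gap monotonicity of Lemma~\ref{lem:CKG-monotonicity}, item~\ref{item:Gaussian-Paramerization}, which transfers the lower bound $\lambda$ on $\lambda_{\operatorname{Gap}}(\Lm^{\operatorname{G}}_1)$ to $\lambda_{\operatorname{Gap}}(\Lm^{\operatorname{G}}_\kappa)$ for all $\kappa\ge 1$; and (iii) the approximation result of \cite{ding2025endtoendefficientquantumthermal}, which says that the channel $\Phi_{(\alpha,\kappa)}$ is close to one step of the continuous evolution $e^{\tau\Lm^{\operatorname{RI}}_\kappa}$ for a suitable effective time increment, and that the fixed point of $\Phi_{(\alpha,\kappa)}$ is $\epsilon/2$-close to $\rho_\beta^S$ once $\kappa=\widetilde{\cO}(\mathrm{poly}(N,\epsilon^{-1}))$ is large enough. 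The first step is to fix $\kappa$ at the smallest value guaranteeing this fixed-point accuracy; from \cite{ding2025endtoendefficientquantumthermal} this is $\kappa = \widetilde\cO(\beta^{?}\|\cA\|_{\operatorname{G}}^{?}\epsilon^{-1}\cdots)$ with some explicit polynomial dependence that I would read off and track.

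Next I would convert the gap of $\Lm^{\operatorname{RI}}_\kappa = -i[H^{\operatorname{LS}},\,\cdot\,] + J\,\Lm^{\operatorname{G}}_\kappa$ into a mixing-time bound. Since the Hamiltonian part is anti-self-adjoint with respect to the KMS inner product while $J\Lm^{\operatorname{G}}_\kappa$ is self-adjoint and negative semidefinite, the Dirichlet form of $\Lm^{\operatorname{RI}}_\kappa$ equals $J$ times that of $\Lm^{\operatorname{G}}_\kappa$, so $\lambda_{\operatorname{Gap}}(\Lm^{\operatorname{RI}}_\kappa) \ge J\,\lambda_{\operatorname{Gap}}(\Lm^{\operatorname{G}}_\kappa) \ge J\lambda$, using monotonicity. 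With $J=\Theta(\beta/N)$ (absorbed, together with $\|\cA\|_{\operatorname{G}}$, into the normalization that relates $\Lm^{\operatorname{G}}$ to the extensively normalized generator), the convergence bound gives that after continuous time $t_{\mathrm{cont}} = \cO\!\big(\lambda^{-1}\log(\|\rho_\beta^{S\,-1}\|/\epsilon)\big)$ (up to the $J$ and normalization factors) the evolved state is $\epsilon/4$-close to $\rho_\beta^S$. Translating to discrete steps of $\Phi_{(\alpha,\kappa)}$: each application advances effective time by $\Theta(\alpha^2\kappa\beta)$ or similar, and the Dyson-truncation error per step (controlled by choosing $\alpha$ small, which is where the extra powers of $N/\epsilon$ enter) must be summed over $t_{\textup{therm}}(\epsilon)$ steps and kept below $\epsilon/4$; this pins down both $\alpha$ and $t_{\textup{therm}}(\epsilon)$. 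Finally $t_{\textup{total}}(\epsilon)=2T\cdot t_{\textup{therm}}(\epsilon)$ with $T=\Theta(\kappa\beta\sqrt{\log(\cdots)})$ (the Gaussian width times a log for the tails), and multiplying out the $\kappa$, $\alpha$, $\lambda$, $\|\cA\|_{\operatorname{G}}$ and $\log\|\rho_\beta^{S\,-1}\|$ dependences yields the stated $\widetilde\cO\big(\beta\|\cA\|_{\operatorname{G}}^5\epsilon^{-4}\lambda^{-5}\log(\|\rho_\beta^{S\,-1}\|)^5\big)$.

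The main obstacle is the bookkeeping in the last step: correctly accounting for how the error budget $\epsilon$ is split between (a) the fixed-point bias of $\Phi_{(\alpha,\kappa)}$ versus $\rho_\beta^S$, (b) the Lindbladian-approximation error of $\Phi_{(\alpha,\kappa)}$ versus $e^{\tau\Lm^{\operatorname{RI}}_\kappa}$ accumulated over all steps, and (c) the residual gap-controlled distance of the continuous evolution to its stationary state — and then optimizing the free parameters $\kappa$ and $\alpha$ against each other so that all three powers of $\epsilon^{-1}$ combine into $\epsilon^{-4}$ rather than something worse. In particular, the factor $\|\cA\|_{\operatorname{G}}^5$ and the $\lambda^{-5}$ must be traced through: $\|\cA\|_{\operatorname{G}}$ enters both the normalization $J$ and the per-step Dyson error, and $\lambda^{-1}$ appears once in the continuous mixing time but is compounded because $\kappa$ itself (hence $T$) and $\alpha^{-1}$ (hence the step count) depend on the accuracy one demands, which in turn feeds back through the gap. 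I would organize this as: first prove the clean statement "$\lambda_{\operatorname{Gap}}(\Lm^{\operatorname{RI}}_\kappa)\ge J\lambda$ for all $\kappa\ge1$", then invoke the black-box error bound from \cite{ding2025endtoendefficientquantumthermal} as a lemma with explicit parameter dependence, and only then do the arithmetic, so that the delicate part is isolated and auditable. The rest — the reduction to the gap of $\Lm^{\operatorname{G}}_1$ and the appeal to monotonicity — is essentially immediate given Lemma~\ref{lem:CKG-monotonicity}.
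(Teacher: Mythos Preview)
Your overall architecture is right: use gap monotonicity to transfer the bound from $\Lm^{\operatorname{G}}_1$ to $\Lm^{\operatorname{G}}_\kappa$, then split the error budget between fixed-point bias, per-step Dyson error, and residual mixing distance, and finally optimize $\kappa,T,\alpha$. The paper does exactly this. However, your handling of the Lamb shift is a genuine gap.

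You assert that ``the Hamiltonian part is anti-self-adjoint with respect to the KMS inner product'' and conclude that the Dirichlet form of $\Lm^{\operatorname{RI}}_\kappa$ equals $J$ times that of $\Lm^{\operatorname{G}}_\kappa$. This is false: $-i[H,\,\cdot\,]$ is KMS anti-self-adjoint only when $[H,\rho_\beta^S]=0$, and neither the physical Lamb shift $H^{\operatorname{LS}}$ nor the operator $C$ appearing after the reduction to the KMS form commutes with $\rho_\beta^S$ in general. Consequently $\Lm^{\operatorname{RI}}_\kappa$ is \emph{not} KMS symmetric, its Dirichlet form does not split as you claim, and the variational gap formula does not directly yield a mixing-time bound. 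The paper addresses this by first passing (via Theorem~\ref{thm:L-close-to-KMS}) to $\widehat{\Lm}_\kappa=-i[C,\,\cdot\,]+\tfrac{\beta}{\|\cA\|_{\operatorname{G}}\sqrt{2\pi(2-1/\kappa^2)}}\Lm_\kappa^{\operatorname{G}}$ and then invoking a perturbation lemma (Lemma~\ref{lem:lamb-shift-mixing-time-error-bound}, adapted from Corollary~26 of \cite{ding2025endtoendefficientquantumthermal}) whose hypothesis is precisely a bound on the KMS-noncommutativity
\[
\big\|\rho_\beta^{S\,-1/4}C\,\rho_\beta^{S\,1/4}-\rho_\beta^{S\,1/4}C\,\rho_\beta^{S\,-1/4}\big\|\le \xi,
\]
and whose conclusion controls mixing via $\lambda_{\operatorname{Gap}}-\xi$ rather than $\lambda_{\operatorname{Gap}}$ alone. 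Theorem~\ref{thm:L-close-to-KMS} supplies $\xi=\cO(\beta/\kappa)$, and one must then take $\kappa\ge\kappa^*=\Omega(\|\cA\|_{\operatorname{G}}/\lambda)$ so that $\xi$ is at most half the (rescaled) gap $\tfrac{\beta\lambda}{\sqrt{2\pi}\|\cA\|_{\operatorname{G}}}$. This constraint on $\kappa$ is an additional source of parameter dependence that your outline does not account for, and without it the step ``$\lambda_{\operatorname{Gap}}(\Lm^{\operatorname{RI}}_\kappa)\ge J\lambda$'' is simply not justified.
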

\noindent The proof, provided in Appendix \ref{app:RIM}, requires handling the Lamb Shift term $-i[H^{\operatorname{LS}},\bigcdot\; ]$ as well as ensuring that the parameters $\alpha,\kappa$ and $T$ can be picked such that the spectral gap $\lambda_{\operatorname{Gap}}(\cL^{\operatorname{G}}_\kappa)$ of the dynamics generated by $\Lm^{\operatorname{G}}_\kappa$ and the thermalization index $t_{\textup{therm}}(\epsilon)$ of $\Phi_{(\alpha,\kappa)}$ are relatable, while the fixed point error of $\Phi_{(\alpha,\kappa)}$ is simultaneously controllable. Since the Hamiltonian \eqref{eq:system-bath-H} used in this algorithm satisfies $\max_t\|H(t)\|  = \Theta(\|H_S\|)$, its elementary gate cost will be increased by a factor of $\cO(\|H_S\|)$ when a concrete Hamiltonian simulation subroutine is chosen, potentially along with extra factors of $\epsilon^{-1}$ depending on the choice of subroutine. We again note that the high powers of $\lambda$, $\epsilon$, $\|\cA\|_{\operatorname{G}}$, and $\log( \|\rho_\beta^{S\;-1}\|)$ are likely not optimal and may be improved with the results from \cite{wang2025lindbladdynamicsrigorousguarantees}.

The results of Theorem~\ref{thm:repeated-interaction-efficient-prep} then follow simply from Lemma~\ref{lem:gap-implies-efficient-prep-epsilon} together with the existence of a system size independent lower bound on the spectral gap of the relevant $\cL^{\operatorname{G}}_1$ Lindbladians, and by noting that both $\log (\|\rho_{\beta}^{S\;-1}\|)$ and $\|\cA\|_{\operatorname{G}}$ are $\cO(N)$ for the considered models \cite{rouzé2024efficientthermalizationuniversalquantum,tong2025fastmixingweaklyinteracting}, and \cite{bergamaschi2025quantumspinchainsthermalize}. Formal statements and more details are provided in Appendix \ref{app:RIM}.

\smallskip

\paragraph{Efficient physical thermalization.}
Through a similar argument, Lemma \ref{lem:CKG-monotonicity} also allows us to prove Theorem \ref{thm:bath}. 
 Given \eqref{eq:closenessKMSintro}, the system's evolution $\rho_S(t)$ is sufficiently close to that of a KMS-Lindbladian evolution only when $\alpha$ is very small, when the jump operators have a very large support.  This is an issue for existing spectral gap estimates, since we can only prove them for quasi-local Lindbladians. Here this means that $\mathcal{L}^{\operatorname{MB}}_\alpha $ must have a larger $\alpha$, in tension with the bound in \eqref{eq:closenessKMSintro}. Using Lemma \ref{lem:CKG-monotonicity}, we instead obtain. 
 \begin{lemma}\label{lem:bathgap}
      Let $\left\{\Lm^{\operatorname{MB}}_\alpha\right\}_{\alpha \ge 0}$ be a family of parametrized Lindbladians as defined in item \ref{item:Open-System-Paramerization}  of Lemma~\ref{lem:CKG-monotonicity} with fixed point $\rho_\beta^S$, and let $\lambda$ be a lower bound on the spectral gap of $\Lm^{\operatorname{MB}}_{\alpha^*}$ with $\alpha^*= \frac{\sqrt{2+3 \Gamma \tau}}{2 \beta \Gamma}$. Then for any $\epsilon>0$ the system-bath interaction prepares a state $\sigma$ such that $\|\sigma-\rho^S_\beta\|_1\leq \epsilon$ after time
      \begin{equation}
         \widetilde{ \mathcal{O}} \left( \frac{\Gamma^4 (\tau+\beta)^2\log\left( \|\rho_\beta^{S\;-1}\|\right)^3 }{\lambda^3 \epsilon^3} \right)
      \end{equation}
 \end{lemma}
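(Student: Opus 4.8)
The goal is to bound the time at which the system-bath evolution $\rho_S(t)$ is $\epsilon$-close to $\rho_\beta^S$, given a spectral gap lower bound $\lambda$ at the ``quasi-local'' coupling value $\alpha^*$. The strategy mirrors the repeated-interaction case (Lemma~\ref{lem:gap-implies-efficient-prep-epsilon}), balancing two competing error sources: the Lindbladian-approximation error from \eqref{eq:closenessKMSintro}, which forces $\alpha$ to be \emph{small}, versus the need to keep $\alpha$ at least $\alpha^*$ so that Lemma~\ref{lem:CKG-monotonicity}~(item~\ref{item:Open-System-Paramerization}) transfers the gap bound. The key observation is that the map $\alpha\mapsto\lambda_{\operatorname{Gap}}(\cL^{\operatorname{MB}}_\alpha)$ is \emph{monotone decreasing}, so $\lambda_{\operatorname{Gap}}(\cL^{\operatorname{MB}}_\alpha)\ge\lambda_{\operatorname{Gap}}(\cL^{\operatorname{MB}}_{\alpha^*})\ge\lambda$ for every $\alpha\le\alpha^*$. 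This is exactly the opposite tension from the repeated-interaction case but resolves identically: we get to use \emph{any} $\alpha\le\alpha^*$ and still retain the gap.

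**Key steps.** First, fix a target $\alpha$ (to be chosen) with $\alpha\le\alpha^*$, and set $\mathcal{L}:=\cL^{\operatorname{MB}}_\alpha$. By the convergence bound displayed just before Lemma~\ref{lem:CKG-monotonicity} and the monotonicity,
\begin{align*}
\big\|e^{t\alpha^2\mathcal{L}}(\rho_S(0))-\rho_\beta^S\big\|_1\le 2e^{-\lambda\alpha^2 t}\|\rho_\beta^{S\,-1}\|,
\end{align*}
so to make the \emph{Lindbladian} evolution $\epsilon/2$-close it suffices to run for time $t_1=\tfrac{1}{\lambda\alpha^2}\log(4\|\rho_\beta^{S\,-1}\|/\epsilon)$. (Note the factor $\alpha^2$ multiplying the generator in \eqref{eq:closenessKMSintro}, which is the origin of the polynomial-in-$N$ time: small $\alpha$ slows the physical dynamics.) Second, control the approximation error: from \eqref{eq:closenessKMSintro} the deviation between $\rho_S(t)$ and $e^{t\alpha^2\mathcal{L}}\rho_S(0)$ is $\mathcal{O}\!\big(\alpha^3\Gamma^2 t(\tau+\beta e^{(\alpha\Gamma\beta)^2})\big)$; with $\alpha\le\alpha^*$ the physical assumption $\tau=\mathcal{O}(\beta)$ and $\alpha^*\Gamma\beta=\tfrac12\sqrt{2+3\Gamma\tau}$ keep the exponential $e^{(\alpha\Gamma\beta)^2}=\mathcal{O}(1)$, so this reduces to $\mathcal{O}(\alpha^3\Gamma^2(\tau+\beta)t)$. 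Third, substitute $t=t_1$: the approximation error at that time is $\mathcal{O}\!\big(\alpha\Gamma^2(\tau+\beta)\lambda^{-1}\log(\|\rho_\beta^{S\,-1}\|/\epsilon)\big)$. Fourth, choose $\alpha$ so that this is $\le\epsilon/2$, i.e.
\begin{align*}
\alpha=\widetilde{\Theta}\!\left(\frac{\lambda\,\epsilon}{\Gamma^2(\tau+\beta)\log(\|\rho_\beta^{S\,-1}\|)}\right),
\end{align*}
checking en route that this choice indeed satisfies $\alpha\le\alpha^*$ (true for $\epsilon$ small, or absorb into constants). Finally, plug this $\alpha$ back into $t_1=\tfrac{1}{\lambda\alpha^2}\log(\cdots)$ to read off
\begin{align*}
t=\frac{1}{\lambda\alpha^2}\log\!\frac{4\|\rho_\beta^{S\,-1}\|}{\epsilon}=\widetilde{\mathcal{O}}\!\left(\frac{\Gamma^4(\tau+\beta)^2\log(\|\rho_\beta^{S\,-1}\|)^3}{\lambda^3\epsilon^3}\right),
\end{align*}
which is the claimed bound. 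Theorem~\ref{thm:bath} then follows by inserting the system-size dependence: $\Gamma=\mathcal{O}(N)$, $\tau=\mathcal{O}(\beta)=\mathcal{O}(1)$, $\log\|\rho_\beta^{S\,-1}\|=\mathcal{O}(N)$, and a system-size-independent gap lower bound $\lambda=\Omega(1)$ for $\cL^{\operatorname{MB}}_{\alpha^*}$ coming from the quasi-locality of its jump operators at $\alpha=\alpha^*$ together with the high-temperature results of \cite{rouzé2024optimalquantumalgorithmgibbs}.

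**Main obstacle.** The delicate point is \emph{not} the optimization — that is bookkeeping — but justifying that $\cL^{\operatorname{MB}}_{\alpha^*}$ is genuinely quasi-local so that an existing spectral-gap theorem applies to give the needed $\lambda$. One must verify that at $\alpha=\alpha^*=\tfrac{\sqrt{2+3\Gamma\tau}}{2\beta\Gamma}$ the jump operators $\widehat{A}^a_{\alpha^*}(\omega)$ have support radius $\mathcal{O}(1)$ (independent of $N$), which requires the bath correlation functions to decay on an $N$-independent timescale — precisely the first physical assumption — and that the truncation error incurred by localizing them is controlled. A secondary subtlety is that $\alpha^*$ itself depends on $\Gamma$, hence on $N$ (since $\Gamma=\mathcal{O}(N)$), so ``$\alpha^*=\mathcal{O}(1)$'' is false; what is actually true and needed is that the \emph{locality} of $\cL^{\operatorname{MB}}_{\alpha^*}$ is $N$-independent while $\alpha^*\to 0$, which is consistent because the relevant combination $\alpha^*\Gamma$ (entering the Dyson-truncation radius) stays bounded. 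Threading this dependence correctly through the locality estimate, and confirming that the chosen operational $\alpha\le\alpha^*$ throughout, is where care is required; the rest is the routine convolution/balancing argument already used for Lemma~\ref{lem:gap-implies-efficient-prep-epsilon}.
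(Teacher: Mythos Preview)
Your proposal is correct and follows essentially the same route as the paper: transfer the gap from $\alpha^*$ to arbitrarily small $\alpha$ via the monotonicity Lemma~\ref{lem:CKG-monotonicity}, then balance the exponential convergence term $2\|\rho_\beta^{S\,-1}\|e^{-\lambda\alpha^2 t}$ against the Lindbladian-approximation error $\mathcal{O}(\alpha^3\Gamma^2 t(\tau+\beta))$ by choosing $\alpha=\widetilde{\Theta}\big(\epsilon\lambda/(\Gamma^2(\tau+\beta)\log\|\rho_\beta^{S\,-1}\|)\big)$ and reading off $t$. The paper's version (Appendix~\ref{app:physical}, equation~\eqref{eq:trianglebound}) additionally carries a third term $\alpha^2\Gamma\beta$ coming from the distance \eqref{eq:distanceGibbs} between the Gibbs states of $H_S$ and the renormalized $H_S^*$; you omit this, but it is subdominant for the chosen $\alpha$ and does not alter the final scaling.

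One small correction: your justification that $e^{(\alpha\Gamma\beta)^2}=\mathcal{O}(1)$ via the bound $\alpha\le\alpha^*$ does not quite work, since $\alpha^*\Gamma\beta=\tfrac12\sqrt{2+3\Gamma\tau}$ need not be $\mathcal{O}(1)$ once $\Gamma\tau$ scales with $N$. The conclusion is still valid, however, because the \emph{operational} $\alpha$ you actually select is far smaller (indeed $\alpha\Gamma\beta=\widetilde{\mathcal{O}}(\epsilon\lambda\beta/(N\log\|\rho_\beta^{S\,-1}\|))\to 0$), which is what the paper uses. Also note that your ``Main obstacle'' paragraph is really about establishing the \emph{hypothesis} $\lambda$ of the lemma (i.e.\ the content of Lemma~\ref{lem:gapHT}), which belongs to the proof of Theorem~\ref{thm:bath} rather than to Lemma~\ref{lem:bathgap} itself.
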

 
The connection of the spectral gaps at large and small $\alpha$ is then done through the use of Lemma \ref{lem:CKG-monotonicity}. From this, the proof of \ref{thm:bath} requires lower bounds on the gaps. This is shown in Appendix \ref{app:physical}, where it is done from first principles following that of \cite{rouzé2024optimalquantumalgorithmgibbs}, and aided by the assumptions on the bath correlation functions. The dependence of $\alpha$ with $N$ is determined by the $\Gamma$ dependence in \ref{eq:closenessKMSintro}, which can likely be improved. 

\smallskip

\section{Discussion and Conclusion}\label{sec:discussion}

\noindent We have demonstrated the efficient preparation of Gibbs states for both near term quantum Gibbs sampling algorithms and thermalization models for many-body quantum systems. In doing so we introduced a technical result that greatly extends the regimes in which fast mixing results can be proven for KMS symmetric Lindbladians. Quasi-locality of the jump operators was a necessity of previously existing proofs, whereas now we are able to extend them to manifestly non-local Lindbladians. 

We believe it is possible to extend Theorem \ref{thm:bath} to weakly interacting fermions and 1D to the generator, which require extending the mixing time bounds from \cite{smid2025rapidmixingquantumgibbs,bergamaschi2025quantumspinchainsthermalize} to our setting. Another important question is what is the most general set of conditions on the bath that guarantees fast convergence. 

While our results allow the extension of MLSI constants for quasi-local Lindbladians to non-local ones, there are currently no examples of MLSI constants for non-commuting systems. Instead, rapid mixing (converging in time $\cO(\log(N))$) results in this regime have been proven via an approach using the so-called oscillator norm \cite{rouzé2024optimalquantumalgorithmgibbs,smid2025rapidmixingquantumgibbs,zhan2025rapidquantumgroundstate}. Interesting open problems are thus proving MLSI constants for quasi-local Lindbladians and seeing if a similar monotonicity result can be derived for the oscillator norm method.

\emph{Acknowledgements—}

CR would like to thank Anthony Chen for discussions that led to the establishment of Lemma 1.
CR and DSF are supported by France
2030 under the French National Research Agency award number “ANR-22- PNCQ-0002”. DSF acknowledges funding from the EERC grant GIFNEQ 101163938 and the project EQUALITY and from the Novo Nordisk
Foundation (Grant No. NNF20OC0059939 Quantum for Life). AMA acknowledges support from the Spanish Agencia Estatal de Investigacion through the grants ``IFT Centro de Excelencia Severo Ochoa CEX2020-001007-S" and ``Ram\'on y Cajal RyC2021-031610-I'', financed by MCIN/AEI/10.13039/501100011033 and the European Union NextGenerationEU/PRTR. This
project was funded within the QuantERA II Programme that has received funding from the EU’s H2020 research and innovation programme under the GA No 101017733.

\bibliographystyle{apsrev4-2}
\bibliography{references}

\appendix
\pagestyle{myheadings}
\onecolumngrid

\section{KMS Symmetric Lindbladians and mixing times}\label{app:KMSsymm}
\noindent In this section we review the relevant properties of KMS symmetric Lindbladians and prove Lemma~\eqref{lem:CKG-monotonicity}. 
The Kubo-Martin-Schwinger (KMS) inner product with respect to the full rank state $\rho_\beta$ on the Hilbert space $\cH$ is given by:
\begin{align*}
    \langle X,Y \rangle_{\rho_\beta}:= \operatorname{Tr}[\rho_\beta^{1/2}X^\dagger\rho_\beta^{1/2}Y].
\end{align*}
The KMS inner product can be used to define a $\rho_\beta$ weighted version of the $2$-norm via $\|X\|_{\rho_\beta}^2:= \langle X,X \rangle_{\rho_\beta}$.
A Heisenberg picture Lindbladian $\Lm^\dagger$ is said to be KMS symmetric if it is self adjoint with respect to the KMS inner product (this property is also often called $\rho_\beta$ detailed balance). That is:
\begin{align*}
    \langle X,\Lm^\dagger(Y) \rangle_{\rho_\beta} = \langle \Lm^\dagger(X),Y \rangle_{\rho_\beta} \;\forall \;X,Y.
\end{align*}
This property ensures that $\Lm(\rho_\beta) =0$ and thus that the evolution generated by $\Lm$ converges to $\rho_\beta$.
Throughout the following we will assume that $\rho_\beta = \frac{e^{-\beta H}}{\operatorname{Tr}[e^{-\beta H}]}$ for a Hamiltonian $H$ with the eigendecomposition $H = \sum_{i}E_i \Pi_i$ over the Hilbert space $\cH$ for a complete set of orthogonal projectors $\Pi_i$. The Bohr frequencies of $H$ are defined as the set:
\begin{align*}
    B_H := \{\nu = E_i-E_j: E_i,E_j \in \operatorname{spec}(H)  \}.
\end{align*}
Given an operator $A$ and a Bohr frequency $\nu\in B_H$ , we define the operator $A_\nu$ as $A$ where its allowed transitions are restricted to energy subspaces separated by $\nu$:
\begin{align*}
    A_\nu := \sum_{i,j:E_i-E_j = \nu} \Pi_i\, A\, \Pi_j.
\end{align*}
A general form for KMS symmetric Lindbladians in the Heisenberg picture is then given by: 
\begin{align}\label{eq:KMSLind}
\mathcal{L}^\dagger(X)=i\sum_{\nu\in B_H}[B_\nu,X]+\sum_{a}\sum_{\nu_1,\nu_2\in B_H}\Lambda^a_{\nu_1,\nu_2}\left(A^{a\dagger }_{\nu_2} X A^a_{\nu_1}-\frac{1}{2}\{A^{a\dagger}_{\nu_2} A^a_{\nu_1},X\}\right)
\end{align}
for some coefficients $\Lambda^a_{\nu_1,\nu_2}$ satisfying $\Lambda^a_{\nu_1,\nu_2}=\overline{\Lambda^a}_{-\nu_1,-\nu_2}e^{-\beta \left(\frac{\nu_1+\nu_2}{2} \right)}$, and 
\begin{align}\label{eq:KMS-lindblad-B}
B_\nu:=\sum_{a}\sum_{\nu_1-\nu_2=\nu}\frac{\operatorname{tanh}(-\beta(\nu_1-\nu_2)/4)}{2i}\Lambda^a_{\nu_1,\nu_2}A_{\nu_2}^{a\dagger}A^a_{\nu_1}\,.
\end{align}
An important object in the study of Lindbladians is the Dirichlet form, defined for a Lindbladian $\Lm$ via:
\begin{align*}
    \mathcal E(X,Y) := -\langle X,\Lm^\dagger(Y)\rangle_{\rho_\beta}.
\end{align*}
 Given a KMS symmetric Lindbladian $\Lm$ in the general form \eqref{eq:KMSLind}, its Dirichlet form is given by (see \cite{rouzé2024efficientthermalizationuniversalquantum}):
\begin{align*}
    \mathcal E(X,Y) = \sum_{a,\nu_1,\nu_2}\frac{\Lambda^a_{\nu_1,\nu_2}e^{\beta(\nu_1+\nu_2)/4}}{\cosh(\beta(\nu_1-\nu_2)/4)}\langle[A^a_{\nu_1},X],[A^a_{\nu_2},Y] \rangle_{\rho_\beta}.
\end{align*}
\paragraph{Spectral gaps and fast mixing}
The Dirichlet form can be used to define a variational form of the spectral gap of a Lindbladian via the optimization:
\begin{align}\label{gap:varexpre}
\lambda_{\operatorname{Gap}}(\cL) := \inf_{X\in\cB_{\operatorname{sa}}(\cH)}\frac{\E(X)}{\|X-\langle\mathbb I,X \rangle_{\rho_\beta}\mathbb I\|_{\rho_\beta}^2},   
\end{align}
where $\cB_{\operatorname{sa}}(\cH)$ stands for the subspace of self-adjoint matrices over $\cH$, and we denote $\E(X)\equiv \E(X,X)$. The spectral gap is well-known to control the mixing time of the evolution generated by $\cL$: 
\begin{align*}
t_{\operatorname{mix}}(\epsilon):=\inf\left\{ t: \left\|e^{t\Lm}(\sigma)-\rho_\beta\right\|_1 \leq \epsilon \; \forall \; \sigma \in \mathcal S(\mathcal H) \right\},
\end{align*}
where $\|\bigcdot\|_1$ stands for the trace distance on the algebra $\mathcal{B}(\cH)$ of matrices over $\cH$, and $\mathcal{S}(\mathcal{H})$ stands for the set of quantum states. Indeed,
\begin{align}\label{eq:gap-mixing-time-ineq}
\big\|e^{t\mathcal{L}}(\sigma)-\rho_{\beta}\big\|_1\le2 e^{-\lambda_{\operatorname{Gap}}(\cL) t}\|\rho_\beta^{-1}\|\qquad \Rightarrow \qquad t_{\operatorname{mix}}(\epsilon) \le \frac{1}{\lambda_{\operatorname{Gap}}}\log\left(\frac{2\|\rho_\beta^{-1}\|}{\epsilon}\right). 
\end{align}
\paragraph{Modified logarithmic Sobolev inequality and rapid mixing}
A much tighter bound on the mixing time can be obtained using entropy decay estimates: we recall Umegaki's definition of the quantum relative entropy between any state $\sigma\in\mathcal{S}(\cH)$ and $\rho_\beta$:
\begin{align*}
D(\sigma\|\rho_\beta):=\tr{\sigma(\log(\sigma)-\log(\rho_\beta))}.
\end{align*}
The semigroup $e^{t\cL}$ satisfies the \textit{exponential entropy decay} with rate $\alpha>0$ if for any $\sigma\in \mathcal{S}(\cH)$ and $t\ge 0$,
\begin{align*}
D(e^{t\cL}(\sigma)\|\rho_\beta)\le e^{-\alpha t}D(\sigma\|\rho_\beta).
\end{align*}
Via Pinsker's inequality, the entropy decay bound leads to the following bound on the mixing time:

\begin{align}\label{eq.MLSIMixing}
\|e^{t\cL}(\sigma)-\rho_\beta\|_1\le \sqrt{2e^{-\alpha t}D(\sigma\|\rho_\beta)}\le \sqrt{2 e^{-\alpha t}\log(\|\rho_\beta^{-1}\|)}\qquad \Rightarrow \qquad t_{\operatorname{mix}}(\epsilon)\le \frac{1}{\alpha}\log\left(\frac{2\log(\|\rho_\beta^{-1}\|)}{\epsilon^2}\right).
\end{align}

\noindent Next, we define the \textit{entropy production} of $\rho$ as \cite{Spohn1978}:
\begin{align*}
\operatorname{EP}(\sigma):=-\frac{d}{dt}D(e^{t\cL}(\sigma)\|\rho_\beta)=-\tr{\mathcal{L}(\sigma)(\log\sigma-\log\rho_\beta)}.
\end{align*}
By Gr\"{o}nwall's inequality, the semigroup generated by $\cL$ satisfies the exponential entropy decay with rate $\alpha>0$ if and only if, for any state $\sigma\in \mathcal{S}(\cH)$,
\begin{align}\label{eq:MLSI}
\alpha\,D(\sigma\|\rho_\beta)\le \operatorname{EP}(\sigma).
\end{align}
The inequality \eqref{eq:MLSI} is known as the $\alpha$-\textit{modified logarithmic Sobolev inequality} (MLSI).
The best constant $\alpha$ satisfying such bounds is called the MLSI constant and satisfies the optimization
\begin{align*}
\alpha_{\operatorname{MLSI}}(\cL):=\inf_{\sigma\in\mathcal{S}(\cH)}\frac{\operatorname{EP}(\sigma)}{D(\sigma\|\rho_\beta)}.
\end{align*}
\paragraph{A monotonicity lemma for the spectral gaps and MLSI constants of KMS symmetric Lindbladians.}  We now prove the folowing Lemma, which shows that certain $1$-parameter families of KMS symmetric Lindbladians have monotonically increasing spectral gaps and MLSI constants when the parameterization satisfies a certain semigroup property with respect to convolution. This will form the basis for the eventual proof of Lemma~ \ref{lem:CKG-monotonicity}.
 
\begin{lemma}\label{lem:MLSI-gap}
   Let $\{\mathcal{L}_\delta\}_{\delta\ge 0}$ be a parametrized family of primitive, KMS symmetric Lindbladians with same fixed point $\rho_\beta$. Assume that for all $X,Y\in\mathcal{B}_{\operatorname{sa}}(\mathcal{H})$, given the function $\mathcal{E}^{(\delta)}_{XY}(s):=-\langle Y(s\beta),\mathcal{L}_\delta^\dagger(X(s\beta))\rangle_{\rho_\beta}$, where $X(s):=e^{isH}Xe^{-isH}$, takes the form
   \begin{align}\label{eq:convolution}
\mathcal{E}^{(\delta)}_{XY}= f_\delta\star  \mathcal{E}_{XY}^{(\infty)}
   \end{align}
  for an integrable function $f_\delta:\mathbb{R}\to\mathbb{R}_+$ which satisfies the following semigroup property: for any $\delta\le\delta'$,
  \begin{align}\label{eq:semigroupprop}
f_\delta =g_{\delta,\delta'}\star f_{\delta'}
  \end{align}
  for some function $g_{\delta,\delta'}:\mathbb{R}\to\mathbb{R}_+$. Then, for all $\delta\le \delta'$,
   \begin{align}\label{eq:gap}
  \lambda_{\operatorname{Gap}}(\cL_\delta)\ge \|g_{\delta,\delta'}\|_1\,\lambda_{\operatorname{Gap}}(\cL_{\delta'})\qquad \text{ and }\qquad \alpha_{\operatorname{MLSI}}(\cL_\delta)\ge \|g_{\delta,\delta'}\|_1\,\alpha_{\operatorname{MLSI}}(\cL_{\delta'}).
   \end{align}
\end{lemma}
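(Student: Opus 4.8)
The plan is to exploit the variational characterizations of $\lambda_{\operatorname{Gap}}$ and $\alpha_{\operatorname{MLSI}}$ together with the convolution structure \eqref{eq:convolution}. The key observation is that for a fixed pair $X,Y$, the quantity $\mathcal{E}^{(\delta)}_{XY}(0) = \mathcal{E}_\delta(X,Y)$ is the value at $s=0$ of the convolution $f_\delta \star \mathcal{E}^{(\infty)}_{XY}$, i.e. $\mathcal{E}_\delta(X,Y) = \int_{\mathbb R} f_\delta(u)\, \mathcal{E}^{(\infty)}_{XY}(-u)\, du$, and because $\mathcal{E}^{(\infty)}_{XY}(u) = -\langle Y(u\beta), \mathcal{L}_\infty^\dagger(X(u\beta))\rangle_{\rho_\beta}$ is the Dirichlet form evaluated on the KMS-modular-rotated operators $X(u\beta), Y(u\beta)$, which (by KMS symmetry and invariance of the inner product under $X\mapsto X(s)$) equals $\mathcal{E}^{(\infty)}_{X(u\beta)Y(u\beta)}(0)$. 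So the Dirichlet form at parameter $\delta$ is a weighted average (with weight $f_\delta\ge 0$) of the Dirichlet form at parameter $\infty$ evaluated along a one-parameter family of rotated operators.

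First I would establish the nonnegativity and normalization bookkeeping: since $f_\delta \ge 0$, the semigroup property \eqref{eq:semigroupprop} gives $\|f_\delta\|_1 = \|g_{\delta,\delta'}\|_1 \|f_{\delta'}\|_1$, and in the relevant normalization $\|f_\infty\|_1 = 1$ (this is the point where $\mathcal{L}_\infty$ has its Dirichlet form un-smeared). Then, using \eqref{eq:semigroupprop}, write $\mathcal{E}_\delta(X,X) = (g_{\delta,\delta'} \star f_{\delta'} \star \mathcal{E}^{(\infty)}_{XX})(0) = (g_{\delta,\delta'} \star \mathcal{E}^{(\delta')}_{XX})(0) = \int g_{\delta,\delta'}(u)\, \mathcal{E}_{\delta'}(X(-u\beta), X(-u\beta))\, du$, where in the last step I used that translating the convolution argument corresponds to rotating both slots. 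Now apply the gap inequality for $\mathcal{L}_{\delta'}$ to each rotated operator: $\mathcal{E}_{\delta'}(X(-u\beta),X(-u\beta)) \ge \lambda_{\operatorname{Gap}}(\mathcal{L}_{\delta'}) \, \|X(-u\beta) - \langle \mathbb I, X(-u\beta)\rangle_{\rho_\beta}\mathbb I\|_{\rho_\beta}^2$. The crucial simplification is that the KMS norm and the projection onto the orthocomplement of $\mathbb I$ are invariant under $X \mapsto X(s) = e^{isH}Xe^{-isH}$, because $\rho_\beta$ commutes with $e^{isH}$; hence $\|X(-u\beta) - \langle\mathbb I,X(-u\beta)\rangle_{\rho_\beta}\mathbb I\|_{\rho_\beta}^2 = \|X - \langle\mathbb I,X\rangle_{\rho_\beta}\mathbb I\|_{\rho_\beta}^2$, independent of $u$. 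Pulling this constant out of the integral gives $\mathcal{E}_\delta(X,X) \ge \|g_{\delta,\delta'}\|_1 \, \lambda_{\operatorname{Gap}}(\mathcal{L}_{\delta'}) \, \|X - \langle\mathbb I,X\rangle_{\rho_\beta}\mathbb I\|_{\rho_\beta}^2$, and dividing and taking the infimum over $X\in\mathcal{B}_{\operatorname{sa}}(\mathcal H)$ yields the first inequality in \eqref{eq:gap}.

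For the MLSI bound I would run the analogous argument at the level of entropy production rather than the Dirichlet form. The subtlety is that $\operatorname{EP}$ is defined through $\mathcal{L}$ acting on a state $\sigma$, not through a bilinear form on observables, so I need the analogue of the convolution representation for $\operatorname{EP}(\sigma)$ — namely that $\operatorname{EP}_\delta(\sigma) = \int f_\delta(u)\, \operatorname{EP}_\infty(\sigma_u)\, du$ for a suitable rotated state $\sigma_u$ built from $e^{iu\beta H}$, which follows from the same manipulation that produced \eqref{eq:convolution} applied with the two "slots" being $\sigma$ and $\log\sigma - \log\rho_\beta$ (one checks this using the Heisenberg/Schrödinger duality and KMS symmetry, exactly as for the Dirichlet form). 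Then, again using \eqref{eq:semigroupprop}, $\operatorname{EP}_\delta(\sigma) = \int g_{\delta,\delta'}(u)\, \operatorname{EP}_{\delta'}(\sigma_u)\, du \ge \alpha_{\operatorname{MLSI}}(\mathcal{L}_{\delta'}) \int g_{\delta,\delta'}(u)\, D(\sigma_u \| \rho_\beta)\, du$, and the relative entropy is invariant under the unitary rotation $\sigma \mapsto e^{iu\beta H}\sigma e^{-iu\beta H}$ since it fixes $\rho_\beta$, so $D(\sigma_u\|\rho_\beta) = D(\sigma\|\rho_\beta)$ for all $u$; factoring this out gives $\operatorname{EP}_\delta(\sigma) \ge \|g_{\delta,\delta'}\|_1 \, \alpha_{\operatorname{MLSI}}(\mathcal{L}_{\delta'}) \, D(\sigma\|\rho_\beta)$, hence the second inequality after taking the infimum over states. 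I expect the main obstacle to be precisely this last point — verifying cleanly that the convolution representation \eqref{eq:convolution}, stated for the bilinear Dirichlet form on self-adjoint observables, transfers to the entropy-production functional with the correct rotated state $\sigma_u$, and that all the domain/integrability issues (the integral $\int f_\delta(u)(\cdots)du$ converging, $f_\delta$ genuinely nonnegative so that the averaging inequalities are valid) are in order; the rest is bookkeeping with the modular rotation and the invariance of $\|\cdot\|_{\rho_\beta}$ and $D(\cdot\|\rho_\beta)$.
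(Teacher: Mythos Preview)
Your proposal is correct and follows essentially the same route as the paper: write $\mathcal{E}_\delta(X,X)=(g_{\delta,\delta'}\star\mathcal{E}^{(\delta')}_{XX})(0)$, recognize the integrand as $\mathcal{E}_{\delta'}(X(-u\beta),X(-u\beta))$, bound it below using the variational definition of $\lambda_{\operatorname{Gap}}(\cL_{\delta'})$, and exploit that $\|X(-u\beta)-\langle\mathbb I,X(-u\beta)\rangle_{\rho_\beta}\mathbb I\|_{\rho_\beta}^2$ is $u$-independent. For the MLSI part the paper makes explicit what you anticipated as the obstacle: it sets $X=\rho_\beta^{-1/2}\sigma\rho_\beta^{-1/2}$ and $Y=\log\sigma-\log\rho_\beta$ (both self-adjoint, so the hypothesis applies) and checks that $\operatorname{EP}_\delta(\sigma(\beta s))=\mathcal{E}^{(\delta)}_{XY}(s)$, after which the argument is identical using unitary invariance of $D(\cdot\|\rho_\beta)$.
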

\begin{proof}

The variational expression of the gap \eqref{gap:varexpre} associated to parameter choice $\delta$ is:
\begin{align*}
    \lambda_\delta = \inf_{X\in\mathcal{B}_{\operatorname{sa}}(\mathcal{H})} \frac{\Em_\delta(X)}{\|X-\langle\mathbbm I,X \rangle_{\rho_\beta}\mathbbm I\|_{\rho_\beta}^2}.
\end{align*}
By the condition \eqref{eq:convolution}, we get
\begin{align*}
\mathcal{E}_\delta(X)=-\langle X,\mathcal{L}^\dagger_{\delta}(X)\rangle_{\rho_\beta}=f_\delta\star  \mathcal{E}^{(\infty)}_{XX}(0).
\end{align*}
Using the semigroup property \eqref{eq:semigroupprop}, we can further write 
\begin{align*}
\mathcal{E}_\delta(X)=\big(g_{\delta,\delta'}\star f_{\delta'}\star \mathcal{E}^{(\infty)}_{XX}\big)(0)=\big(g_{\delta,\delta'}\star \mathcal{E}^{(\delta')}_{XX}\big)(0).
\end{align*}
Hence,
\begin{align*}
\frac{\mathcal{E}_\delta(X)}{\|X-\langle \mathbb{I},X\rangle_{\rho_\beta}\mathbb{I}\|^2_{\rho_\beta}}&=\int g_{\delta,\delta'}(s)\,\frac{\mathcal{E}_{\delta'}(X(-\beta s))}{\|X-\langle \mathbb{I},X\rangle_{\rho_\beta}\mathbb{I}\|_{\rho_\beta}^2}\,ds\\
&=\int g_{\delta,\delta'}(s)\,\frac{\mathcal{E}_{\delta'}(X(-\beta s))}{\|X(-\beta s)-\langle \mathbb{I},X(-\beta s)\rangle_{\rho_\beta}\mathbb{I}\|_{\rho_\beta}^2}\,ds.
\end{align*}
The result for the gap follows directly after lower bounding the ratio in the integrand by $\lambda_{\delta'}$ and minimizing over $X$ on the left-hand side. The result on the modified logarithmic Sobolev constant follows the same strategy: since the relative entropy is unitary invariant, $D(\sigma\|\rho_\beta)=D(\sigma(\beta s)\|\rho_\beta)$ for all $s$. Moreover, the entropy production associated to $\cL_\delta$ satisfies
\begin{align*}
\operatorname{EP}_\delta(\sigma(\beta s))=-\operatorname{Tr}\big(\mathcal{L}_\delta(\sigma(\beta s))(\log\sigma(\beta s)-\log\rho_\beta)\big)=-\langle \mathcal{L}_\delta^\dagger(X(s)),Y(s)\rangle_{\rho_\beta}\equiv \mathcal{E}^{(\delta)}_{XY}(s),
\end{align*}
where $X:=\rho_\beta^{-1/2}\sigma \rho_\beta^{-1/2}$ and $Y=\log(\sigma)-\log(\rho_\beta)$.

\end{proof}

\paragraph{Description of Lindbladians and proof of Lemma~\ref{lem:CKG-monotonicity}.}  We now describe the two families of Lindbladians to which Lemma~\ref{lem:CKG-monotonicity} applies. The first is a version \footnote{Our parametrization $\kappa$ is related to the parameters $\sigma_E$ and $\omega_\gamma$ in \cite{chen2023efficient} via $\sigma_E = \frac{1}{(\kappa\beta)}$ and $\omega_\gamma = \frac{1}{\beta}$ in \cite{chen2023efficient}. The version here also has a norm that scales linearly with the number of jump operators.} of the Lindbladian
from \cite{chen2023efficient}:
they are defined with respect to a filter function and its Fourier transform given by:
\begin{align*}
    f_\kappa(t) = \frac{e^{-\frac{t^2}{(\kappa\beta)^2}}}{(\pi/2)^{1/4}(\kappa\beta)^{1/2}} \qquad\operatorname{and}\qquad \widehat{f}_\kappa(\omega)=\int_{\mathbb R} f_\kappa(t)e^{-i\omega t}dt.  
\end{align*}
 Given a set of jump operators $A^a$, the Lindbladians in the Schrodinger picture are:
\begin{align}\label{eq:CKG-lindbladian}
    \Lm_\kappa^{\operatorname{G}}  = - i [B^{\operatorname{G}}_\kappa,\bigcdot\;] + \sum_a\int_{\mathbb R}\gamma_\kappa(\omega)\left( \widehat{A}_\kappa^a(\omega)(\,\bigcdot\,)\widehat{A}_\kappa^a(\omega)^\dagger -\frac{1}{2} \left\{\widehat{A}_\kappa^a(\omega)^\dagger \widehat{A}_\kappa^a(\omega),\bigcdot\; \right\}\right)d\omega
\end{align}
with 
\begin{align}\label{eq:jump-operators-kappa-app}
    \widehat{A}^a_\kappa(\omega) = \int_{\mathbb R} f_\kappa(t)A^a(t)e^{-i\omega t}dt = \sum_{\nu\in B_H}\widehat{f}_\kappa(\omega-\nu)A^a_{\nu}
\end{align}
for $A^a(t) = e^{iHt}A^ae^{-iHt}$ the Heisenberg evolution of $A^a$.
The function $\gamma_\kappa(\omega)$ is given by
\begin{align}\label{eq:KMS-gamma-app}
    \gamma_\kappa(\omega) = \exp\left( -\frac{(\beta\omega + 1)^2}{2\left(2-\frac{1}{\kappa^2}\right)}\right), \qquad \text{and satisfies} \qquad \gamma_\kappa(-\omega) = \exp\left(\frac{2\beta\omega}{\left(2-\frac{1}{\kappa^2}\right)}\right)\gamma_\kappa(\omega).
\end{align}
In terms of \eqref{eq:KMSLind}, the coefficients $\Lambda_{\nu_1,\nu_2}$ are given by:
\begin{align*}
    \Lambda_{\nu_1,\nu_2}^{\operatorname{G},\kappa} = \int_{\mathbb R} \gamma_\kappa(\omega)\widehat{f}_\kappa(\omega-\nu_1)\widehat{f}_\kappa(\omega-\nu_2)d\omega = 2\pi\sqrt{\frac{2-1/\kappa^2}{2}}\exp\left(-\frac{\kappa^2(\beta\nu_1 - \beta\nu_2)^2}{8}\right)\exp\left(-\frac{(\beta\nu_1 +\beta\nu_2+2)^2}{16} \right).
\end{align*}
The operator $B^{\operatorname{G}}_\kappa = \sum_\nu (B^{\operatorname{G}}_\kappa)_\nu$ is then given in terms of Bohr frequencies as in \eqref{eq:KMS-lindblad-B} and has a representation in the time domain given by:
\begin{align*}
    \sum_aB_\kappa^{\operatorname{G}} =\frac{1}{2}\sqrt{\frac{2-1/\kappa^2}{2}}\int_{\mathbb R}\int_{\mathbb R}b_1(s)b_2(r) A^a(\beta(s-r))A^a(\beta(s+r))dsdr
\end{align*}
for the functions $ b_1(s) =\frac{2}{\kappa\sqrt{\frac{\pi}{2}}} \frac{1}{\cosh(2\pi s)}*_se^{1/8\kappa^2}e^{-\frac{2s
    ^2}{\kappa^2}}\sin\left(\frac{s}{\kappa^2}\right)$ and $ b_2(r) = \frac{2}{\sqrt{\pi}}e^{-4r^2}e^{-i2r}$ where $*_s$ is the convolution with respect to the argument $s$.

\smallskip

The second is the Lindbladian from \cite{scandi2025thermalizationopenmanybodysystems} . It is given by $\cL_\alpha^{\operatorname{MB}} = \sum_a \cL_{a,\alpha}^{\operatorname{MB}}$ with (see Appendix~\ref{app:physical} for more details)

\begin{align}\label{eq:KMSMB}
		\cL^{\operatorname{MB}}_\alpha 
		&=-i\sqrbra{B_\alpha,\bigcdot\;}+\sum_a \int_{\mathbb{R}}\de\omega\;\Big({\widehat A}^a(\omega)(\, \bigcdot \,) {\widehat A}^{a\dagger}(\omega)-\frac{1}{2}\{{\widehat A}^{a\dagger}(\omega){\widehat A}^a(\omega),\bigcdot\;\}\Big)
	\end{align}
	where $B_\alpha$ is defined as in \eqref{eq:KMS-lindblad-B}, and the jump operators are
	\begin{align}
		\widehat{A}^a_\alpha(\omega) = \int_{\mathbb{R}} f_{a,\alpha}(t)  A^a(t) e^{-i \omega t } \text{d}t = \sum_{\nu\in B_H}\widehat{f}_{a,\alpha}(\omega-\nu)A^a_{\nu} ,\label{eq:jumpOperatorsDB}
	\end{align}
where now $f_{a,\alpha}(t)=\left ( \frac{1}{\sqrt{\sqrt{\pi} T(\alpha)}}\,\norbra{e^{\frac{-t^2}{2 T(\alpha)^2}}  \ast_t g_{a}}(-t) \right )$.
The corresponding coefficients in frequency as in \eqref{eq:KMSLind} for each term $\cL^{\operatorname{MB}}_{a,\alpha}$ are: 
\begin{align}\label{eq:KMS-coeff-MB-app-A} \Lambda^{{\operatorname{MB}},a,\alpha}_{\nu_1,\nu_2}= e^{-(T(\alpha)\,(\nu_1-\nu_2))^2/4}\;{\widehat{g}}_{a}\norbra{-\nu_1 } {\widehat{g}}_{a}\norbra{-{\nu_2} }.
\end{align}
where $\widehat{g}_a(\nu) = \widehat{g}_a(-\nu)e^{-\frac{\beta\nu}{2}}$, and where $\widehat{g}_a(\omega) \equiv \sqrt{\widehat{C}_{aa}(\omega)}$ such that $C_{aa}(t)=\int_\infty^\infty \operatorname{d}t\, g_a(s) g_a(t-s)$. Here $T(\alpha)=\frac{1}{2 \alpha \Gamma}\sqrt{2+3  \Gamma \tau}\propto \alpha^{-1}$ is called the observation time. A dimensionless version given by $T(\alpha)/\beta$ is the analog of $\kappa$ from \eqref{eq:CKG-lindbladian} for $\cL_{\alpha}^{\operatorname{MB}}$.

\smallskip 
The parameterizations in terms of $\kappa$ and $T(\alpha)/\beta$ are useful as they illuminate how the Lindbladians from \cite{chen2023efficient} and \cite{scandi2025thermalizationopenmanybodysystems} approach the Davies generator as $\kappa$ and $T(\alpha)/\beta$ increase. The dissipative parts of Davies generators are given by:
\begin{align}\label{eq:Davies-gen-form}
    \Lm^{\operatorname{D}} = \sum_a\sum_{\nu\in B_H}\gamma^{\operatorname{D}}(\nu)\left( A^a_\nu(\,\bigcdot\,)A_\nu^{a\dagger} -\frac{1}{2}\left\{A_\nu^{a\dagger} A^a_\nu,\bigcdot \,\right\}\right) ,
\end{align}
where the function $\gamma^{\operatorname{D}}$ satisfies the  KMS condition $\gamma^{\operatorname{D}}(-\nu) = e^{\beta\nu}\gamma^{\operatorname{D}}(\nu)$ (a concept distinct from that of KMS symmetry). Davies generators satisfy a stronger notion of detailed balance called Gelfand-Naimark-Segal (GNS) detailed balance, that imposes the stricter constraints on their general form evident from \eqref{eq:Davies-gen-form}; notably, they do not have off diagonal Bohr frequency components. It can be seen that in the limit $\kappa \to \infty$ and $T(\alpha)/\beta\to \infty$ that both $\cL_\kappa^G$ and $\cL^{\operatorname{MB}}_\alpha$  both become Davies generators, since 
\begin{align*}
 \Lambda^{\operatorname{G},\kappa}_{\nu_1,\nu_2}\to \delta_{\nu_1,\nu_2}e^{-\frac{(\beta\nu_1+1)^2}{4}}\qquad \text{ and }\qquad \Lambda^{{\operatorname{MB}},a,\alpha}_{\nu_1,\nu_2} \to \delta_{\nu_1,\nu_2}|\widehat{g}_a(\nu_1)|^2,
\end{align*}
where both $e^{-\frac{(\beta\nu+1)^2}{4}}$  and $|\widehat{g}_a(\nu)|^2$ can be seen to satisfy the KMS condition. For $\cL_\kappa^G$ in the intermediate regime where $\kappa \in [1,\infty)$, increasing $\kappa$ serves to more finely isolate energy transitions $\nu\in B_H$ by reducing the width of the frequency domain function $\widehat{f}_\kappa$. This comes at the cost of more Hamiltonian simulation resources required for its implementation on quantum computers and increased spatial extent of the jump operators \eqref{eq:jump-operators-kappa-app}, both stemming from the increased width of the time domain version of $f_\kappa$. The function $\gamma_\kappa(\omega)$ can also be seen from \eqref{eq:KMS-gamma-app} to approximately satisfy the KMS condition, with the approximation getting better as $\kappa$ increases. The parameter $T(\alpha)/\beta$ plays a similar role for $\cL^{\operatorname{MB}}_\alpha$, increasing both the energy transition resolution and the spatial extent of the underlying jump operators as it increases. With these two families of Lindbladians defined we now turn to proving Lemma~\ref{lem:CKG-monotonicity}, which shows that their spectral gaps and MLSI constants  increase monotonically as they approach the Davies generator limit.

\begin{proof}[Proof of Lemma~\ref{lem:CKG-monotonicity}.]
We begin by defining a general Dirichlet form that captures the families of Lindbladians in the lemma statement: 
\begin{align}\label{eq:dirichlet-form-delta}
    \Em_{\delta}(X,Y)=C(\delta)\sum_{a,\nu_1,\nu_2}e^{-\frac{(\beta\nu_1-\beta\nu_2)^2}{8\delta^2}}\frac{G(\nu_1,\nu_2) e^{-\beta(\nu_1+\nu_2)/4}}{2\cosh(\beta(\nu_1-\nu_2)/4)}\langle [A^a_{\nu_1},X],[A^a_{\nu_2},Y]\rangle_{\rho_\beta},
\end{align}   
where $G(\nu_1,\nu_2)=\overline{G}(-\nu_1,-\nu_2)e^{-\frac{\beta (\nu_1+\nu_2)}{2}}$, independent of $\delta$. The main examples are the following:
\begin{itemize}
    \item For the Lindbladians $\Lm^{\operatorname{G}}_\kappa$ \eqref{eq:CKG-lindbladian}, we first make the substitution $\delta = \frac{1}{\kappa}$ then choose $C(\delta) = \sqrt{\frac{2-\delta^2}{2}}$ and $G(\nu_1,\nu_2)=\exp\left(-\frac{(\beta\nu_1 +\beta\nu_2+2)^2}{16} \right)$. 
    
    \item For the Lindbladians $\cL_\alpha^{\operatorname{MB}}$ \eqref{eq:KMSMB} considered in \cite{scandi2025thermalizationopenmanybodysystems}
     we first make the substitution $\delta = \frac{\beta}{T(\alpha)}$ and then by \eqref{eq:KMS-coeff-MB-app-A}, $C(\delta) = 1$ and $G(\nu_1,\nu_2)=\widehat{g_a}(\nu_1)\widehat{g_a}(\nu_2)$ such that $\widehat{g_a}(\nu_1)=\widehat{g_a}(-\nu_1)e^{-\frac{\beta (\nu_1)}{2}}$ is defined through the Fourier transform of $C_{aa}(t)$ as $\widehat{g}_a(\omega) \equiv \sqrt{\widehat{C}_{aa}(\omega)}$ such that $C_{aa}(t)=\int_\infty^\infty \operatorname{d}s\, g_a(s) g_a(t-s)$. 
\end{itemize}
We next show that a function $\Em^{(\delta)}_{XY}$ corresponding to the modified Lindbladian $\widetilde{\Lm}_{\delta} = C(\delta)^{-1}\Lm_{\delta}$ (where $\cL_\delta$ is the Lindbladian corresponding to \eqref{eq:dirichlet-form-delta}) and given by:
\begin{align*}
    \Em^{(\delta)}_{XY}(s)=\sum_{a,\nu_1,\nu_2}e^{-\frac{(\beta\nu_1-\beta\nu_2)^2}{8\delta^2}}\frac{G(\nu_1,\nu_2) e^{-\beta(\nu_1+\nu_2)/4}}{2\cosh(\beta(\nu_1-\nu_2)/4)}\langle [A^a_{\nu_1},X(s\beta)],[A^a_{\nu_2},Y(s\beta)]\rangle_{\rho_\beta}
\end{align*}
satisfies the requirements \eqref{eq:convolution} and \eqref{eq:semigroupprop} of Lemma~\ref{lem:MLSI-gap} for suitable functions $f_\delta$ and $g_{\delta,\delta'}$. Indeed, by Fourier integration we have that 
\begin{align*}
    \Em^{(\delta)}_{XY}(s) = &\delta\sqrt{\frac{2}{\pi}}\int_\mathbb{R} e^{-2\delta^2r^2}\sum_{a,\nu_1,\nu_2}\frac{G(\nu_1,\nu_2) e^{-\beta(\nu_1+\nu_2)/4}}{2\cosh(\beta(\nu_1-\nu_2)/4)}\langle [A^a_{\nu_1},X((s-r)\beta)],[A^a_{\nu_2},Y((s-r)\beta)]\rangle_{\rho_\beta}dr\\
    =& \left(f_\delta\star\Em^{(\infty)}_{XY}\right)(s)
\end{align*}
with $f_\delta(s) = \delta \sqrt{\frac{2}{\pi}}e^{2\delta^2r^2}$ and where
\begin{align*}
    \Em^{(\infty)}_{XY}(s)= \sum_{a,\nu_1,\nu_2}\frac{G(\nu_1,\nu_2) e^{-\beta(\nu_1+\nu_2)/4}}{2\cosh(\beta(\nu_1-\nu_2)/4)}\langle [A^a_{\nu_1},X((s-r)\beta)],[A^a_{\nu_2},Y((s-r)\beta)]\rangle_{\rho_\beta}
\end{align*}
is the $\delta \to \infty$ limit of $\Em^{(\delta)}_{XY}$. It can then be verified that with the choice:
\begin{align*}
    g_{\delta,\delta'}(s) = \frac{\delta\delta'}{\sqrt{(\delta')^2-\delta^2}}\sqrt{2}{\pi}e^{-2\frac{\delta^2(\delta')^2}{(\delta')^2-\delta^2}s^2}
\end{align*}
for $ \delta\leq\delta' $ that $g_{\delta,\delta'}\star f_{\delta'}=f_\delta$. From Lemma~\ref{lem:MLSI-gap}, and from the fact that $\|g_{\delta,\delta'}\|_1=1$, it follows that:
\begin{align*}
   \lambda_{\textup{gap}}\left( \widetilde{\Lm}_{\delta}\right)\geq\lambda_{\textup{gap}}\left( \widetilde{\Lm}_{\delta'}\right)
\end{align*}
whenever $\delta\leq\delta'$. Then by noting that for $\delta \in[0,1]$ we clearly have that $\lambda_{\operatorname{Gap}}\left(\widetilde{\Lm}_\delta\right) = C(\delta)^{-1}\lambda_{\operatorname{Gap}}\left( \Lm_\delta\right) $ for both of the $C(\delta)$ functions considered here. It then follows that:
\begin{align*}
    \lambda_{\operatorname{Gap}}\left( \Lm_\delta\right)\geq \left(\frac{C(\delta)}{C(\delta')}\right)\lambda_{\operatorname{Gap}}\left( \Lm_{\delta'}\right)
\end{align*}
$\delta\leq\delta'$. For the Lindbladians of $\cL_\kappa^G$, upon resubstituting $\kappa = \frac{1}{\delta}$ we then have that for all $1\leq\kappa\leq \kappa'\leq\infty$:
\begin{align}\label{eq:gap-ineq-w-C-delta}
    \left(\frac{2-\frac{1}{\kappa'^2}}{2-\frac{1}{\kappa^2}}\right)^{1/2}\lambda_{\operatorname{Gap}}\left(\Lm^{\operatorname{G}}_{\kappa} \right)\leq \lambda_{\operatorname{Gap}}\left(\Lm^{\operatorname{G}}_{\kappa'} \right)
\end{align}
where $1\leq\left(\frac{2-\frac{1}{\kappa'^2}}{2-\frac{1}{\kappa^2}}\right)^{1/2}$ clearly for all $\kappa\leq\kappa'$. Then for $\cL^{\operatorname{MB}}_{\alpha}$ upon resubstituting $T(\alpha)= \frac{\beta}{\delta}$ and using the fact that $T(\alpha) \propto \alpha^{-1},$ we have that for all $0\leq\alpha \leq\alpha'$:
\begin{align} \label{eq:prooflemmamb} \lambda_{\text{gap}}\left(\cL^{\operatorname{MB}}_{\alpha'}\right)\leq\lambda_{\text{gap}}\left(\cL^{\operatorname{MB}}_{\alpha}\right).
\end{align}
For the MLSI constant, a similar proof yields the statements 
 \begin{align*}    
 \left(\frac{2-\frac{1}{\kappa'^2}}{2-\frac{1}{\kappa^2}}\right)^{1/2}\alpha_{\operatorname{MLSI}}\left(\Lm^{\operatorname{G}}_{\kappa} \right)\leq \alpha_{\operatorname{MLSI}}\left(\Lm^{\operatorname{G}}_{\kappa'} \right) \qquad \text{and}\qquad \alpha_{\operatorname{MLSI}}\left(\cL^{\operatorname{MB}}_{\alpha'}\right)\leq\alpha_{\operatorname{MLSI}}\left(\cL^{\operatorname{MB}}_{\alpha}\right)
 \end{align*}
 for $1\leq\kappa\leq \kappa'\leq\infty$ and $0\leq\alpha \leq\alpha'$ respectively.

\end{proof}

\paragraph{Warm up: Fast mixing of purely dissipative Lindbladians.} Before proving the results presented in the main text we present an illustrative example of how Lemma~\ref{lem:CKG-monotonicity} can be used to quickly prove fast mixing results for Lindbladians out of reach of previous proof methods when there is a spectral gap lower bound for a corresponding $\cL_1^{\operatorname{G}}$ Lindbladian. We consider an extensively normalized variant of the Lindbladian introduced in \cite{chen2023quantumthermalstatepreparation} (which corresponds to dissipative part of the Lindbladian \eqref{eq:CKG-lindbladian}) given by:
\begin{align*}
    \cL_{\kappa}^{\operatorname{C}} = ^{}\sum_{a}\int_{-\infty}^{\infty} \gamma_\kappa(\omega)\left( \widehat{A}_\kappa^a(\omega)(\,\bigcdot\,)\widehat{A}_\kappa^a(\omega)^\dagger -\frac{1}{2} \left\{\widehat{A}_\kappa^a(\omega)^\dagger \widehat{A}_\kappa^a(\omega),\bigcdot\; \right\}\right)d\omega.
\end{align*}
As this Lindbladian only differs from $\cL_\kappa^G$ by the coherent term we have that:
\begin{align*}
    \left\|\cL_\kappa^{\operatorname{C}}-\cL_\kappa^{\operatorname{G}}\right\|_{1-1}\leq 2\left\|B_\kappa^{\operatorname{G}}\right\|
    = &\cO\left(\|b_1\|_{L^1}\|b_2\|_{L^1}\sum_a{||A^a||^2}\right)\\ =& \cO\left(\frac{1}{\kappa}\left\| \cosh(2\pi s) \right\|_{L^1}\left\| e^{-\frac{2s^2}{\kappa^2}}\sin\left(\frac{s}{\kappa^2}\right)\right\|_{L^1} \|\cA\|_{G}\right)\\
    =&\cO\left(\frac{\|\cA\|_{G}}{\kappa}\right)
\end{align*}
where $\|\cA\|_{G} = \|\sum_a A^{a\dagger}A^a\|$. Now, the goal is to show we can control both the fixed point error and the mixing time of $\cL_\kappa^{\operatorname{C}}$ simultaneously. 
To show this we state the following Lemma without proof, noting that it is similar to Theorem 8 in \cite{ding2025endtoendefficientquantumthermal}.
\begin{lem}\label{lem:mixing-time-relations-L}
    Let $\cL_1$ and $\cL_2$ be primitive Lindbladians with fixed points $\rho_1$ and $\rho_2$ and mixing times $t_{\operatorname{mix},1}$ and $t_{\operatorname{mix},2}$ respectively. Then we have the following:
    \begin{enumerate}
        \item $\|\rho_1-\rho_2\|_1\leq \epsilon + t_{\operatorname{mix},1}(\epsilon)\|\mathcal L_1-\mathcal L_2\|_{1-1}$
        \item and $t_{\operatorname{mix},1}(\epsilon/4) \|\mathcal L_1-\mathcal L_2\|_{1-1}\leq \frac{\epsilon}{4}\implies t_{\operatorname{mix},2}(\epsilon) \leq t_{\operatorname{mix},1}(\epsilon/4) $.
    \end{enumerate}
\end{lem}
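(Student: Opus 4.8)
The plan is to prove Lemma~\ref{lem:mixing-time-relations-L} via two applications of Duhamel's formula (the integral representation of the difference of two semigroups) together with the contractivity of quantum channels in $1$-norm and the triangle inequality.

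\textbf{Step 1 (Duhamel bound on semigroup difference).} For any $t\ge 0$, write
\begin{align*}
e^{t\cL_1}-e^{t\cL_2}=\int_0^t e^{(t-s)\cL_1}\,(\cL_1-\cL_2)\,e^{s\cL_2}\,ds,
\end{align*}
so that for any state $\sigma$, using that $e^{(t-s)\cL_1}$ is a channel (hence a $1$-norm contraction) and that $e^{s\cL_2}(\sigma)$ is a state,
\begin{align*}
\big\|e^{t\cL_1}(\sigma)-e^{t\cL_2}(\sigma)\big\|_1\le \int_0^t \big\|(\cL_1-\cL_2)\big(e^{s\cL_2}(\sigma)\big)\big\|_1\,ds\le t\,\|\cL_1-\cL_2\|_{1\text{-}1}.
\end{align*}

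\textbf{Step 2 (fixed-point distance, item 1).} Fix any state $\sigma$ and set $t=t_{\operatorname{mix},1}(\epsilon)$. By the triangle inequality,
\begin{align*}
\|\rho_1-\rho_2\|_1\le \big\|\rho_1-e^{t\cL_1}(\sigma)\big\|_1+\big\|e^{t\cL_1}(\sigma)-e^{t\cL_2}(\sigma)\big\|_1+\big\|e^{t\cL_2}(\sigma)-\rho_2\big\|_1.
\end{align*}
The first term is $\le\epsilon$ by definition of $t_{\operatorname{mix},1}(\epsilon)$, the second is $\le t\,\|\cL_1-\cL_2\|_{1\text{-}1}$ by Step~1, and the third tends to $0$ as we additionally let $t'\to\infty$ by replacing $e^{t\cL_2}(\sigma)$ with $e^{t'\cL_2}(\sigma)$ for $t'\ge t$ — more cleanly, send $t\to\infty$ throughout: $e^{t\cL_1}(\sigma)\to\rho_1$, $e^{t\cL_2}(\sigma)\to\rho_2$, while the Duhamel bound is not uniform in $t$. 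The clean route is instead: apply the triangle inequality at the single time $t=t_{\operatorname{mix},1}(\epsilon)$, bound $\|e^{t\cL_2}(\sigma)-\rho_2\|_1$ by first going through $e^{t\cL_1}(\sigma)$ — i.e. use $\|e^{t\cL_2}(\sigma)-\rho_2\|_1\le \|e^{t\cL_2}(\sigma)-e^{t\cL_1}(\sigma)\|_1+\|e^{t\cL_1}(\sigma)-\rho_1\|_1+\|\rho_1-\rho_2\|_1$, which reintroduces $\|\rho_1-\rho_2\|_1$. To close this, iterate: writing $\sigma=\rho_2$ makes $\|e^{t\cL_2}(\rho_2)-\rho_2\|_1=0$, so choosing $\sigma=\rho_2$ in the first display gives directly $\|\rho_1-\rho_2\|_1\le \|\rho_1-e^{t\cL_1}(\rho_2)\|_1+\|e^{t\cL_1}(\rho_2)-e^{t\cL_2}(\rho_2)\|_1\le \epsilon + t_{\operatorname{mix},1}(\epsilon)\,\|\cL_1-\cL_2\|_{1\text{-}1}$, which is item~1.

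\textbf{Step 3 (transferring the mixing time, item 2).} Assume $t_{\operatorname{mix},1}(\epsilon/4)\,\|\cL_1-\cL_2\|_{1\text{-}1}\le \epsilon/4$, and set $t:=t_{\operatorname{mix},1}(\epsilon/4)$. For an arbitrary state $\sigma$,
\begin{align*}
\big\|e^{t\cL_2}(\sigma)-\rho_2\big\|_1\le \big\|e^{t\cL_2}(\sigma)-e^{t\cL_1}(\sigma)\big\|_1+\big\|e^{t\cL_1}(\sigma)-\rho_1\big\|_1+\|\rho_1-\rho_2\|_1.
\end{align*}
The first term is $\le t\,\|\cL_1-\cL_2\|_{1\text{-}1}\le\epsilon/4$ by Step~1; the second is $\le\epsilon/4$ by definition of $t_{\operatorname{mix},1}(\epsilon/4)$; the third is $\le \epsilon/4+\epsilon/4=\epsilon/2$ by item~1 applied at accuracy $\epsilon/4$ together with the hypothesis. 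Summing gives $\|e^{t\cL_2}(\sigma)-\rho_2\|_1\le\epsilon$ for all $\sigma$, i.e. $t_{\operatorname{mix},2}(\epsilon)\le t=t_{\operatorname{mix},1}(\epsilon/4)$, which is item~2.

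The only real subtlety — and the step I would be most careful about — is item~1: the Duhamel bound degrades linearly in $t$, so one cannot naively send $t\to\infty$; the fix is to evaluate the triangle inequality with the initial state chosen to be $\rho_2$ itself (a stationary state of $\cL_2$), which eliminates the $\cL_2$-evolution term entirely and leaves only the already-controlled $\cL_1$-mixing term and the Duhamel term. Everything else is the standard channel-contractivity plus triangle-inequality bookkeeping, and the factors of $4$ in item~2 are exactly what is needed to absorb the three error contributions.
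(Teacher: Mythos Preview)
Your proof is correct. The paper itself states this lemma \emph{without proof}, noting only that it is ``similar to Theorem 8 in \cite{ding2025endtoendefficientquantumthermal}'', so there is no proof to compare against; your Duhamel-plus-triangle-inequality argument is exactly the standard route and matches what one would expect that reference to contain.

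One stylistic comment: in Step~2 you first wander down the wrong decomposition (arbitrary $\sigma$, then trying to send $t\to\infty$, then a circular bound) before arriving at the clean choice $\sigma=\rho_2$. In a final write-up, just start with $\sigma=\rho_2$ and the two-term triangle inequality
\[
\|\rho_1-\rho_2\|_1=\|\rho_1-e^{t\cL_2}(\rho_2)\|_1\le \|\rho_1-e^{t\cL_1}(\rho_2)\|_1+\|e^{t\cL_1}(\rho_2)-e^{t\cL_2}(\rho_2)\|_1,
\]
which gives item~1 in one line. Everything else is fine, and your accounting of the three $\epsilon/4$ contributions in item~2 is exactly right.
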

\noindent With this, since $\rho_\beta^S$ is the exact fixed point of $\cL_\kappa^G$, by setting $\cL_1 = \cL_\kappa^{\operatorname{G}}$ and $\cL_2 = \cL_\kappa^{\operatorname{C}}$ we have that 
\begin{align*}
    \|\rho_{\operatorname{fix}}(\cL_\kappa^{\operatorname{C}}) -  \rho_\beta^S\|_1 \leq \epsilon + \mathcal O\left(t_{\operatorname{mix},\cL_\kappa^{\operatorname{G}}}(\epsilon/4)\frac{\|\cA\|_{\operatorname{G}}}{\kappa}\right).
\end{align*}
where $\rho_{\operatorname{fix}}(\cL_\kappa^{\operatorname{C}})$ is the fixed point of $\cL_\kappa^{\operatorname{C}}$. By taking $\kappa = \Omega\left( \frac{t_{\operatorname{mix},\cL_\kappa^{\operatorname{G}}}(\epsilon/4)\|\cA\|_{\operatorname{G}}}{\epsilon} \right)$ we can ensure that both $\|\rho_{\operatorname{fix}}(\cL_\kappa^{\operatorname{C}}) -  \rho_\beta^S\|_1= \cO(\epsilon)$ and that $t_{\operatorname{mix},\cL_{\kappa}^{\operatorname{C}}}(\epsilon)\leq t_{\operatorname{mix},\cL_{\kappa}^{\operatorname{G}}}(\epsilon/4)$ simultaneously. Now assume that there exists a lower bound $\lambda_{\operatorname{Gap}}\left(\cL_1^{\operatorname{G}}\right)\geq \lambda$ on the spectral gap of $\cL_1^{\operatorname{G}}$. Then by equation~\eqref{eq:gap-mixing-time-ineq} and lemmas~\ref{lem:mixing-time-relations-L} and~\ref{lem:CKG-monotonicity} we have that:
\begin{align*}
    t_{\operatorname{mix},\cL_{\kappa}^{\operatorname{C}}}(\epsilon) \leq t_{\operatorname{mix},\cL_{\kappa}^{\operatorname{G}}}(\epsilon/4) \leq \frac{1}{\lambda_{\operatorname{Gap}}(\cL_\kappa^{\operatorname{G}})}\log\left(\frac{8\|\rho_\beta^{-1}\|}{\epsilon}\right)\leq \frac{1}{\lambda_{\operatorname{Gap}}(\cL_1^{\operatorname{G}})}\log\left(\frac{8\|\rho_\beta^{-1}\|}{\epsilon}\right)= \cO\left(\frac{1}{\lambda}\log\left(\frac{\|\rho_\beta^{-1}\|}{\epsilon}\right)\right).
\end{align*}

\section{Fast mixing of repeated interaction models.}\label{app:RIM}

\noindent In this section we consider the \emph{repeated interaction Gibbs sampling} algorithm (a version of the algorithm introduced in  \cite{ding2025endtoendefficientquantumthermal}) based on repeated interactions with a single qubit bath, and prove the fast mixing results presented in the main text.  One step of the algorithm is implemented by coupling a randomly chosen system operator $A^a$ (we take these in what follows to be single site Pauli operators or Majoranas depending on the context, and have $a$ index both the site and the operator type) to a single qubit bath in the thermal state $\rho_\beta^B = \frac{e^{-\beta H_B}}{\operatorname{Tr}[e^{-\beta H_B}]}$ with a Hamiltonian $H_B = -\frac{\omega}{2}Z$ at a randomly sampled frequency $\omega$. For a given $A^a$ and $\omega$ the time-dependent Hamiltonian evolution of the system and bath is given by:
\begin{align*}
    H_{(\alpha,\kappa)}(t) = H_S + \alpha f_\kappa(t)A^a \otimes X + H_B
\end{align*}
 where we take the function $f_\kappa$ as the Gaussian:
\begin{align*}
    f_\kappa(t) = \frac{1}{(\pi/2)^{1/4}\kappa^{1/2}\beta^{1/2}}e^{-\frac{t^2}{\kappa^2\beta^2}}
\end{align*}
with its width set by the dimensionless factor $\kappa$. The algorithm is then repeated applications of the channel:
\begin{align}\label{eq:end-to-end-channel-app}
    \Phi_{(\alpha,\kappa)}=\mathbb E_{A^a,\omega}\left[\operatorname{Tr}_B\left[U_{(\alpha,\kappa)}(T,-T)(\;\bigcdot\,\otimes \rho_\beta^B)U_{(\alpha,\kappa)}^\dagger(T,-T)\right]\right]
\end{align}
where $U_{(\alpha,\kappa)}(T,-T) =\mathcal T \exp\left(-i\int_{-T}^TH_{(\alpha,\kappa)}(s)ds\right)$. The expectation over the jump operators $A^a$ is taken uniformly over the set of jump operators and their negatives $\mathcal A :=\{A^a,-A^a\}_a$, while $\omega$ is sampled from the probability density:
\begin{align}\label{eq:g-omega-distribution-app}
    g(\omega) = \frac{\beta}{\sqrt{2\pi\left(2-\frac{1}{\kappa^2}\right)}}\exp\left( -\frac{(\beta\omega + 1)^2}{2\left(2-\frac{1}{\kappa^2}\right)}\right).
\end{align}
We now recall some of the results of \cite{ding2025endtoendefficientquantumthermal} adapted to the setting considered here. Explicitly, our setting corresponds to setting $\sigma = \kappa/2$, taking the set of $\mathcal A$ as all single site Pauli operators (or single site Majoranas) and their negatives, and setting $g(\omega)$ as in    \eqref{eq:g-omega-distribution-app}. In the context of \cite{ding2025endtoendefficientquantumthermal}'s Theorem 27, we additionally set their $x$ to $\frac{1}{\beta}$. The first shows that the dynamics of \eqref{eq:end-to-end-channel-app} is well approximated by a Lindbladian. 
\begin{thm}
[Adapted from Theorem 7 and Lemma 11 of \cite{ding2025endtoendefficientquantumthermal}]\label{thm:channel-to-lindbladian-error-bound}
    The channel \eqref{eq:end-to-end-channel-app} satisfies: \begin{align}\label{eq:quartic-channel-error-bound}
    \|\Phi_{(\alpha,
    \kappa)} - \U_S(T)\circ\exp(\alpha^2\Lm^{\operatorname{RI}}_\kappa)\circ\U_S(T)\|_{1-1} = \cO\left(\alpha^4T^4/(\kappa\beta)^2 + \alpha^2\kappa\beta \exp(-T^2/(\kappa\beta)^2)\right)
\end{align}
where $\mathcal U_S(T)= e^{-iH_S T}(\,\bigcdot\,) e^{iH_S T}$ and $\Lm^{\operatorname{RI}}_\kappa$ is a Lindbladian of the form:
\begin{align}\label{eq:approximating-lindbladian-k}
    \Lm^{\operatorname{RI}}_\kappa = \mathbb{E}_{A^a}\left[\int_{\mathbbm R}\left(-i\big[g(\omega) H^{\operatorname{LS}}_{A^a,f_\kappa} (\omega),\bigcdot\;\big] + \gamma(\omega)\cD_{\widehat{A}^a_{\kappa}(\omega)}\right)d\omega \right]
\end{align}
where $H^{\operatorname{LS}}_{A^a,f_\kappa}(\omega)$ is a Hermitian matrix, $\gamma(\omega) = (g(\omega)+g(-\omega))/(1+\exp(\beta\omega))$, and
\begin{align*}
    \cD_{\widehat{A}^a_{\kappa}(\omega)} = \widehat{A}^a_{\kappa}(\omega)(\,\bigcdot\,) \widehat{A}^a_{\kappa}(\omega)^\dagger -\frac{1}{2}\left\{\widehat{A}^a_{\kappa}(\omega)^\dagger \widehat{A}^a_{\kappa}(\omega),\bigcdot\;\right\} 
\end{align*}
with $\widehat{A}^a_{\kappa}(\omega)$ as in \eqref{eq:jump-operators-kappa-app}.
\end{thm}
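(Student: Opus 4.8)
\noindent The statement is essentially a translation of the weak-coupling Dyson expansion of \cite{ding2025endtoendefficientquantumthermal} to our parameter choices ($\sigma=\kappa/2$, $\cA$ the single-site Paulis or Majoranas and their negatives, $g(\omega)$ as in \eqref{eq:g-omega-distribution-app}, and $x=1/\beta$ in the notation of their Theorem~27), so the plan is to reproduce that argument while tracking constants. First I would pass to the interaction picture with respect to $H_0=H_S+H_B$: writing $U_{(\alpha,\kappa)}(T,-T)=(e^{-iH_ST}\otimes e^{-iH_BT})\,\widetilde U(T,-T)\,(e^{-iH_ST}\otimes e^{-iH_BT})$ with $\widetilde U$ the propagator generated by $\widetilde V(t)=\alpha f_\kappa(t)\,A^a(t)\otimes X(t)$, and using that $\rho_\beta^B$ commutes with $H_B$ together with cyclicity of $\operatorname{Tr}_B$ under bath unitaries, one obtains $\Phi_{(\alpha,\kappa)}=\U_S(T)\circ\widetilde\Phi_{(\alpha,\kappa)}\circ\U_S(T)$, where $\widetilde\Phi_{(\alpha,\kappa)}=\mathbb{E}_{A^a,\omega}[\operatorname{Tr}_B[\widetilde U(T,-T)(\bigcdot\otimes\rho_\beta^B)\widetilde U^\dagger(T,-T)]]$. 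It then suffices to bound $\|\widetilde\Phi_{(\alpha,\kappa)}-\exp(\alpha^2\Lm^{\operatorname{RI}}_\kappa)\|_{1-1}$.

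Next I would Dyson-expand $\widetilde U(T,-T)$ in powers of $\alpha$ and organize the conjugated state accordingly. Since $H_B=-\tfrac{\omega}{2}Z$ is diagonal while $X$ is purely off-diagonal, every bath moment $\operatorname{Tr}[X(t_1)\cdots X(t_m)\rho_\beta^B]$ with $m$ odd vanishes identically, so the zeroth-order contribution is the identity channel, the first-order one vanishes, and the leading correction is $\cO(\alpha^2)$. Computing the single-qubit two-point function $C_\omega(t_1-t_2)=\operatorname{Tr}[X(t_1)X(t_2)\rho_\beta^B]=p_+(\omega)e^{-i\omega(t_1-t_2)}+p_-(\omega)e^{i\omega(t_1-t_2)}$ with $p_\pm(\omega)=e^{\pm\beta\omega/2}/(2\cosh(\beta\omega/2))$, the $\cO(\alpha^2)$ term is an explicit double integral over $(t_1,t_2)\in[-T,T]^2$ of $f_\kappa(t_1)f_\kappa(t_2)\,C_\omega(t_1-t_2)$ against $A^a(t_1)$ and $A^a(t_2)$ in the three standard orderings (acting on the left, on the right, and sandwiching $\sigma$).

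The core step is to show that $\mathbb{E}_{A^a,\omega}$ of this $\cO(\alpha^2)$ term equals $\alpha^2\Lm^{\operatorname{RI}}_\kappa$ up to a tail error. Using $\widehat A^a_\kappa(\omega)=\int f_\kappa(t)A^a(t)e^{-i\omega t}\,dt=\sum_{\nu\in B_H}\widehat f_\kappa(\omega-\nu)A^a_\nu$, the $p_+(\omega)e^{-i\omega(t_1-t_2)}$ piece assembles into terms of the form $\widehat A^a_\kappa(\omega)(\bigcdot)\widehat A^a_\kappa(\omega)^\dagger$ weighted by $g(\omega)p_+(\omega)$, and the $p_-(\omega)e^{i\omega(t_1-t_2)}$ piece into the counterpart evaluated at $-\omega$; after a change of variables the symmetric combination of these collapses to the dissipative rate $\gamma(\omega)=(g(\omega)+g(-\omega))/(1+e^{\beta\omega})$, while the antisymmetric (principal-value) part produces the Lamb-shift commutator $-i[g(\omega)H^{\operatorname{LS}}_{A^a,f_\kappa}(\omega),\bigcdot\,]$, recovering \eqref{eq:approximating-lindbladian-k}. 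The only approximation here is replacing $\int_{-T}^{T}$ by $\int_{\mathbb{R}}$ in the jump operators and rates, whose cost is $\cO(\alpha^2\kappa\beta\,e^{-T^2/(\kappa\beta)^2})$ up to polynomial prefactors, since $\int_{|t|>T}f_\kappa(t)\,dt$ decays like $e^{-T^2/(\kappa\beta)^2}$ and it enters the channel at most quadratically.

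Finally I would control the Dyson remainder beyond second order and the gap between $\idO+\alpha^2\Lm^{\operatorname{RI}}_\kappa$ and $\exp(\alpha^2\Lm^{\operatorname{RI}}_\kappa)$. The surviving higher bath moments start at order $\alpha^4$, and bounding the nested integrals crudely by $\big(\int_{-T}^{T}f_\kappa\big)^{2k}\le(2T\|f_\kappa\|_\infty)^{2k}=\cO\big((T^2/(\kappa\beta))^k\big)$ together with $\|A^a\|\le 1$ gives a geometric series with leading term $\cO(\alpha^4T^4/(\kappa\beta)^2)$; the Taylor remainder of the exponential is $\cO(\alpha^4\|\Lm^{\operatorname{RI}}_\kappa\|_{1-1}^2)$, of the same or lower order. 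I expect the main obstacle to be the core step: matching the $\omega$-averaged double integral to the precise operator content of $\Lm^{\operatorname{RI}}_\kappa$ — in particular checking that the Gaussian weight $g(\omega)$ is exactly what turns $C_\omega$ into the filter-convolution structure $\widehat f_\kappa(\omega-\nu)$ with rate $\gamma$, and cleanly isolating the coherent Lamb-shift piece — whereas the error estimates are routine once the constants of \cite{ding2025endtoendefficientquantumthermal} are propagated through our substitution.
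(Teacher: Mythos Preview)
Your proposal is correct and follows exactly the weak-coupling Dyson expansion argument underlying Theorem~7 and Lemma~11 of \cite{ding2025endtoendefficientquantumthermal}: interaction-picture reduction, vanishing of odd bath moments, identification of the $\cO(\alpha^2)$ term with $\Lm^{\operatorname{RI}}_\kappa$ up to the Gaussian tail $\cO(\alpha^2\kappa\beta\,e^{-T^2/(\kappa\beta)^2})$, and control of the $\cO(\alpha^4)$ remainder via $\|f_\kappa\|_{L^\infty}^4=1/(\kappa\beta)^2$ and $\|A^a\|\le 1$. The paper itself does not reprove this but simply invokes the cited results with the substitutions $\sigma=\kappa/2$, $x=1/\beta$, noting the two facts $\|f_\kappa\|_{L^\infty}^4=1/(\kappa\beta)^2$ and $\mathbb{E}_{A^a}\|A^a\|\le 1$, and attributing the second error term to the $[-T,T]\to\mathbb{R}$ completion of Lemma~11; your reconstruction is therefore more explicit than, but fully consistent with, the paper's treatment.
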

\noindent This follows from the statements in \cite{ding2025endtoendefficientquantumthermal} by noting that $||f_\kappa||_{L^{\infty}}^4 = 1/(\kappa\beta)^2 $ and that for single site Paulis and Majoranas the expectation over the norms of the jump operators is upper bounded by $1$. The factor of $\alpha^2\kappa\beta \exp(-T^2/(\kappa\beta)^2$ appears in this result and not in Theorem 7 of \cite{ding2025endtoendefficientquantumthermal} because we choose to complete the integrals from $-T,T$ to $-\infty,\infty$ (as is done in Lemma 11 of \cite{ding2025endtoendefficientquantumthermal}) in order to make the correspondence with $\cL_\kappa^G$ more apparent. The next result gives the conditions for which for the fixed point of \eqref{eq:end-to-end-channel-app} is close the the thermal state $\rho_\beta^S = \frac{e^{-\beta H_S}}{\text{Tr}[e^{-\beta H_S}]}$.

\begin{thm}[Adapted from Theorem 12 of \cite{ding2025endtoendefficientquantumthermal}] \label{thm:fixed-point-error}
    When $\kappa \geq 2$, we have:
    \begin{align*} 
    \|\rho_{\operatorname{fix}}(\Phi_{(\alpha,\kappa)})-\rho_\beta^S\|_1
    = \widetilde{\cO}\left(\left(\frac{\beta}{\kappa}\left(1 + \sqrt{\log(\kappa/2)} \right)+ \beta\kappa\exp(-T^2/(\kappa^2\beta^2))+ \alpha^2T^4 (\kappa \beta)^{-2}\right)\alpha^2t_{\operatorname{mix},\Phi_{(\alpha,\kappa)}}(\epsilon)+\epsilon\right)
    \end{align*}
    where $\rho_{\operatorname{fix}}(\Phi_{(\alpha,\kappa)})$ is the fixed point of the channel $\Phi_{(\alpha,\kappa)}$.
\end{thm}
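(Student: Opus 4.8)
The plan is to bound the fixed-point error by comparing the channel $\Phi_{(\alpha,\kappa)}$ with an idealized channel whose \emph{exact} fixed point is $\rho_\beta^S$, and then to apply a channel version of the stability estimate in Lemma~\ref{lem:mixing-time-relations-L}. First I would introduce a genuinely KMS symmetric Lindbladian $\widetilde{\Lm}_\kappa$ — a normalized copy of $\Lm^{\operatorname{G}}_\kappa$ from \eqref{eq:CKG-lindbladian} (normalization $J=\Theta(\beta/N)$ matching the dissipator of $\Lm^{\operatorname{RI}}_\kappa$), with the same jump operators $\widehat{A}^a_\kappa(\omega)$ as in \eqref{eq:jump-operators-kappa-app} but with the exactly KMS-compatible rates $\gamma_\kappa(\omega)$ and the matching coherent term $B^{\operatorname{G}}_\kappa$ — which by construction has $\rho_\beta^S$ as a fixed point (and is primitive for single-site Pauli or Majorana jumps). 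Set $\Psi' := \U_S(T)\circ e^{\alpha^2\widetilde{\Lm}_\kappa}\circ\U_S(T)$; since $[H_S,\rho_\beta^S]=0$ the unitary channel $\U_S(T)$ fixes $\rho_\beta^S$, so $\Psi'(\rho_\beta^S)=\rho_\beta^S$. The discrete-time analogue of the first statement of Lemma~\ref{lem:mixing-time-relations-L} — same proof: run $\Phi_{(\alpha,\kappa)}$ for $n=t_{\operatorname{mix},\Phi_{(\alpha,\kappa)}}(\epsilon)$ steps on $\rho_\beta^S$, then telescope $\Phi_{(\alpha,\kappa)}^{\,n}-(\Psi')^{n}$ using that both maps are $1$-norm contractions and that $(\Psi')^{n}$ fixes $\rho_\beta^S$ — gives
\begin{align*}
\|\rho_{\operatorname{fix}}(\Phi_{(\alpha,\kappa)})-\rho_\beta^S\|_1\le \epsilon + t_{\operatorname{mix},\Phi_{(\alpha,\kappa)}}(\epsilon)\,\|\Phi_{(\alpha,\kappa)}-\Psi'\|_{1-1},
\end{align*}
so the problem reduces to bounding $\|\Phi_{(\alpha,\kappa)}-\Psi'\|_{1-1}$.

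To estimate this I would insert $\Psi := \U_S(T)\circ\exp(\alpha^2\Lm^{\operatorname{RI}}_\kappa)\circ\U_S(T)$ and split with the triangle inequality. The piece $\|\Phi_{(\alpha,\kappa)}-\Psi\|_{1-1}$ is exactly Theorem~\ref{thm:channel-to-lindbladian-error-bound}, contributing $\cO\!\left(\alpha^4T^4/(\kappa\beta)^2+\alpha^2\kappa\beta\,e^{-T^2/(\kappa\beta)^2}\right)$. For $\|\Psi-\Psi'\|_{1-1}$, conjugation by the unitary channels $\U_S(T)$ is a $\|\cdot\|_{1-1}$-isometry, so this equals $\|e^{\alpha^2\Lm^{\operatorname{RI}}_\kappa}-e^{\alpha^2\widetilde{\Lm}_\kappa}\|_{1-1}$, which the Duhamel formula together with the $1$-norm contractivity of both Lindblad semigroups bounds by $\alpha^2\|\Lm^{\operatorname{RI}}_\kappa-\widetilde{\Lm}_\kappa\|_{1-1}$. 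Plugging both estimates into the display and pulling the common $\alpha^2$ out of the bracket reproduces the claimed expression, once one establishes
\begin{align*}
\|\Lm^{\operatorname{RI}}_\kappa-\widetilde{\Lm}_\kappa\|_{1-1}=\widetilde{\cO}\!\left(\tfrac{\beta}{\kappa}\big(1+\sqrt{\log(\kappa/2)}\big)\right).
\end{align*}

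This last bound is the technical core, and is the specialization of Theorem~12 of \cite{ding2025endtoendefficientquantumthermal} to the present setting ($\sigma=\kappa/2$, single-site jumps, $g(\omega)$ as in \eqref{eq:g-omega-distribution-app}, and $x=1/\beta$ in their Theorem~27). Since $\Lm^{\operatorname{RI}}_\kappa$ and $\widetilde{\Lm}_\kappa$ share their jump operators, the difference has two sources. (i) The dissipative rates: $\gamma(\omega)=(g(\omega)+g(-\omega))/(1+e^{\beta\omega})$ versus $\gamma_\kappa(\omega)$; because $g(-\omega)/g(\omega)=\exp(2\beta\omega/(2-\kappa^{-2}))$ departs from $e^{\beta\omega}$ only at order $\kappa^{-2}$, the pointwise rate mismatch is $\cO(\kappa^{-2})$ times a polynomial in $\beta\omega$, and integrating it against the Gaussian weight $g$ — whose mass reaches $|\beta\omega|=\cO(\sqrt{\log(\kappa/2)})$ once $\cO(\kappa^{-1})$ accuracy is demanded — produces the stated factor. (ii) The coherent part: the Lamb-shift Hamiltonian $\mathbb{E}_{A^a}\!\int g(\omega)H^{\operatorname{LS}}_{A^a,f_\kappa}(\omega)\,d\omega$ of $\Lm^{\operatorname{RI}}_\kappa$ against $B^{\operatorname{G}}_\kappa$ of $\widetilde{\Lm}_\kappa$, controlled in operator norm through their time-domain representations and the $L^1$ estimates of the profiles $b_1,b_2$ of \eqref{eq:CKG-lindbladian}, exactly as in the warm-up computation, again $\widetilde{\cO}(\beta/\kappa)$ up to the tail factor. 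The hard part will be this step — tracking the $\kappa$- and $\beta$-dependence of both contributions carefully and, in particular, pinning down where the $\sqrt{\log(\kappa/2)}$ comes from (the frequency window one must resolve to reach $\cO(1/\kappa)$ accuracy); once that is done, assembling the three contributions and bookkeeping the polylogarithmic $\widetilde{\cO}$ factors is routine.
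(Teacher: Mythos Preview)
Your proposal is correct and follows essentially the same approach as the paper's adaptation of Theorem~12 of \cite{ding2025endtoendefficientquantumthermal}: the paper does not prove this result in detail but only notes that it follows by the substitutions $\sigma=\kappa/2$, $x=1/\beta$ into the cited reference (with the quantity $R$ there scaling as $1/\kappa$), and your reconstruction---the discrete-time fixed-point stability/telescoping argument, the splitting via the intermediate $\Psi$, the use of Theorem~\ref{thm:channel-to-lindbladian-error-bound} for one piece, and the Lindbladian-to-KMS-Lindbladian comparison for the $\frac{\beta}{\kappa}(1+\sqrt{\log(\kappa/2)})$ term---is exactly the structure underlying that result.
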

\noindent This follows by making the appropriate substitutions, and by noting that the scaling of $R$ defined in equation (E5) of \cite{ding2025endtoendefficientquantumthermal} in this case is given by equation I4 of \cite{ding2025endtoendefficientquantumthermal}, which when $x=1/\beta$ scales as $1/\kappa$. Finally we have the key theorem from \cite{ding2025endtoendefficientquantumthermal} which enables our fast mixing results, which shows that the Lindbladians $\cL_{\kappa}^{\operatorname{RI}}$ are closely related to the KMS symmetric Lindbladians $\cL_\kappa^{\operatorname{G}}$.  
\begin{thm}[Adapted from Theorem 27 of \cite{ding2025endtoendefficientquantumthermal}]\label{thm:L-close-to-KMS}
    Define $\|\mathcal{A}\|_{\operatorname{G}}:=\|\sum_{a}A^{a\dagger} A^a\|$, let $\Lm_\kappa^{\operatorname{G}}$ be the Lindbladian \eqref{eq:CKG-lindbladian}, and let $\Lm^{\operatorname{RI}}_\kappa$ be the Lindbladian in \eqref{eq:approximating-lindbladian-k}. Then \begin{align*}
        \Big\|\Lm^{\operatorname{RI}}_\kappa - \left(-i[C,\bigcdot\;]+\frac{\beta}{\|\mathcal A\|_{\operatorname{G}}\sqrt{2\pi(2-1/\kappa^2)}}\Lm_{\kappa}^G\right)\Big\|_{1-1} = \cO\left(\frac{\beta}{\kappa}\right)
    \end{align*}  
   where $C$ is a Hermitian operator satisfying:
    \begin{align*}
        \Big\|\rho_\beta^{S\;-1/4}C\rho_\beta^{S\;1/4} - \rho_\beta^{S\;1/4}C\rho_\beta^{S\;-1/4}\Big\| = \cO\left( \frac{\beta}{\kappa} \right).
    \end{align*}
\end{thm}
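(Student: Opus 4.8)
The plan is to split both $\Lm^{\operatorname{RI}}_\kappa$ and $c_\kappa\Lm^{\operatorname{G}}_\kappa$, with $c_\kappa:=\beta/\big(\|\mathcal A\|_{\operatorname{G}}\sqrt{2\pi(2-1/\kappa^2)}\big)$, into a purely dissipative part and a commutator part, to check that the dissipative parts coincide up to $\cO(\beta/\kappa)$ in $1$-$1$ norm, and then to \emph{define} $C$ as the leftover commutator operator, whose near-commutation with $\rho_\beta^S$ is verified afterwards in the energy eigenbasis. Write $u:=2-1/\kappa^2\in[1,2)$.

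First I would remove the average over $\mathcal A=\{A^a,-A^a\}_a$: the dissipator $\cD_{\widehat A^a_\kappa(\omega)}$ and the Lamb-shift operator $H^{\operatorname{LS}}_{A^a,f_\kappa}(\omega)$ of \eqref{eq:approximating-lindbladian-k} are quadratic in $A^a$, hence even under $A^a\mapsto-A^a$, so $\mathbb{E}_{A^a}[\,\bigcdot\,]$ collapses to $m^{-1}\sum_a(\,\bigcdot\,)$ with $m$ the number of sampled operators, and for single-site Paulis or Majoranas $A^{a\dagger}A^a=\one$, so $m=\|\mathcal A\|_{\operatorname{G}}$. The dissipative part of $\Lm^{\operatorname{RI}}_\kappa$ is then $\|\mathcal A\|_{\operatorname{G}}^{-1}\sum_a\int\gamma(\omega)\,\cD_{\widehat A^a_\kappa(\omega)}\,d\omega$ and, reading off \eqref{eq:CKG-lindbladian} and using $g(\omega)=\tfrac{\beta}{\sqrt{2\pi u}}\gamma_\kappa(\omega)$ (from \eqref{eq:g-omega-distribution-app} and \eqref{eq:KMS-gamma-app}), the dissipative part of $c_\kappa\Lm^{\operatorname{G}}_\kappa$ is $\|\mathcal A\|_{\operatorname{G}}^{-1}\sum_a\int g(\omega)\,\cD_{\widehat A^a_\kappa(\omega)}\,d\omega$; so the comparison reduces to the scalar weights. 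From $\gamma(\omega)=\tfrac{g(\omega)+g(-\omega)}{1+e^{\beta\omega}}$ and $g(-\omega)/g(\omega)=e^{2\beta\omega/u}$ one gets $\gamma(\omega)=g(\omega)\tfrac{1+e^{2\beta\omega/u}}{1+e^{\beta\omega}}$, hence $|\gamma(\omega)-g(\omega)|\le g(\omega)\,|e^{(2/u-1)\beta\omega}-1|\le g(\omega)(2/u-1)|\beta\omega|\,e^{(2/u-1)|\beta\omega|}$ using $e^{\beta\omega}/(1+e^{\beta\omega})\le1$. Since $2/u-1=\tfrac{1/\kappa^2}{2-1/\kappa^2}=\cO(\kappa^{-2})$ and $g$ is a unit-mass Gaussian of width $\cO(\beta^{-1})$, this integrates to $\cO(\kappa^{-2})$; combined with $\|\cD_{\widehat A^a_\kappa(\omega)}\|_{1-1}\le2\|\widehat A^a_\kappa(\omega)\|^2\le2\|A^a\|^2\|f_\kappa\|_{L^1}^2=\cO(\kappa\beta)$ uniformly in $\omega$, the two dissipative parts differ by $\cO(\kappa\beta)\cdot\cO(\kappa^{-2})=\cO(\beta/\kappa)$ in $1$-$1$ norm.

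I would then set $C:=\|\mathcal A\|_{\operatorname{G}}^{-1}\sum_a\int g(\omega)\,H^{\operatorname{LS}}_{A^a,f_\kappa}(\omega)\,d\omega-c_\kappa B^{\operatorname{G}}_\kappa$ (Hermitian), so that $\Lm^{\operatorname{RI}}_\kappa-\big(-i[C,\bigcdot\,]+c_\kappa\Lm^{\operatorname{G}}_\kappa\big)$ equals the dissipative difference just bounded, giving the first displayed estimate. For the second, decompose $C=\sum_{\nu\in B_H}C_\nu$ into Bohr-frequency blocks; since $\rho_\beta^{S\,-1/4}C_\nu\rho_\beta^{S\,1/4}=e^{\beta\nu/4}C_\nu$, the target equals $\big\|\sum_{\nu\neq0}2\sinh(\beta\nu/4)\,C_\nu\big\|$ (the $\nu=0$ block commutes with $\rho_\beta^S$ and drops out). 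Now $B^{\operatorname{G}}_\kappa$ is purely off-diagonal: by \eqref{eq:KMS-lindblad-B} its $(\nu_1,\nu_2)$-block with $\nu_1-\nu_2=\nu$ carries $\tfrac{\tanh(-\beta\nu/4)}{2i}\Lambda^{\operatorname{G},\kappa}_{\nu_1,\nu_2}$, and the explicit $\Lambda^{\operatorname{G},\kappa}_{\nu_1,\nu_2}$ recorded just after \eqref{eq:KMS-gamma-app} is Gaussian-narrow of width $\cO((\kappa\beta)^{-1})$ in $\nu$, so its contribution to the weighted sum is $\cO(\beta/\kappa^2)$ and harmless. The real task is to control the off-diagonal part of the Lamb-shift term: (i) evaluate the principal-value Lamb-shift kernel of the Gaussian filter $f_\kappa$ in closed form and compare it with $c_\kappa\tfrac{\tanh(-\beta\nu/4)}{2i}\Lambda^{\operatorname{G},\kappa}_{\nu_1,\nu_2}$; (ii) use the time-domain double-integral representation of $B^{\operatorname{G}}_\kappa$ (with kernels $b_1,b_2$ given just after \eqref{eq:KMS-gamma-app}) and the analogous representation of the Lamb shift, noting that $\rho_\beta^{S\,\mp1/4}(\,\bigcdot\,)\rho_\beta^{S\,\pm1/4}$ acts as the imaginary-time shift $t\mapsto t\mp i\beta/4$ on each factor $A^a(t)$; (iii) convert the resulting coefficient bounds into an operator-norm bound by a Schur-type estimate in terms of $\|\sum_aA^{a\dagger}A^a\|=\|\mathcal A\|_{\operatorname{G}}$.

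The main obstacle is step (iii) above together with (i): the explicit comparison of the physical Lamb-shift kernel with the ideal KMS coherent coefficient, and the proof that their discrepancy is concentrated at small Bohr frequencies so that the $\sinh(\beta\nu/4)$-weighted resummation stays $\cO(\beta/\kappa)$ in operator norm. This is exactly the estimate carried out in the proof of Theorem~27 of \cite{ding2025endtoendefficientquantumthermal}; the only genuinely new work here is tracking the normalization factors under the identifications $\sigma=\kappa/2$, $x=1/\beta$, $m=\|\mathcal A\|_{\operatorname{G}}$ fixed at the start of this appendix. The dissipative comparison above, by contrast, is elementary.
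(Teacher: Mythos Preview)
Your proposal is correct and takes essentially the same approach as the paper: the paper's own ``proof'' consists entirely of the sentence ``This follows by inserting the substitutions $\sigma=\kappa/2$ and $x=1/\beta$ and observing that these satisfy the requirements of the Theorem,'' together with a remark that $\|\mathcal A\|_{\operatorname{G}}$ absorbs the normalization discrepancy between the extensive Lindbladians used here and the $\cO(1)$-normalized ones in \cite{chen2023efficient,ding2025endtoendefficientquantumthermal}. You carry out exactly this reduction, but with more explicit bookkeeping: you make the $\mathbb E_{A^a}\to\|\mathcal A\|_{\operatorname{G}}^{-1}\sum_a$ collapse precise, you give an elementary $\cO(\beta/\kappa)$ bound on the dissipative discrepancy via the scalar comparison $\gamma(\omega)$ vs.\ $g(\omega)$, and you then correctly identify the coherent Lamb-shift estimate as the only substantive step, deferring it (as the paper does) to Theorem~27 of \cite{ding2025endtoendefficientquantumthermal}.
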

\noindent This follows by inserting the substitutions $\sigma = \kappa/2$ and $x=\frac{1}{\beta}$ and observing that these satisfy the requirements of the Theorem. The quantity $||\cA||_{\operatorname{G}}$ is defined to translate between the normalizations of the Lindbladians used here \eqref{eq:CKG-lindbladian} as well as in the constant spectral gap results \cite{rouzé2024efficientthermalizationuniversalquantum,tong2025fastmixingweaklyinteracting,bergamaschi2025quantumspinchainsthermalize} which have norms that are extensive in the system size, and the normalization used in \cite{chen2023efficient} (and referenced by Theorem 27 of \cite{ding2025endtoendefficientquantumthermal}) which have $\cO(1)$ norms.
The next two lemmas are also adapted from \cite{ding2025endtoendefficientquantumthermal} and will be used in the proof of Lemma~\ref{lem:gap-implies-efficient-prep-epsilon}.
\begin{lem}[Adapted from Corollary 26 from \cite{ding2025endtoendefficientquantumthermal}]\label{lem:lamb-shift-mixing-time-error-bound}
Let $\rho$ be a full rank state  and let $\Phi_\cM = \cU \circ \exp(\alpha^2\cM) \circ \cU$ where $\cM = -i[C,\bigcdot\;] + \Lm_{\textup{DB}}$ and $\cU$ is a unitary satisfying $\cU(\rho) = \rho$ and futhermore such that $\Lm_{\textup{DB}}$ satisfies KMS detailed balance with respect to $\rho$. Then whenever
\begin{align*}
    \|\rho^{-1/4}C\rho^{1/4}-\rho^{1/4}C\rho^{-1/4}\|\leq \xi\leq \lambda_{\textup{gap}}(\Lm_{\textup{DB}}) ,
\end{align*}
the mixing time of $\Phi_{\cM}$ satisfies:
\begin{align*}
    t_{\operatorname{mix},\Phi_{\cM}}(\epsilon)\leq \frac{1}{\alpha^2\big(\lambda_{\textup{gap}}(\Lm_{\textup{DB}})-\xi\big)}\log\left( \frac{\big\| \rho^{-1} \big\|}{\epsilon} \right)+1.
\end{align*}
\end{lem}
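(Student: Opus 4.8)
The plan is to recognize $\Phi_\cM$ as a strict contraction, with factor $\exp\!\big(-\alpha^2(\lambda_{\textup{gap}}(\Lm_{\textup{DB}})-\xi)\big)$, for a $\rho$-weighted $2$-norm measured on \emph{differences of states}, and then translate this into the claimed trace-norm mixing bound. Writing $\Gamma_\rho^{s}$ for the superoperator $X\mapsto\rho^{s}X\rho^{s}$ and $\|Z\|_\bullet:=\|\Gamma_\rho^{-1/4}(Z)\|_2$, three ingredients are needed: (i) the unitary conjugation $\cU$ is an isometry for $\|\cdot\|_\bullet$; (ii) the flow $\exp(t\cM)$ contracts $\|\cdot\|_\bullet$ on differences of states at rate $\lambda_{\textup{gap}}(\Lm_{\textup{DB}})-\xi$; (iii) the standard comparison between $\|\cdot\|_\bullet$ and $\|\cdot\|_1$, through which $\|\rho^{-1}\|$ enters, exactly as in \eqref{eq:gap-mixing-time-ineq}.

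Ingredient (i) is immediate: $\cU(\rho)=\rho$ forces its implementing unitary matrix to commute with $\rho$, hence with every $\rho^{s}$, so $\cU$ commutes with $\Gamma_\rho^{-1/4}$ and $\|\cU(X)-\cU(Y)\|_\bullet=\|X-Y\|_\bullet$. For ingredient (ii), the key step is to conjugate the Heisenberg generator by $\Gamma_\rho^{1/4}$, which is a unitary from $\big(\cB(\cH),\langle\cdot,\cdot\rangle_\rho\big)$ onto $\big(\cB(\cH),\langle\cdot,\cdot\rangle_{\mathrm{HS}}\big)$ sending $\mathbb{I}$ to $\rho^{1/2}$. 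Under this conjugation $\cM^\dagger=i[C,\bigcdot\;]+\Lm_{\textup{DB}}^\dagger$ becomes $\widehat L+\widehat A+\widehat S$, where: $\widehat L$ is Hilbert--Schmidt self-adjoint, negative semidefinite, with $\ker\widehat L=\mathbb{R}\,\rho^{1/2}$ and $\widehat L\le-\lambda_{\textup{gap}}(\Lm_{\textup{DB}})$ on $\{\rho^{1/2}\}^{\perp}$ — this is precisely KMS symmetry of $\Lm_{\textup{DB}}$ together with the variational formula \eqref{gap:varexpre}; $\widehat A$ is anti-self-adjoint (the ``modular-compatible'' part of the conjugated $i[C,\bigcdot\;]$) and contributes nothing to the real part of any quadratic form; and $\widehat S$ is self-adjoint, with a direct computation of the KMS adjoint of $X\mapsto i[C,X]$ bounding $\|\widehat S\|$ by the modular commutator $\|\rho^{1/4}C\rho^{-1/4}-\rho^{-1/4}C\rho^{1/4}\|\le\xi$.

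With this decomposition, fix two states $\mu,\nu$ and set $r(t):=\|e^{t\cM}(\mu)-e^{t\cM}(\nu)\|_\bullet$. In the $\Gamma_\rho^{-1/4}$ picture the Schrödinger generator is the Hilbert--Schmidt adjoint $\widehat L-\widehat A+\widehat S$, and since $\mu$ and $\nu$ both have unit trace, trace preservation forces the $\Gamma_\rho^{-1/4}$-image of $e^{t\cM}(\mu)-e^{t\cM}(\nu)$ to stay orthogonal to $\rho^{1/2}$ for all $t$. A one-line Grönwall estimate then gives $\tfrac{d}{dt}r(t)^2\le-2(\lambda_{\textup{gap}}(\Lm_{\textup{DB}})-\xi)\,r(t)^2$, using the anti-self-adjointness of $\widehat A$, the gap bound for $\widehat L$ on $\{\rho^{1/2}\}^{\perp}$, $\|\widehat S\|\le\xi$, and $\xi\le\lambda_{\textup{gap}}(\Lm_{\textup{DB}})$ to keep the rate nonnegative; hence $r(t)\le e^{-t(\lambda_{\textup{gap}}(\Lm_{\textup{DB}})-\xi)}r(0)$, which is ingredient (ii) (take $t=\alpha^2$). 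Combining (i) and (ii), $\Phi_\cM$ contracts $\|\cdot\|_\bullet$ on differences of states by $e^{-\alpha^2(\lambda_{\textup{gap}}(\Lm_{\textup{DB}})-\xi)}$; in particular it has a unique fixed point $\rho_{\textup{fix}}(\Phi_\cM)$ and $\|\Phi_\cM^{\,n}(\mu)-\rho_{\textup{fix}}(\Phi_\cM)\|_\bullet\le e^{-n\alpha^2(\lambda_{\textup{gap}}(\Lm_{\textup{DB}})-\xi)}\|\mu-\rho_{\textup{fix}}(\Phi_\cM)\|_\bullet$. Ingredient (iii) — the elementary estimates $\|X\|_1\le\|\rho^{-1}\|^{1/4}\|X\|_\bullet$ and $\|\mu-\rho_{\textup{fix}}(\Phi_\cM)\|_\bullet\le\|\rho^{-1}\|^{1/2}$ — then yields $\|\Phi_\cM^{\,n}(\mu)-\rho_{\textup{fix}}(\Phi_\cM)\|_1\le\|\rho^{-1}\|\,e^{-n\alpha^2(\lambda_{\textup{gap}}(\Lm_{\textup{DB}})-\xi)}$ (with room to spare); setting the right-hand side to $\epsilon$, solving for $n$, and rounding up to an integer produces the stated bound, the ``$+1$'' being the ceiling.

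The main obstacle is the modular bookkeeping inside ingredient (ii): one must verify that the Hilbert--Schmidt-self-adjoint part of $i[C,\bigcdot\;]$ after conjugation by $\Gamma_\rho^{1/4}$ is controlled \emph{exactly} by $\|\rho^{1/4}C\rho^{-1/4}-\rho^{-1/4}C\rho^{1/4}\|$, so that the effective gap degrades by at most $\xi$ and not by more, and that the deviation of each evolved state from $\rho^{1/2}$ genuinely remains orthogonal to $\rho^{1/2}$, which is what makes the perturbative estimate lossless (otherwise a residual term proportional to $\xi/(\lambda_{\textup{gap}}(\Lm_{\textup{DB}})-\xi)$ would appear, spoiling the clean exponential). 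Everything else — the variational characterization of $\lambda_{\textup{gap}}(\Lm_{\textup{DB}})$, the weighted-norm/trace-norm comparisons, and the integer rounding — is routine.
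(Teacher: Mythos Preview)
The paper does not prove this lemma: it is stated as ``Adapted from Corollary~26 of \cite{ding2025endtoendefficientquantumthermal}'', with only a one-sentence remark explaining why the factor $\alpha^{-2}$ appears (the paper uses the unrescaled mixing time). There is therefore no proof in the paper to compare against; your task reduces to whether your argument is correct and matches the spirit of the cited result.

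Your proposal is correct and is exactly the standard route for this type of bound. The conjugation by $\Gamma_\rho^{1/4}$ turning KMS self-adjointness into Hilbert--Schmidt self-adjointness, the splitting of the conjugated commutator $X\mapsto iC_+X-iXC_-$ (with $C_\pm=\rho^{\pm 1/4}C\rho^{\mp 1/4}$) into the anti-self-adjoint piece $\tfrac{i}{2}[C_++C_-,\,\cdot\,]$ and the self-adjoint piece $\tfrac{i}{2}\{C_+-C_-,\,\cdot\,\}$ of norm at most $\|C_+-C_-\|\le\xi$, and the Gr\"onwall contraction on $\{\rho^{1/2}\}^\perp$ are all right. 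Two small remarks: (a) your norm comparison $\|X\|_1\le\|\rho^{-1}\|^{1/4}\|X\|_\bullet$ is weaker than necessary---Schatten--H\"older with exponents $(4,2,4)$ and $\|\rho^{1/4}\|_4=1$ already gives $\|X\|_1\le\|X\|_\bullet$---so the final constant you obtain is in fact $2\|\rho^{-1}\|^{1/2}$, comfortably inside the stated $\|\rho^{-1}\|$; (b) the orthogonality $Z(t)\perp\rho^{1/2}$ that you flag as a potential obstacle is immediate from trace preservation of $e^{t\cM}$ together with $\operatorname{Tr}(\mu-\nu)=0$, so no residual $\xi/(\lambda_{\textup{gap}}-\xi)$ term can arise. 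The ``$+1$'' is indeed just the ceiling.
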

\noindent Note our $t_{\operatorname{mix},\Phi_{\cM}}$ is the standard mixing time as opposed to the $\alpha^2$ rescaled mixing time used in Corollary 26. This is why the version here has an $\alpha^{-2}$ factor.
\begin{lem}[Adapted from Theorem 12 from \cite{ding2025endtoendefficientquantumthermal}.]\label{lem:small-1-1-close-mixing-times}
    Let $\Phi_1$ and $\Phi_2$ be two CPTP maps with unique fixed points. Then given any $\epsilon>0$, if $t_{\operatorname{mix},\Phi_1}(\epsilon/4)\|\Phi_1-\Phi_2\|_{1-1}\leq \epsilon/4$, then 
    \begin{align*}
        t_{\operatorname{mix},\Phi_2}(\epsilon)\leq t_{\operatorname{mix},\Phi_1}(\epsilon/4).
    \end{align*}
\end{lem}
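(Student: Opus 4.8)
The plan is to mirror the telescoping argument behind Lemma~\ref{lem:mixing-time-relations-L}, but for discrete channel iterations. Write $T:=t_{\operatorname{mix},\Phi_1}(\epsilon/4)$ for the (integer) number of channel applications and $\delta:=\|\Phi_1-\Phi_2\|_{1-1}$, so the hypothesis reads $T\delta\le\epsilon/4$. The first step is to control the deviation of the two evolutions after $T$ steps via the identity
\begin{align*}
\Phi_1^{T}-\Phi_2^{T}=\sum_{k=0}^{T-1}\Phi_1^{\,T-1-k}\circ(\Phi_1-\Phi_2)\circ\Phi_2^{\,k}.
\end{align*}
For any state $\sigma$, the operator $\Phi_2^{k}(\sigma)$ is again a state, so $(\Phi_1-\Phi_2)(\Phi_2^{k}(\sigma))$ is Hermitian with $\|(\Phi_1-\Phi_2)(\Phi_2^{k}(\sigma))\|_1\le\|\Phi_1-\Phi_2\|_{1-1}=\delta$; applying the trace-norm contractivity of the CPTP map $\Phi_1^{\,T-1-k}$ on Hermitian inputs and summing gives
\begin{align*}
\big\|\Phi_1^{T}(\sigma)-\Phi_2^{T}(\sigma)\big\|_1\le T\delta\le\tfrac{\epsilon}{4}\qquad\text{for every state }\sigma.
\end{align*}

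The second step compares the two fixed points. Since $\rho_{\operatorname{fix}}(\Phi_2)$ is $\Phi_2$-invariant we have $\rho_{\operatorname{fix}}(\Phi_2)=\Phi_2^{T}(\rho_{\operatorname{fix}}(\Phi_2))$, so combining the bound just obtained (applied to the state $\rho_{\operatorname{fix}}(\Phi_2)$) with the fact that $T$ is a mixing time of $\Phi_1$,
\begin{align*}
\big\|\rho_{\operatorname{fix}}(\Phi_2)-\rho_{\operatorname{fix}}(\Phi_1)\big\|_1
\le\big\|\Phi_2^{T}(\rho_{\operatorname{fix}}(\Phi_2))-\Phi_1^{T}(\rho_{\operatorname{fix}}(\Phi_2))\big\|_1+\big\|\Phi_1^{T}(\rho_{\operatorname{fix}}(\Phi_2))-\rho_{\operatorname{fix}}(\Phi_1)\big\|_1\le\tfrac{\epsilon}{4}+\tfrac{\epsilon}{4}=\tfrac{\epsilon}{2}.
\end{align*}

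Finally, for an arbitrary state $\sigma$ I would assemble the three estimates with one more triangle inequality,
\begin{align*}
\big\|\Phi_2^{T}(\sigma)-\rho_{\operatorname{fix}}(\Phi_2)\big\|_1\le\big\|\Phi_2^{T}(\sigma)-\Phi_1^{T}(\sigma)\big\|_1+\big\|\Phi_1^{T}(\sigma)-\rho_{\operatorname{fix}}(\Phi_1)\big\|_1+\big\|\rho_{\operatorname{fix}}(\Phi_1)-\rho_{\operatorname{fix}}(\Phi_2)\big\|_1\le\tfrac{\epsilon}{4}+\tfrac{\epsilon}{4}+\tfrac{\epsilon}{2}=\epsilon,
\end{align*}
which holds for all $\sigma$ and hence yields $t_{\operatorname{mix},\Phi_2}(\epsilon)\le T=t_{\operatorname{mix},\Phi_1}(\epsilon/4)$. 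There is no real analytic obstacle here; the only points needing care are bookkeeping: treating $T$ as an integer number of channel applications so the telescoping identity is valid, and verifying that every operator fed into a CPTP map in the telescoping sum is Hermitian, so that the contractivity $\|\Phi(Y)\|_1\le\|Y\|_1$ for trace-preserving positive maps on Hermitian $Y$ can legitimately be invoked.
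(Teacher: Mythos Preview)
Your argument is correct. The paper does not actually supply its own proof of this lemma; it merely quotes it as ``Adapted from Theorem~12'' of \cite{ding2025endtoendefficientquantumthermal}, so there is nothing in the present paper to compare against. Your telescoping argument is exactly the standard one underlying such statements (and is the discrete analogue of the continuous-time Lemma~\ref{lem:mixing-time-relations-L}, which the paper also states without proof): bound $\|\Phi_1^T-\Phi_2^T\|_{1-1}\le T\delta$, use this once with $\sigma=\rho_{\operatorname{fix}}(\Phi_2)$ together with the mixing of $\Phi_1$ to control the distance between fixed points, and then combine via a triangle inequality. All steps are valid as written; in particular the contractivity $\|\Phi_1^{T-1-k}(Y)\|_1\le\|Y\|_1$ only needs $Y$ to be Hermitian and traceless, which $(\Phi_1-\Phi_2)(\Phi_2^k(\sigma))$ is.
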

\noindent With these in hand, we can prove the primary lemma of this section.
\begin{proof}[Proof of Lemma~\ref{lem:gap-implies-efficient-prep-epsilon}]
 We begin by using Lemma~\ref{lem:CKG-monotonicity} to upper bound the mixing time of the channel 
 \begin{align*}
 \widehat{\Phi}_{(\alpha,\kappa)} = \cU_S(T)\circ \exp(\alpha^2\widehat\Lm_{\kappa})\circ\cU_S(T)
 \end{align*}
 where $\widehat\Lm_{\kappa}= -i[C,\bigcdot\;]+\frac{\beta}{\|\cA\|_{\operatorname{G}}\sqrt{2\pi(2-1/\kappa^2)}}\Lm_{\kappa}^G$, with $\|\cA\|_{\operatorname{G}}$ as defined in Theorem~\ref{thm:L-close-to-KMS}, in terms of $\lambda$. First, using Lemma~\ref{lem:CKG-monotonicity} $\big($we specifically use equation~\eqref{eq:gap-ineq-w-C-delta}$\big)$ and the assumed lower bound on the gap at $\kappa =1$ we have:
\begin{align}\label{eq:gap-delta-lower-bound-by-1}
    \lambda_{\textup{gap}}\left(\frac{\beta}{ \|\cA\|_{\operatorname{G}} \sqrt{2\pi(2-1/\kappa^2)}}\Lm_{\kappa}^G\right)\geq\frac{\beta}{\|\cA\|_{\operatorname{G}}}\frac{\sqrt{(2-1/\kappa^2)}}{\sqrt{2\pi(2-1/\kappa^2)}} \lambda_{\textup{gap}}\left(\Lm_{1}^G\right) \geq \frac{\beta}{\sqrt{2\pi}} \frac{\lambda}{\|\cA\|_{\operatorname{G}}}.
\end{align}
Next, noting that since in Theorem~\ref{thm:L-close-to-KMS} we have that $\Big\|\rho_\beta^{S\;-1/4}C\rho_\beta^{S\;1/4} - \rho_\beta^{S\;1/4}C\rho_\beta^{S\;-1/4}\Big\| = \cO\left( \frac{\beta}{\kappa}\right)$, we can choose a $
    \kappa^* = \Omega\big(\|\cA\|_{\operatorname{G}}/\lambda\big)$ such that for all $\kappa>\kappa^*$ 
    we have 
    \begin{align*}
    \Big\|\rho_\beta^{S\;-1/4}C\rho_\beta^{S\;1/4} - \rho_\beta^{S\;1/4}C\rho_\beta^{S\;-1/4}\|\leq \frac{1}{2}\frac{\beta}{\sqrt{2\pi}}\frac{\lambda}{\|\cA\|_{\operatorname{G}}}.
    \end{align*}
 Using this, we have by Lemma~\ref{lem:lamb-shift-mixing-time-error-bound}, equation~\eqref{eq:gap-delta-lower-bound-by-1}, (and assuming that $\alpha\leq 1$), that for all $\kappa\geq\kappa^*$:
\begin{align*}
    \widehat{\tau}_{\textup{mix}}(\epsilon/4):= \alpha^2t_{\operatorname{mix},\widehat{\Phi}_{(\alpha,\kappa)}}(\epsilon/4) \leq \frac{\sqrt{2\pi}\|\cA\|_{\operatorname{G}}}{\beta\lambda}\log\left(\frac{4\|\rho_\beta^{S\;-1}\|}{\epsilon}\right)+1
    = \cO\left( \frac{\|\cA\|_{\operatorname{G}}}{\beta \lambda} \log\left( \frac{\|\rho_{\beta}^{S\;-1}\|}{\epsilon} \right)\right) .
\end{align*}
    In the next stage of the proof, we need to use Lemma~\ref{lem:small-1-1-close-mixing-times} and a choice of parameters $\kappa,T,\alpha$ so we can ensure that $\alpha^2t_{\operatorname{mix},\Phi_{(\alpha,\kappa)}}(\epsilon)$ factor in Theorem~\ref{thm:fixed-point-error} can be replaced with $\widehat{\tau}_{\textup{mix}}(\epsilon/4)$. First note that by Theorem~\ref{thm:channel-to-lindbladian-error-bound} and Theorem~\eqref{thm:L-close-to-KMS}:
    \begin{align*}
        \Big\|\widehat{\Phi}_{(\alpha,\kappa)}  -  \Phi_{(\alpha,\kappa)}\big\|_{1-1}
        \leq  &\cO\left(\alpha^4T^4/(\kappa\beta)^2 + \alpha^2\kappa\beta\exp(-T^2/\kappa^2\beta^2)\right) + \alpha^2\|\Lm^{\operatorname{RI}}_\kappa -\widehat{\Lm}_\kappa\|_{1-1}\\
        = &\cO\left( \alpha^4T^4/(\kappa\beta)^2+\alpha^2\kappa\beta\exp(-T^2/\kappa^2\beta^2) + \frac{\alpha^2\beta}{\kappa}\right)
    \end{align*}
so that
\begin{align}\label{eq:mixing-time-times-1-1-norm-difference}
    t_{\operatorname{mix},\widehat{\Phi}_{\alpha,\kappa}}(\epsilon)\|\widehat{\Phi}_{\alpha,\kappa}-\Phi_{\alpha,\kappa}\|_{1-1} 
    = \cO\left(
    \left( \frac{\beta}{\kappa}+\kappa\beta\exp(-T^2/\kappa^2\beta^2)+\alpha^2T^4/(\kappa\beta)^2\right)\widehat{\tau}_{\textup{mix}}(\epsilon/4)\right).
\end{align}
Then by Lemma~\ref{lem:small-1-1-close-mixing-times} if \eqref{eq:mixing-time-times-1-1-norm-difference} is $\cO(\epsilon)$, we can exchange  the $\alpha^2t_{\operatorname{mix},\Phi_{(\alpha,\kappa)}}(\epsilon)$ with $\widehat{\tau}_{\textup{mix}}(\epsilon/4)$ in the error term in Theorem~\ref{thm:fixed-point-error}. 
 If we can then simultaneously ensure that:
\begin{align}\label{eq:fixed-point-error-eps-in-proof}
    &\left(\left(\frac{\beta
    }{\kappa}\left(1+\sqrt{\log(\kappa/2)}\right)+ \kappa\beta\exp(-T^2/(\kappa^2\beta^2))+ \alpha^2T^4 (\kappa \beta)^{-2}\right)\widehat{\tau}_{\textup{mix}}(\epsilon/4)\right)
\end{align}
from Theorem~\ref{thm:fixed-point-error} is $\cO(\epsilon)$ we can control the fixed point error as well. We now demonstrate the choices of $\kappa$, $T$, and $\alpha$ that ensure that \eqref{eq:mixing-time-times-1-1-norm-difference} and \eqref{eq:fixed-point-error-eps-in-proof} are both $\cO(\epsilon)$. First choose a $\kappa$ large enough such that it satisfies:
\begin{align*}
   \kappa =  \widetilde{\Omega}\left(\beta\epsilon^{-1}\widehat{\tau}_{\textup{mix}}(\epsilon/4)\right)
\end{align*}
so that the first term in both of the expressions \eqref{eq:mixing-time-times-1-1-norm-difference} and \eqref{eq:fixed-point-error-eps-in-proof} can be made small, while also scaling at least as fast as $\kappa^*$ since both $\epsilon^{-1}$ and $\log(\|\rho_{\beta}^{S\;-1}\|/\epsilon)$ are greater than $1$. We next must choose $T$ large enough so that the inverse exponential in the second term dominates the $\beta\kappa$ factors. This can be achieved by taking:
\begin{align*}
    T = \widetilde{\Omega}\left(  \kappa\beta \right).
\end{align*}
Next, we need to pick $\alpha$ small enough so that the third term can be made small. This can be done by choosing
\begin{align*}
    \alpha = \cO\left( T^{-2}\kappa\beta\epsilon^{1/2}\widehat{\tau}_{\textup{mix}}(\epsilon/4)^{-1/2} \right) = \widetilde{\cO}\left((\kappa\beta)^{-1}\epsilon^{1/2}\widehat{\tau}_{\textup{mix}}(\epsilon/4)^{-1/2}\right)
\end{align*}
The total Hamiltonian simulation time to produce a sample will be given by: $T\times t_{\operatorname{mix},\Phi_{(\alpha,\kappa)}}(\epsilon)$. Noting that by Lemma~\ref{lem:small-1-1-close-mixing-times}, it follows that $t_{\operatorname{mix},\Phi_{(\alpha,\kappa)}}(\epsilon/4)=\cO\left( \alpha^{-2} \widehat{\tau}_{\textup{mix}}(\epsilon/4)\right)$, and thus that:
\begin{align*}
  T\times t_{\operatorname{mix},\Phi_{(\alpha,\kappa)}}(\epsilon)=\cO\left(  \frac{\kappa\beta\widehat{\tau}_{\textup{mix}}(\epsilon/4)}{\alpha^2}\right)  = \widetilde{\cO}\left(\frac{\beta^6\widehat{\tau}_{\textup{mix}}(\epsilon/4)^5}{\epsilon^4}\right)  =\widetilde{\cO}\left( \frac{\beta\|\cA\|_{\operatorname{G}}^5}{\epsilon^4\lambda^5}\log\left( \|\rho_\beta^{S\;-1}\|\right)^5\right). 
\end{align*}
\end{proof}
\paragraph{Further details on Theorem~\ref{thm:repeated-interaction-efficient-prep}.}

In this section we expand on Theorem~\ref{thm:repeated-interaction-efficient-prep} stating and proving the results more formally. We begin with the statement for high temperature non-commuting qubit models given in \cite{rouzé2024efficientthermalizationuniversalquantum}. Given a lattice of qubits $\Lambda$, we consider systems $S$ of $(h,k,l)$-local Hamiltonians on a subregion $A\subset \Lambda$: 
\begin{align*}
   H_S\to H_A = \sum_{X\subseteq A} h_X
\end{align*}
where each non-zero $h_X$ acts non-trivially on at most $k$ qubits, each qubit has at most $l$ non-zero $h_X$'s acting non-trivially on it, and such that $\max_{X\subseteq \Lambda} \|h_X\| \leq h$. We will consider families of such Hamiltonians indexed by the size of their support $N\equiv|A|$, and where $h,k,$ and $l$ are constants independent of $N$.
\begin{prop}\label{prop:fast-prep-hkl}
    Let $H$ be an $(h,k,l)$-local lattice Hamiltonian on $N$ qubits as in the preceding paragraph, with associated thermal state at temperature $\beta$ 
    \begin{align*}
        \rho_\beta^S = \frac{e^{-\beta H}}{\textup{Tr}[e^{-\beta H}]}.
    \end{align*} Then there exists a temperature $\beta^*>0$ independent of the system size such that for any $\beta\leq \beta^*$ the repeated interaction Gibbs sampling algorithm prepares a state $\sigma$ satisfying $\|\sigma - \rho_\beta^S\|_1\leq\epsilon$ with total Hamiltonian simulation time:
    \begin{align*}
       t_{\textup{total}}(\epsilon) = \widetilde{\cO}\left( \frac{N^{10}}{\epsilon^4}\right).
    \end{align*}
    where $\widetilde{\cO}$ suppresses subleading polylogarithmic factors in $N$ and $\epsilon$.
\end{prop}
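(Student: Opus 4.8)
The plan is to obtain the proposition as a direct specialization of Lemma~\ref{lem:gap-implies-efficient-prep-epsilon}, which already encapsulates all the heavy lifting (handling the Lamb shift term, the channel-to-Lindbladian approximation, and the choice of $\kappa,T,\alpha$). That lemma gives $t_{\textup{total}}(\epsilon)=\widetilde{\cO}\big(\beta\,\|\cA\|_{\operatorname{G}}^5\,\epsilon^{-4}\lambda^{-5}\log(\|\rho_\beta^{S\,-1}\|)^5\big)$, so all $N$-dependence enters only through $\lambda$, $\|\cA\|_{\operatorname{G}}$ and $\log(\|\rho_\beta^{S\,-1}\|)$. Hence it suffices to show: (i) a lower bound $\lambda=\Omega(1)$ on $\lambda_{\operatorname{Gap}}(\Lm^{\operatorname{G}}_1)$ that is independent of $N$, valid for $\beta\le\beta^*$; (ii) $\|\cA\|_{\operatorname{G}}=\cO(N)$; and (iii) $\log(\|\rho_\beta^{S\,-1}\|)=\cO(N)$. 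Substituting these into Lemma~\ref{lem:gap-implies-efficient-prep-epsilon} then produces $\widetilde{\cO}(N^5\cdot N^5/\epsilon^4)=\widetilde{\cO}(N^{10}/\epsilon^4)$, with the tilde absorbing the polylogarithmic factors already present there.

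For step (i), I would import the high-temperature spectral gap result of \cite{rouzé2024efficientthermalizationuniversalquantum}. At $\kappa=1$ the jump operators \eqref{eq:jump-operators-kappa-app} are quasi-local and $\Lm^{\operatorname{G}}_1$ is exactly the extensively-normalized variant of the KMS symmetric Lindbladian of \cite{chen2023efficient} built from single-site Pauli jumps, i.e.\ the object for which \cite{rouzé2024efficientthermalizationuniversalquantum} proves a constant spectral gap. Concretely, that work exhibits a threshold $\beta^*=\Theta(1)$ (depending only on $h,k,l$) such that for all $\beta\le\beta^*$ one has $\lambda_{\operatorname{Gap}}(\Lm^{\operatorname{G}}_1)\ge\lambda$ for some $\lambda=\Omega(1)$ independent of $N$. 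The only subtlety is conventions: one must check that the normalization in \eqref{eq:CKG-lindbladian} (each of the $\cO(N)$ jump terms carrying $\cO(1)$ weight, as in the footnote) is exactly the one under which the gap bound of \cite{rouzé2024efficientthermalizationuniversalquantum} is stated, so that it transfers verbatim and fixes the value of $\beta^*$. I expect this bookkeeping — verifying that the existing high-temperature gap proof applies to $\Lm^{\operatorname{G}}_1$ in the present normalization — to be the only real obstacle; the rest is elementary estimation.

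For steps (ii) and (iii): since $\cA$ consists of single-site Pauli operators and their negatives, each $A^a$ is unitary, so $A^{a\dagger}A^a=\one$ and $\|\cA\|_{\operatorname{G}}=\big\|\sum_a A^{a\dagger}A^a\big\|=\cO(N)$ (the number of single-site Paulis on $N$ qubits being $3N$). For the Gibbs weight, $\rho_\beta^{S\,-1}=e^{\beta H}\operatorname{Tr}[e^{-\beta H}]$, hence $\|\rho_\beta^{S\,-1}\|\le e^{\beta\|H\|}\cdot 2^N e^{\beta\|H\|}=2^N e^{2\beta\|H\|}$; and since $H$ is $(h,k,l)$-local it has at most $lN$ nonzero terms each of norm at most $h$, so $\|H\|\le hlN$ and therefore $\log(\|\rho_\beta^{S\,-1}\|)\le N\log 2+2\beta hlN=\cO(N)$ for $\beta\le\beta^*=\cO(1)$.

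Finally, plugging $\lambda=\Omega(1)$, $\|\cA\|_{\operatorname{G}}=\cO(N)$, $\log(\|\rho_\beta^{S\,-1}\|)=\cO(N)$ and $\beta\le\beta^*$ into the bound of Lemma~\ref{lem:gap-implies-efficient-prep-epsilon} gives $t_{\textup{total}}(\epsilon)=\widetilde{\cO}(N^{10}/\epsilon^4)$, as claimed. I would also remark (as the main text does) that the exponents of $N$ are inherited from Lemma~\ref{lem:gap-implies-efficient-prep-epsilon} and are almost certainly not optimal, since a sharper Dyson-truncation error bound would relax the requirement that $\alpha$ be small.
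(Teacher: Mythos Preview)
Your proposal is correct and follows essentially the same approach as the paper's own proof: invoke the constant spectral gap for $\Lm^{\operatorname{G}}_1$ from \cite{rouzé2024efficientthermalizationuniversalquantum} (which fixes $\beta^*$ and $\lambda=\Omega(1)$), bound $\|\cA\|_{\operatorname{G}}=\cO(N)$ and $\log(\|\rho_\beta^{S\,-1}\|)=\cO(N)$, and substitute into Lemma~\ref{lem:gap-implies-efficient-prep-epsilon}. The paper's version is slightly terser and cites the explicit constants $\beta^*=\cO((hkl)^{-1})$ and $\lambda\ge \frac{1}{2\sqrt{2}e^{1/4}}$ from Theorem~II.1 of that reference, but the structure is identical.
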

\begin{proof}
First note that Theorem II.1 in \cite{rouzé2024efficientthermalizationuniversalquantum} shows that given a Hamiltonian as in the corollary, there exists a $\beta^*=\cO((hkl)^{-1}) =\cO(1)$ such that the spectral gap of $\Lm^{\operatorname{G}}_1$ on this Hamiltonian for all $\beta\leq\beta^*$ is lower bounded by $\frac{1}{2\sqrt{2}e^{1/4}}=\Omega(1)$. The proposition then follows from Lemma~\ref{lem:gap-implies-efficient-prep-epsilon} and by noting that for $(h,k,l)$-local Hamiltonians: 
\begin{align*}
\log\left(\|\rho_\beta^{S\;-1}\| \right)\leq N+\beta hlN= \cO(N)
\end{align*}
and that here $\|\cA\|_{\operatorname{G}} = \cO(N)$ as $\cA$ is the set of all single site Pauli operators. 
\end{proof}
\smallskip
The next case we consider are models of weakly interacting fermions as described in \cite{tong2025fastmixingweaklyinteracting} where the jump operators in the Lindbladians $\cL_\kappa^{\operatorname{RI}}$ and $\cL_\kappa^{\operatorname{G}}$ are single-site Majoranas. There, the authors consider D-dimensional lattices of fermions with two Majorana fermion modes (corresponding to a single Dirac fermion) attached to each site satisfying the usual anti-commutation relations $\{\gamma_j,\gamma_k\}=2\delta_{jk}$. The Hamiltonians are such that $H_S= H_0 + V$ where $H_0= \sum_{j,k}h_{jk}\gamma_j\gamma_k$ is the non-interacting part and $V$ is a general parity preserving fermionic operator. It is further assumed that $H_0$ is $(1,r_0)$-geometrically local and $V$ is $(U,r_0)$-geometrically local, where $(\xi,r_0)$-geometrically locality is defined in the following definition:
\begin{Def}
    We say an operator $W$ is $(\xi,r_0)$-geometrically local if
    \begin{align*}
        W = \sum_{B\in\mathcal B(r_0)}W_B
    \end{align*}
    where $W_B$ only has non-trivial support on $B$, satisfies $\|W_B\|\leq\xi$, and $\mathcal B(r)$ denotes the set of all balls of radius $r$. 
\end{Def}
\noindent With this we can prove the following proposition.

\begin{prop}\label{prop:fast-prep-fermions}
    Let $H_S$ be a fermionic Hamiltonian over D-dimensional lattice on $N$ sites satisfying the assumptions of the preceding paragraph with associated thermal state at temperature $\beta$ given by $\rho_\beta^S$. Then for any fixed temperature $\beta\geq 0$, there exists a $U_\beta$ such that for all $U\leq U_\beta$, the repeated interaction Gibbs sampling algorithm prepares a state $\sigma$ satisfying $\|\sigma - \rho_\beta^S\|_1 \leq\epsilon$ with total Hamiltonian simulation time:
    \begin{align*}
    t_{\textup{total}}(\epsilon) = \widetilde{\cO}\left(\frac{N^{10}}{\epsilon^4}\right).
    \end{align*}
    where $\widetilde{\cO}$ suppresses subleading polylogarithmic factors in $N$ and $\epsilon$.
\end{prop}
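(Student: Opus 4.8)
The plan is to mirror the proof of Proposition~\ref{prop:fast-prep-hkl}. It suffices to (i) produce a system-size independent lower bound $\lambda=\Omega(1)$ on the spectral gap of $\Lm^{\operatorname{G}}_1$ for the fermionic models under consideration, (ii) bound the two model-dependent quantities appearing in Lemma~\ref{lem:gap-implies-efficient-prep-epsilon}, namely $\|\cA\|_{\operatorname{G}}=\cO(N)$ and $\log(\|\rho_\beta^{S\,-1}\|)=\cO(N)$, and (iii) substitute into the bound of Lemma~\ref{lem:gap-implies-efficient-prep-epsilon}.

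\textbf{The gap bound.} The substantive step is (i): I would extract from the main mixing-time result of \cite{tong2025fastmixingweaklyinteracting} a constant lower bound on $\lambda_{\operatorname{Gap}}(\Lm^{\operatorname{G}}_1)$. That work shows that, at any fixed $\beta\ge 0$, there is a threshold $U_\beta$ such that for all $U\le U_\beta$ the quasi-local Lindbladian of \cite{chen2023efficient} with single-site Majorana jumps mixes fast with an $N$-independent gap. The care needed here is purely one of bookkeeping between conventions: under the parameter dictionary of the footnote relating $\sigma_E,\omega_\gamma$ to $\kappa$, I would check that the filter choice of \cite{tong2025fastmixingweaklyinteracting} corresponds to $\kappa=1$, and that their $\cO(1)$-norm normalization is reconciled with the extensively normalized generator used here exactly as in Theorem~\ref{thm:L-close-to-KMS} — the rescaling is $\Theta(1)$ once the factor $\|\cA\|_{\operatorname{G}}$ is pulled out, so a constant gap for $\Lm^{\operatorname{G}}_1$ survives. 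Verifying this identification of hypotheses and normalizations is where I expect essentially all the (routine) effort to lie; there is no new analytic content to establish.

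\textbf{Structural estimates and assembly.} For (ii), since $\cA$ consists of the single-site Majorana operators $\gamma_j$ (there are $2N$ of them, each with $\gamma_j^2=\one$ by $\{\gamma_j,\gamma_k\}=2\delta_{jk}$) together with their negatives, $\sum_a A^{a\dagger}A^a$ is a scalar multiple of $\one$, so $\|\cA\|_{\operatorname{G}}=\cO(N)$; and $\log\|\rho_\beta^{S\,-1}\|\le\log\dim\cH+\beta\big(\lambda_{\max}(H_S)-\lambda_{\min}(H_S)\big)$, where $\log\dim\cH=N\log 2$ and, because $H_0$ is $(1,r_0)$- and $V$ is $(U,r_0)$-geometrically local on $N$ sites, both $\|H_0\|$ and $\|V\|$ and hence the spectral width of $H_S$ are $\cO(N)$, giving $\log\|\rho_\beta^{S\,-1}\|=\cO(N)$. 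Feeding $\lambda=\Omega(1)$, $\beta=\cO(1)$, $\|\cA\|_{\operatorname{G}}=\cO(N)$ and $\log\|\rho_\beta^{S\,-1}\|=\cO(N)$ into Lemma~\ref{lem:gap-implies-efficient-prep-epsilon} yields $t_{\textup{total}}(\epsilon)=\widetilde{\cO}\big(N^{5}\cdot N^{5}/\epsilon^{4}\big)=\widetilde{\cO}(N^{10}/\epsilon^{4})$, as claimed.
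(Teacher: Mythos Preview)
Your approach is essentially identical to the paper's: invoke the constant gap bound from \cite{tong2025fastmixingweaklyinteracting}, bound $\|\cA\|_{\operatorname{G}}$ and $\log\|\rho_\beta^{S\,-1}\|$ linearly in $N$, and feed these into Lemma~\ref{lem:gap-implies-efficient-prep-epsilon}. The only point you omit, and which the paper flags explicitly, is that because the Majorana jumps preserve fermion parity the Lindbladian is not primitive on the full space; the gap statement of \cite{tong2025fastmixingweaklyinteracting} is for $\Lm^{\operatorname{G}}_1$ \emph{restricted to the even subspace}, and one must observe (as the paper does parenthetically) that the conclusions of Lemma~\ref{lem:gap-implies-efficient-prep-epsilon} carry over unchanged under this restriction.
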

\begin{proof}
    Theorem $3$ of \cite{tong2025fastmixingweaklyinteracting} implies that there exists a $U_\beta$ such that for all $U\leq U_\beta$, the spectral gap of $\Lm^{\operatorname{G}}_1$ restricted to the even subspace (see section 3 and Corollary 5 of \cite{tong2025fastmixingweaklyinteracting} for discussions on restricting to the even subspace) is lower bounded by a constant that depends only on $\beta,r_0$ and the dimension of the lattice. The result then follows from this combined with \ref{lem:gap-implies-efficient-prep-epsilon} (which has its conclusions unchanged when restricting to the even subspace), and by noting that for these models that 
    \begin{align*}
        \log\left(\|\rho_{\beta}^{S\;-1}\|\right)\leq N+\beta N = \cO(N)
    \end{align*}
    and $\|\cA\|_{\operatorname{G}} = \cO(N)$.
\end{proof}
\smallskip
We now turn to the case of 1-D models at any constant temperature as treated in \cite{bergamaschi2025quantumspinchainsthermalize}. On a $1$-D lattice with local dimension at each site given by $2^q$ we consider nearest neighbor Hamiltonians of the form 
\begin{align*}
    H_S=\sum_{b=1}^{n-1} H_{b,b+1}
\end{align*}
where $\|H_{b,b+1}\|\leq h$. 
\begin{prop}\label{prop:fast-prep-1D}
    Let $H_S$ be a local 1-$D$ lattice Hamiltonian as in the preceding paragraph on $n$ sites with associated thermal state $\rho_\beta^S$. Then for any fixed temperature $\beta\geq 0$ the repeated interaction Gibbs sampling algorithm prepares a state $\sigma$ satisfying $\|\sigma - \rho_\beta^S\|_1 \leq\epsilon$ with total Hamiltonian simulation time:
    \begin{align*}
        t_{\textup{total}}(\epsilon)=\widetilde{\cO}\left(\frac{N^{10}}{\epsilon^4}\right).
    \end{align*} 
    where $\widetilde{\cO}$ suppresses subleading polylogarithmic factors in $N$ and $\epsilon$.
\end{prop}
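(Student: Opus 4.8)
The plan is to follow the template of the proofs of Propositions~\ref{prop:fast-prep-hkl} and~\ref{prop:fast-prep-fermions}: Lemma~\ref{lem:gap-implies-efficient-prep-epsilon} reduces the claim to supplying three ingredients — a lower bound $\lambda$ on the spectral gap of the quasi-local Lindbladian $\Lm^{\operatorname{G}}_1$ for $1$-D chains, a bound on $\log(\|\rho_\beta^{S\;-1}\|)$, and a bound on $\|\cA\|_{\operatorname{G}}$ — after which the stated runtime follows by arithmetic.

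\textbf{Gap bound.} First I would invoke the thermalization analysis of \cite{bergamaschi2025quantumspinchainsthermalize}, which shows that for a nearest-neighbor Hamiltonian on a $1$-D chain of local dimension $2^q$ at any fixed $\beta\ge 0$, the quasi-local ($\kappa=1$) member $\Lm^{\operatorname{G}}_1$ of the family in Lemma~\ref{lem:CKG-monotonicity}, built from single-site Pauli jump operators, satisfies $\lambda_{\operatorname{Gap}}(\Lm^{\operatorname{G}}_1) = \widetilde{\Omega}(1)$, i.e.\ is bounded below by a constant up to at most polylogarithmic-in-$N$ corrections (an MLSI lower bound, which lower-bounds the gap, would serve equally well). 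As in the normalization bookkeeping around Theorem~\ref{thm:L-close-to-KMS}, some care is needed to match the normalization of $\Lm^{\operatorname{G}}_1$ used in \cite{bergamaschi2025quantumspinchainsthermalize} to the one used here, up to the explicit $\|\cA\|_{\operatorname{G}}$-dependent rescaling; this normalization check is really the only nontrivial point.

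\textbf{Model-dependent quantities.} Next I would bound the two quantities appearing in Lemma~\ref{lem:gap-implies-efficient-prep-epsilon}. Writing $N=qn=\Theta(n)$ for the number of underlying qubits ($q$ a constant), we have $\|\rho_\beta^{S\;-1}\| = \operatorname{Tr}[e^{-\beta H_S}]\,\|e^{\beta H_S}\| \le 2^{qn}\,e^{2\beta\|H_S\|}$, and since $\|H_S\|\le(n-1)h$ this gives
\[
\log(\|\rho_\beta^{S\;-1}\|) \le qn\log 2 + 2\beta(n-1)h = \cO(N),
\]
using that $q$, $h$, $\beta$ are constants. For $\|\cA\|_{\operatorname{G}}=\|\sum_a A^{a\dagger}A^a\|$, the set $\cA$ consists of the (unitary) single-site Pauli operators on the $\Theta(N)$ underlying qubits, so $\sum_a A^{a\dagger}A^a$ is a multiple of the identity and $\|\cA\|_{\operatorname{G}} = \#\cA = \cO(N)$, exactly as in the previous two propositions.

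\textbf{Conclusion.} Feeding $\beta=\cO(1)$, $\lambda=\widetilde{\Omega}(1)$, $\|\cA\|_{\operatorname{G}}=\cO(N)$ and $\log(\|\rho_\beta^{S\;-1}\|)=\cO(N)$ into Lemma~\ref{lem:gap-implies-efficient-prep-epsilon} yields
\[
t_{\textup{total}}(\epsilon) = \widetilde{\cO}\!\left(\frac{\beta\,\|\cA\|_{\operatorname{G}}^{5}}{\epsilon^{4}\lambda^{5}}\log(\|\rho_\beta^{S\;-1}\|)^{5}\right) = \widetilde{\cO}\!\left(\frac{N^{10}}{\epsilon^{4}}\right),
\]
which is the claimed bound. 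I expect the main (and essentially only) obstacle to be the gap step: correctly transcribing the $1$-D spectral-gap (or MLSI) result of \cite{bergamaschi2025quantumspinchainsthermalize} into a lower bound on $\lambda_{\operatorname{Gap}}(\Lm^{\operatorname{G}}_1)$ with the normalization and jump-operator set assumed by Lemma~\ref{lem:gap-implies-efficient-prep-epsilon}; the remaining steps are the same elementary counting and arithmetic as in Propositions~\ref{prop:fast-prep-hkl} and~\ref{prop:fast-prep-fermions}.
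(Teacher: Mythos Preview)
Your proposal is correct and follows essentially the same route as the paper: invoke the constant spectral gap for $\Lm^{\operatorname{G}}_1$ on 1-D chains from \cite{bergamaschi2025quantumspinchainsthermalize} (the paper cites their Theorem~I.1 for a genuinely system-size-independent bound, slightly stronger than your $\widetilde{\Omega}(1)$), bound $\log(\|\rho_\beta^{S\;-1}\|)$ and $\|\cA\|_{\operatorname{G}}$ as $\cO(N)$, and feed these into Lemma~\ref{lem:gap-implies-efficient-prep-epsilon}. The only differences are cosmetic: the paper's proof is terser, and your extra remarks on normalization and the MLSI alternative are not needed since a constant gap suffices directly.
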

\begin{proof}
     Theorem I.1 of \cite{bergamaschi2025quantumspinchainsthermalize} shows that a system size independent lower bound on the gap of the $\Lm^{\operatorname{G}}_1$ Lindbladian exists that that depends only on $\beta$ and the local dimension $2^q$. The result follows from combining this with Lemma~\ref{lem:gap-implies-efficient-prep-epsilon}, and noting that for the models considered here we have that: 
    \begin{align*}
\log\left(\|\rho_{\beta}^{S\;-1}\| \right)\leq N+\beta N= \cO(N)
\end{align*}
and $\|\cA\|_{\operatorname{G}} = \cO(N)$.
\end{proof}
\noindent Combining propositions \ref{prop:fast-prep-hkl}, \ref{prop:fast-prep-fermions}, and \ref{prop:fast-prep-1D} then yields Theorem~\ref{thm:repeated-interaction-efficient-prep}.

\section{Fast Mixing of Physical Lindbladians }\label{app:physical}

\noindent We now consider a continuous evolution with a system-bath Hamiltonian $H=H_S+\alpha V + H_B$, with $V=\sum_a A^a \otimes B^a$, on an initial state $\rho_S(0)\otimes \rho_\beta^B$ such that $\rho_\beta^B=\frac{e^{-\beta H_B}}{Z_B}$, $Z_B=\tr{e^{-\beta H_B}}$, and $\norm{A^a}\le 1$ with $A^a$ all the single-body Paulis. This allows us to define the bath correlation function as $C_{ab}(t)=\tr{B^b(t)B^{a}\rho_\beta^B}-\tr{B^b\rho_\beta^B} \tr{B^{a}\rho_\beta^B}$, where for simplicity we shift all the operators $B^a$ such that $\tr{B^a\rho_\beta^B}=0$. We now explain how a KMS Lindbladian effectively emerges in this setting, following \cite{scandi2025thermalizationopenmanybodysystems}, with some extra assumptions needed for the efficiency proof.

Our first assumption is that the jump operators are uncorrelated by the bath dynamics \footnote{This is likely not strictly necessary, but greatly simplifies computations.}, such that 
$C_{ab}(t)=0$ if $a \neq b$. Considering its Fourier transform, we define $\widehat{g}_a(\omega) \equiv \sqrt{\widehat{C}_{aa}(\omega)}$ such that $C_{aa}(t)=\int_\infty^\infty \operatorname{d}t\, g_a(s) g_a(t-s)$, which allows us to define
\begin{align}
	\Gamma_a &:=  \norbra{\int_{-\infty}^\infty \vert g_a(t) \vert  \de{t} }^2 ,
	\\ 
	\Gamma_a \tau_a & :=\norbra{\int_{-\infty}^\infty \vert g_a(t) \vert  \de{t} } \norbra{\int_{-\infty}^\infty \vert t \vert\vert  g_a(t) \vert  \de{t} },
	\\ K_a & :=  \int_{-\infty}^\infty \de t \int_{-\infty}^\infty \de s \norbra{\vert t\vert + \vert s \vert }^2 \vert g_a(t) \vert \vert g_a(s) \vert ,
\end{align}
which we assume are all $\mathcal{O}\norbra{1}$. This also allows us to define $\sum_a \Gamma_a\le \Gamma$, and $\tau,K$ correspondingly. We also denote $\gamma_{\operatorname{max}}:=\max_a \Gamma_a$ and $\gamma_{\min}:=\min_a\Gamma_a$ in what comes next. Additionally, if $\Gamma_{0,a}=\int_{-\infty}^\infty \vert C_{aa}(t) \vert  \de{t} $, then $\vert \widehat C_{aa} (\omega) \vert \le \Gamma_{0,a} \le \Gamma_a$, and $\vert \widehat C_{aa}' (\omega) \vert \le \Gamma_a \tau_a$. With this, we are ready to state our main technical assumption on the bath.

\begin{Def} \label{def:bath}
	A heat bath with correlation function $ g(t)$ is \emph{mixing} if for all jump operators $A^a$ the following two conditions are satisfied:
	\begin{enumerate}
		\item for sufficiently small $\beta$, it holds that for all $a$:
		\begin{align}
			& \tau_a \le \mathcal{O}(\beta)\,;
		\end{align}
		\item let $h(t)$ be any monotonically increasing,  positive, even function, that diverges for $|t|\rightarrow\infty$ at most as $\Omega \norbra{\vert t \vert ^{2+2D}}$ in a $D-$dimensional system. Then, it holds that:
		\begin{align}
			&   	 \eta_h:=\frac{1}{\sqrt{\Gamma}}\int_{-\infty}^{\infty}\de t \; |g_{a}(t)| h(t)  < \infty\,.
		\end{align}
	\end{enumerate}
\end{Def}
\noindent Physically speaking, these assumptions mean that $1)$ in the high temperature limit the bath induces white noise, and $2)$ the bath has a short memory time, controlled by the decay of $h(t)^{-1}$.
As shown in Theorem 14 of \cite{scandi2025thermalizationopenmanybodysystems}, the evolution on the system is well approximated by a KMS symmetric  Lindbladian $\mathcal{L}^{\operatorname{MB}}_\alpha=\sum_a \mathcal{L}^{\operatorname{MB}}_{a,\alpha}$ of the form of \eqref{eq:KMSLind}, where each $\mathcal{L}^{\operatorname{MB}}_{a,\alpha}$ is defined through the coefficients
\begin{align} \label{eq:bathKMS}
	\Lambda^{{\operatorname{MB}},a}_{\nu_1,\nu_2}= e^{-(T(\alpha)\,(\nu_1-\nu_2))^2/4}\;{\widehat{g}}_{a}\norbra{-\nu_1 } {\widehat{g}}_{a}\norbra{-{\nu_2} },
\end{align} 
where $T(\alpha)$ is a free parameter called \emph{observation time} to be defined later. The system's frequencies are defined not with respect to $H_S$, but with respect to a renormalized Hamiltonian
$H_S^*=H_S+\alpha^2 H_{\operatorname{L}} $, where 
\begin{align}
	H_{\operatorname{L}} :=  \sum_{a} H_{\operatorname{L}}^{a} := i \sum_{a} \int_{-\infty}^\infty\de q \int_{-\infty}^{\infty}\de x\; \norbra{\frac{e^{-\frac{q^2+(x/2)^2}{T(\alpha)^2}}}{\sqrt{\pi}T(\alpha)}}C_{aa}(x){\rm sign}(x)\,A^{a}\norbra{q+\frac{x}{2}}A^a\norbra{q-\frac{x}{2}}\,.\label{eq:LSHam}
\end{align} 

\noindent The Lindbladian thus does not thermalize to $e^{-\beta H_S}/Z$, but to the Gibbs state of $H_S^*$. However, both Gibbs states are close in $1$-norm. In fact, as shown in \cite{scandi2025thermalizationopenmanybodysystems},
\begin{equation}\label{eq:distanceGibbs}
	\norm{\frac{e^{-\beta H_S}}{Z}-\frac{e^{-\beta H_S^*}}{Z^*}}_1\le  \alpha^2 \beta \norm{ H_{\operatorname{L}}} \le  \alpha^2 \beta \Gamma. 
\end{equation}
Additionally if $H_S$ is local, $H_S^*$ is still quasi-local with exponentially-decaying tails, as proven in the following:
\begin{lemma}\label{lemma:quasiLocHCG}
	This Lamb-shift Hamiltonian $ H_{\operatorname{L}}$ is such that one can define a Hermitian operator $H_{\operatorname{L}}^{r}$ which is sum of terms that have support at most of size $r$, so that:
	\begin{align}
		\norm {  H_{\operatorname{L}}^{a} -  H_{\operatorname{L}}^{a,r}} \le2\,\Gamma_a \norbra{\frac{(2e)^r}{\sqrt{2\pi r}}\exp\norbra{-\frac{r}{2}\log\norbra{\frac{r}{r_0}}}+4\,e^{-\frac{r}{r_0}}} \,,\label{eq:quasiLocalityEstimatesHCG}
	\end{align}
	where $r_0 :=\norbra{ v_{LR} T(\alpha)}$.
\end{lemma}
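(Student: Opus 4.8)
The plan is to truncate the double integral \eqref{eq:LSHam} defining $H_{\operatorname{L}}^{a}$ in two stages, controlling the error with a Lieb--Robinson bound of velocity $v_{LR}$. For $s\in\mathbb{R}$ let $[A^{a}(s)]_{r}$ denote the truncation of the Heisenberg-evolved operator $A^{a}(s)=e^{iH_{S}s}A^{a}e^{-iH_{S}s}$ onto the ball $B_{r}$ of radius $r$ centered on the single site supporting $A^{a}$, obtained by composing $A^a(s)$ with the normalized partial trace onto $B_{r}$. This map is linear, preserves the bound $\norm{[A^{a}(s)]_{r}}\le\norm{A^{a}(s)}=\norm{A^{a}}\le1$, and the Lieb--Robinson bound gives $\norm{A^{a}(s)-[A^{a}(s)]_{r}}\le c_{\operatorname{LR}}(v_{LR}|s|)^{r}/r!$ in the regime $r\gtrsim e\,v_{LR}|s|$ (and trivially $\le2$ in general). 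First I would set $L_{r}:=2\sqrt{r\,T(\alpha)/v_{LR}}$, so that $v_{LR}L_{r}=2\sqrt{r\,r_{0}}$ and $L_{r}^{2}/(4T(\alpha)^{2})=r/r_{0}$ with $r_{0}=v_{LR}T(\alpha)$, and define $H_{\operatorname{L}}^{a,r}$ by the same expression as $H_{\operatorname{L}}^{a}$ but with the $(q,x)$-integration restricted to the diamond $\mathcal{D}_{r}=\{|q|+|x|/2\le L_{r}\}$ and the product $A^{a}(q+\tfrac{x}{2})A^{a}(q-\tfrac{x}{2})$ replaced by $[A^{a}(q+\tfrac{x}{2})]_{r}[A^{a}(q-\tfrac{x}{2})]_{r}$. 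Every term in this (norm-convergent) integrand is supported in $B_{r}$, so $H_{\operatorname{L}}^{a,r}$ is supported on a ball of radius $r$; the change of variables $x\mapsto-x$ (which fixes $\mathcal{D}_{r}$ and the Gaussian weight and uses $\overline{C_{aa}(x)}=C_{aa}(-x)$) keeps it Hermitian, so $H_{\operatorname{L}}^{r}:=\sum_{a}H_{\operatorname{L}}^{a,r}$ is the desired sum of radius-$r$ terms.

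For the error I would split $H_{\operatorname{L}}^{a}-H_{\operatorname{L}}^{a,r}$ into the part of the integral over $\mathbb{R}^{2}\setminus\mathcal{D}_{r}$ and the part over $\mathcal{D}_{r}$ with the operators replaced. On $\mathbb{R}^{2}\setminus\mathcal{D}_{r}$ one has $|q|>L_{r}/2$ or $|x|/2>L_{r}/2$, so the Gaussian weight $e^{-(q^{2}+(x/2)^{2})/T(\alpha)^{2}}/(\sqrt{\pi}T(\alpha))$ is suppressed by $e^{-L_{r}^{2}/(4T(\alpha)^{2})}=e^{-r/r_{0}}$ in the relevant variable; using $\norm{A^{a}(\cdot)}\le1$, the normalization $\int e^{-q^{2}/T(\alpha)^{2}}/(\sqrt{\pi}T(\alpha))\,dq=1$, and $\int_{\mathbb{R}}|C_{aa}(x)|\,dx\le\Gamma_{a}$ (which follows from $C_{aa}=g_{a}\star g_{a}$), this contributes $\mathcal{O}(\Gamma_{a}e^{-r/r_{0}})$. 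On $\mathcal{D}_{r}$ I would use $\norm{AB-A'B'}\le\norm{A-A'}\norm{B}+\norm{A'}\norm{B-B'}$ with all four norms bounded by $1$, together with the reflection symmetry $q\mapsto-q$ of $\mathcal{D}_{r}$ and of the weight and the identity $\norm{A^{a}(-s)-[A^{a}(-s)]_{r}}=\norm{A^{a}(s)-[A^{a}(s)]_{r}}$, to reduce the estimate to $2\int_{\mathcal{D}_{r}}(\text{weight})\,|C_{aa}(x)|\,\norm{A^{a}(q+\tfrac{x}{2})-[A^{a}(q+\tfrac{x}{2})]_{r}}\,dq\,dx$. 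Since $|q+\tfrac{x}{2}|\le|q|+|x|/2\le L_{r}$ on $\mathcal{D}_{r}$, the Lieb--Robinson bound gives $\norm{A^{a}(q+\tfrac{x}{2})-[\,\cdot\,]_{r}}\le c_{\operatorname{LR}}(v_{LR}L_{r})^{r}/r!=c_{\operatorname{LR}}(2\sqrt{r\,r_{0}})^{r}/r!$; bounding the remaining integral by $\Gamma_{a}$ and applying Stirling's bound $r!\ge\sqrt{2\pi r}\,(r/e)^{r}$ turns this into $\mathcal{O}\!\big(\Gamma_{a}(2e)^{r}(r_{0}/r)^{r/2}/\sqrt{2\pi r}\big)$. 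Adding the two contributions, writing $(r_{0}/r)^{r/2}=\exp(-\tfrac{r}{2}\log(r/r_{0}))$, and fixing the implicit constants through the choice of $L_{r}$ gives \eqref{eq:quasiLocalityEstimatesHCG}; note that when $r$ is not large compared to $r_{0}$ the right-hand side already exceeds the trivial bound $2\Gamma_{a}\ge\norm{H_{\operatorname{L}}^{a}}$, so only the regime $r\gtrsim r_{0}$ is substantive (and there the factorial Lieb--Robinson tail bound above is valid).

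The hard part will be handling the coupling between the $q$- and $x$-integrations: the two Heisenberg operators depend only on $q\pm\tfrac{x}{2}$ while the Gaussian weight and the bath correlation function factorize differently, so the cutoff must be taken along the surface $|q|+|x|/2=L_{r}$ (a diamond, not a box), since $\max\{|q+\tfrac{x}{2}|,|q-\tfrac{x}{2}|\}=|q|+|x|/2$ is exactly the light-cone time controlling the joint support of the product $A^{a}(q+\tfrac{x}{2})A^{a}(q-\tfrac{x}{2})$; and $L_{r}$ has to be optimized between the Gaussian-tail error (favoring $L_{r}$ large) and the Lieb--Robinson truncation error (requiring $v_{LR}L_{r}\ll r$), which forces $L_{r}$ to scale like the geometric mean of the light-cone scale $r/v_{LR}$ and the Gaussian width $T(\alpha)$. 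Reproducing precisely the superexponential decay $\exp(-\tfrac{r}{2}\log(r/r_{0}))$ also relies on using the factorial form of the Lieb--Robinson bound (available for finite-range or exponentially decaying interactions) rather than a plain exponential one, and one must check that the partial-trace truncation keeps all operator norms at most $1$ so that the telescoping product estimate stays clean.
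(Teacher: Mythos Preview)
Your proposal is correct and follows essentially the same strategy as the paper: localize the Heisenberg-evolved operators, control the error inside a finite-time window by the factorial Lieb--Robinson tail and outside by the Gaussian tail of the weight, and optimize the cutoff at the geometric-mean scale $L_r\sim\sqrt{rT(\alpha)/v_{LR}}$ so that $v_{LR}L_r\sim\sqrt{r\,r_0}$ and $L_r^2/T(\alpha)^2\sim r/r_0$. The paper differs only cosmetically: it defines $H_{\operatorname{L}}^{a,r}$ by replacing $A^a(t)$ with the truncated-Hamiltonian evolution $e^{iH_S^r t}A^a e^{-iH_S^r t}$ (rather than your partial-trace projection) over the full $(q,x)$ domain, and splits the estimate into three regions via a box $\{|q|\le q_0,\ |x|\le x_0\}$ with $q_0=x_0/2=\sqrt{rT(\alpha)/v_{LR}}$ instead of your diamond, but the resulting bound and the optimal choice of scale are identical.
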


\begin{proof}
	Let us begin by introducing the modified evolution for the jump operators $A^{a,r}(t):=e^{iH_S^{r}t} A^a e^{-iH_S^{r}t}$, where $H_S^{r}$ is the system Hamiltonian truncated at a distance $r$ from the support of $A^a$. We are going to use the standard Lieb-Robinson bound:
	\begin{align}
		\Norm{A^{a}(t)-A^{a,r}(t)}\leq \frac{\norbra{v_{LR} |t|}^r}{r!}\,.\label{eq:LRlocal}
	\end{align}
	Let us define $H_{\operatorname{L}}^{r}$ in the same way as in Eq.~\eqref{eq:LSHam}, but with $A^{a,r}(t)$ instead of $A^{a}(t)$. We can bound the norm in Eq.~\eqref{eq:quasiLocalityEstimatesHCG} as:
	\begin{align}
		&\norm {  H_{\operatorname{L}}^a -  H_{\operatorname{L}}^{a,r}} \leq\nonumber
		\\
		&\leq\int_{-\infty}^{\infty}\de q\int_{-\infty}^{\infty}\de x \; \norbra{\frac{e^{-\frac{q^2+(x/2)^2}{T(\alpha)^2}}}{\sqrt{\pi}T(\alpha)}}|C_{aa}(x)|\,  \norbra{\Norm{A^{a} \norbra{q+\frac{x}{2}}-A^{a,r}\norbra{q+\frac{x}{2}}} + \Norm{A^{a} \norbra{q-\frac{x}{2}}-A^{a,r}\norbra{q+\frac{x}{2}}}}\,,
	\end{align}
	where we implicitly used $\|A^{a}(t)\|=\|A^{a,r}(t)\|\le 1$. To give an upper-bound, we separate the integration in three regions: that of $|q|\leq q_0$ and $|x|\leq x_0$ (for some arbitrary positive $q_0,\,x_0$); that of $|q|\leq q_0$ and $|x|\geq x_0$; and finally that of $|q|\geq q_0$. In the first region we can apply the bound in Eq.~\eqref{eq:LRlocal} to obtain:
	\begin{align}
		&\int_{-q_0}^{q_0}\de q\int_{-x_0}^{x_0}\de x \; \norbra{\frac{e^{-\frac{q^2+(x/2)^2}{T(\alpha)^2}}}{\sqrt{\pi}T(\alpha)}}|C_{aa}(x)|\,  \norbra{\Norm{A^{a} \norbra{q+\frac{x}{2}}-A^{a,r}\norbra{q+\frac{x}{2}}} + \Norm{A^{a} \norbra{q-\frac{x}{2}}-A^{a,r}\norbra{q+\frac{x}{2}}}}\leq
		\\
		&\leq2\int_{-q_0}^{q_0}\de q\int_{-x_0}^{x_0}\de x \; \norbra{\frac{e^{-\frac{q^2+(x/2)^2}{T(\alpha)^2}}}{\sqrt{\pi}T(\alpha)}}|C_{aa}(x)| \,\frac{v_{LR}^r}{r!}\norbra{|q| + \frac{|x|}{2}}^r \leq 2\,\Gamma_{0,a}\frac{1}{\sqrt{2\pi r}}\norbra{\frac{e v_{LR}}{\sqrt{r}}}^r\norbra{|q_0| + \frac{|x_0|}{2}}^r 
	\end{align}
	where in the second line we have taken the superior of the function in the last parenthesis, implicitly used the definition of $\Gamma_{0,a}$, and finally upper-bounded the factorial as $(r!)^{-1} \leq \frac{1}{\sqrt{2\pi r}}\norbra{\frac{e }{\sqrt{r}}}^r$.
	
	The contribution arising from the region $|q|\leq q_0$ and $|x|\geq x_0$ can be bounded using the trivial bound $\Norm{A^{a}(t)-A^{a,r}(t)}\leq 2 $. Then, we obtain:
	\begin{align}
		4\int_{-q_0}^{q_0}\de q\int_{|x|\geq x_0}\de x \;  \norbra{\frac{e^{-\frac{q^2+(x/2)^2}{T(\alpha)^2}}}{\sqrt{\pi}T(\alpha)}}|C_{aa}(x)|\leq 4\,e^{-\frac{(x_0/2)^2}{T(\alpha)^2}}\int_{-\infty}^\infty \de x \; |C_{aa}(x)| \leq 4\,\Gamma_{0,a}\,e^{-\frac{(x_0/2)^2}{T(\alpha)^2}}\,,  
	\end{align}
	where we integrated over $q$, took the maximum of the Gaussian in $x$ over the interval, and finally extended the integration limits. The last interval is $|q|\leq q_0$ and $|x|\geq x_0$. Once again, we use the trivial bound to get:
	\begin{align}
		4\int_{|q|\geq q_0}\de q\int_{-\infty}^\infty\de x \;  \norbra{\frac{e^{-\frac{q^2+(x/2)^2}{T(\alpha)^2}}}{\sqrt{\pi}T(\alpha)}}|C_{aa}(x)|\leq 4 \,\Gamma_{0,a}\, {\rm erfc}\norbra{\frac{q_0}{T(\alpha)}}\leq 4 \,\Gamma_{0,a}\, e^{-\frac{q_0^2}{T(\alpha)^2}}\,,
	\end{align}
	where in the last step we used the bound ${\rm erfc}\norbra{x}\leq e^{-x^2}$, valid for positive $x$.
	Adding everything together we have that:
	\begin{align}
		\norm {  H_{\operatorname{L}}^{a} -  H_{\operatorname{L}}^{a,r}} \leq 2\,\Gamma_{0,a}\norbra{\frac{1}{\sqrt{2\pi r}}\norbra{\frac{e v_{LR}}{\sqrt{r}}}^r\norbra{|q_0| + \frac{|x_0|}{2}}^r + 2 \norbra{e^{-\frac{(x_0/2)^2}{T(\alpha)^2}}+e^{-\frac{q_0^2}{T(\alpha)^2}}}}\,.
	\end{align}
	Setting $q_0 = x_0/2 = \sqrt{\frac{r T(\alpha)}{v_{LR}}}$, and using $ \Gamma_{0,a} \le \Gamma_a$, yields the claim.
\end{proof}

\noindent The Lindbladian defined in \eqref{eq:bathKMS} closely resembles the system's dynamics. This is the main result of \cite{scandi2025thermalizationopenmanybodysystems}, which bounds the smallest distance between the KMS Lindbladian and the real dynamics.

\begin{lemma}\label{le:approxKMS}
	The effective evolution induced on the system $\rho_S(t)= \operatorname{Tr}_B (e^{-i t \alpha^2 H} 
	(\rho_S(0) \otimes \rho_\beta^B) e^{i t \alpha^2 H})$ is such that,  
	\begin{align} \label{eq:closenessKMS}
		&\Norm{	\rho_S(t)-	\rho^{\operatorname{KMS}*}_S(t)}_1 
		\leq  \mathcal{O}\left({ \alpha^3 \,( \Gamma\, t)\Big(\Gamma\tau+(\Gamma\beta )\,e^{(\alpha\,\Gamma\beta)^2}\Big)}\right)\,.
	\end{align}
	where ${\rho}^{\operatorname{KMS}*}_S(t)=e^{t \alpha^2 \mathcal{L}^{\operatorname{MB}}_\alpha}(\rho_S(0))$ with $\mathcal{L}^{\operatorname{MB}}_\alpha$ the Lindbladian with coefficients as in \eqref{eq:bathKMS}, with $T(\alpha)=\frac{1}{2 \alpha \Gamma}\sqrt{2+3  \Gamma \tau}$.
\end{lemma}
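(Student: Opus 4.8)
The plan is to compare the exact reduced dynamics with the semigroup $e^{t\alpha^2\mathcal{L}^{\operatorname{MB}}_\alpha}$ through a weak-coupling (Dyson) expansion, organised by a Duhamel identity so that the comparison reduces to an \emph{instantaneous} bound on the mismatch between the exact generator and $\alpha^2\mathcal{L}^{\operatorname{MB}}_\alpha$; the $1$-norm contractivity of completely positive trace-preserving maps then integrates this over $[0,t]$. Concretely, letting $\Phi_s$ be the exact reduced channel at time $s$, so that $\rho_S(s)=\Phi_s(\rho_S(0))$ and $\rho^{\operatorname{KMS}*}_S(s)=e^{s\alpha^2\mathcal{L}^{\operatorname{MB}}_\alpha}(\rho_S(0))$, one writes $\rho_S(t)-\rho^{\operatorname{KMS}*}_S(t)=\int_0^t e^{(t-s)\alpha^2\mathcal{L}^{\operatorname{MB}}_\alpha}\big(\tfrac{d}{ds}\Phi_s-\alpha^2\mathcal{L}^{\operatorname{MB}}_\alpha\circ\Phi_s\big)(\rho_S(0))\,ds$, hence $\|\rho_S(t)-\rho^{\operatorname{KMS}*}_S(t)\|_1\le\int_0^t\big\|\big(\tfrac{d}{ds}\Phi_s-\alpha^2\mathcal{L}^{\operatorname{MB}}_\alpha\circ\Phi_s\big)(\rho_S(0))\big\|_1\,ds$.

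First I would pass to the interaction picture with respect to $H_S+H_B$ and Dyson-expand $\Phi_s$ in powers of $\alpha$. The order-$\alpha$ term vanishes because the $B^a$ are centred, $\operatorname{Tr}[B^a\rho_\beta^B]=0$; the order-$\alpha^2$ term is the time-non-local (Redfield-type) contribution built from the autocorrelations $C_{aa}(s_1-s_2)$, the cross terms dropping by the hypothesis $C_{ab}\equiv0$ for $a\ne b$, and it splits into a dissipative piece and a Hamiltonian (Lamb-shift) piece --- the latter being what defines the renormalised $H_S^*=H_S+\alpha^2H_{\operatorname{L}}$ of \eqref{eq:LSHam}, so that the Bohr frequencies appearing in $\mathcal{L}^{\operatorname{MB}}_\alpha$ are consistently those of $H_S^*$. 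The order-$\alpha^3$ and higher remainder of $\tfrac{d}{ds}\Phi_s$ is bounded using the \emph{mixing} assumptions of Definition~\ref{def:bath}: rapid decay of $g_a$ and finiteness of $\Gamma_a$, $\Gamma_a\tau_a$, $K_a$ turn the nested time integrals into convergent integrals of $|g_a|$ against low-degree polynomial weights, so the remainder is governed by these bath moments rather than by any operator norm.

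It then remains to identify the coarse-grained order-$\alpha^2$ generator with $\alpha^2\mathcal{L}^{\operatorname{MB}}_\alpha$. Following \cite{scandi2025thermalizationopenmanybodysystems}, averaging the time-non-local order-$\alpha^2$ term over a Gaussian window of width $T(\alpha)$ produces a manifestly completely positive, time-local generator with jump operators $\widehat{A}^a_\alpha(\omega)$ as in \eqref{eq:jumpOperatorsDB}, at the cost of an error governed by the ratio of the bath memory time $\tau$ to $T(\alpha)$; additionally imposing \emph{exact} KMS detailed balance with respect to $e^{-\beta H_S^*}/Z^*$ (rather than the approximate detailed balance that Redfield already satisfies) forces, in Fourier language, evaluation of the width-$T(\alpha)$ filter at an imaginary shift of order $\beta$, multiplying the coefficients by a factor $e^{\Theta(\beta^2/T(\alpha)^2)}$. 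Together these give, once the dynamics has entered the Markovian regime $s\gtrsim T(\alpha)$, a generator mismatch $\big\|\tfrac{d}{ds}\Phi_s-\alpha^2\mathcal{L}^{\operatorname{MB}}_\alpha\circ\Phi_s\big\|=\mathcal{O}\big(\alpha^2\Gamma\,(\tau+\beta\,e^{\Theta(\beta^2/T(\alpha)^2)})/T(\alpha)\big)$ plus the lower-order Dyson remainder, so the Duhamel estimate yields $\|\rho_S(t)-\rho^{\operatorname{KMS}*}_S(t)\|_1=\mathcal{O}\big(\alpha^2\Gamma\,(\tau+\beta\,e^{\Theta(\beta^2/T(\alpha)^2)})\,t/T(\alpha)\big)$. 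Choosing $T(\alpha)$ to balance the two contributions --- the explicit choice being $T(\alpha)=\tfrac{1}{2\alpha\Gamma}\sqrt{2+3\Gamma\tau}\propto\alpha^{-1}$, for which $1/T(\alpha)=\mathcal{O}(\alpha\Gamma)$ and $\beta^2/T(\alpha)^2=\mathcal{O}((\alpha\Gamma\beta)^2)$ --- turns this into $\mathcal{O}\big(\alpha^3(\Gamma t)(\Gamma\tau+\Gamma\beta\,e^{(\alpha\Gamma\beta)^2})\big)$, as claimed. (The quasi-locality of $H_S^*$ from Lemma~\ref{lemma:quasiLocHCG} plays no role here, only in the later spectral-gap estimates.)

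I expect the main obstacle to be exactly this trade-off, through the single parameter $T(\alpha)$, between two errors pulling in opposite directions: the coarse-graining error shrinks as $T(\alpha)$ grows, whereas the analytic continuation needed to make the generator \emph{exactly} KMS symmetric contributes a factor growing like $e^{\beta^2/T(\alpha)^2}$ --- this is the origin of the $e^{(\alpha\Gamma\beta)^2}$ in the statement and is what ultimately forces the high-temperature ($\tau=\mathcal{O}(\beta)$, $\beta$ small) hypotheses used downstream. Secondary technical points: carrying the $\alpha$-dependent Lindbladian $\mathcal{L}^{\operatorname{MB}}_\alpha$, whose Bohr frequencies themselves shift with $\alpha$ through $H_S^*$, consistently through the Duhamel identity while keeping every intermediate map completely positive so that the trace-norm contraction applies; and verifying that the higher Dyson terms are genuinely dominated by the finite bath moments $\Gamma_a$, $\Gamma_a\tau_a$, $K_a$ of Definition~\ref{def:bath} rather than by uncontrolled operator norms.
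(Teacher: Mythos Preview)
The paper does not give its own proof of this lemma: it is quoted as Theorem~14 of \cite{scandi2025thermalizationopenmanybodysystems} (``This is the main result of \cite{scandi2025thermalizationopenmanybodysystems}''), and the paper only records the statement and the optimal choice of $T(\alpha)$. So there is no in-paper argument to compare against line by line.

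Judged on its own, your sketch captures the essential mechanism correctly --- in particular the identification of the two competing errors (coarse-graining over a window $T(\alpha)$ versus the analytic shift by $\beta$ needed for exact KMS symmetry), the resulting factor $e^{\Theta(\beta^2/T(\alpha)^2)}$, and the balancing that fixes $T(\alpha)\propto\alpha^{-1}$ and yields the stated scaling. This matches what the paper and \cite{scandi2025thermalizationopenmanybodysystems} describe, including the intermediate coarse-grained generator $\mathcal{L}^{\operatorname{CG}}_\alpha$ (cf.~the proof of Lemma~\ref{lem.Daviesapprox}, where the same object appears with the bound $\|\rho_S(t)-e^{t\alpha^2\mathcal{L}^{\operatorname{CG}}_\alpha}\rho_S(0)\|_1=\mathcal{O}(\alpha^3(\Gamma t)(\Gamma\tau))$).

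One point to tighten: your Duhamel identity is formally correct, but the integrand $\tfrac{d}{ds}\Phi_s-\alpha^2\mathcal{L}^{\operatorname{MB}}_\alpha\circ\Phi_s$ is not an ``instantaneous generator mismatch'' in any simple sense, because the exact reduced map $\Phi_s$ is not generated by a time-local Lindbladian --- $\tfrac{d}{ds}\Phi_s$ still carries the full bath memory kernel. The route actually taken in \cite{scandi2025thermalizationopenmanybodysystems} (and reflected in the paper's Appendix~\ref{app:davies}) is to insert the coarse-grained generator $\mathcal{L}^{\operatorname{CG}}_\alpha$ as an intermediate step: first bound $\|\rho_S(t)-e^{t\alpha^2\mathcal{L}^{\operatorname{CG}}_\alpha}\rho_S(0)\|_1$ via the Dyson/Born expansion controlled by the bath moments $\Gamma,\Gamma\tau,K$, and then bound $\|\mathcal{L}^{\operatorname{CG}}_\alpha-\mathcal{L}^{\operatorname{MB}}_\alpha\|_{1\to1}$, which is where the $\beta\,e^{(\alpha\Gamma\beta)^2}$ contribution enters. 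Your outline effectively collapses these two steps into one Duhamel estimate; that can be made to work, but as written the phrase ``once the dynamics has entered the Markovian regime $s\gtrsim T(\alpha)$'' hides the place where the finite-memory control (via $\Gamma_a,\Gamma_a\tau_a,K_a$) must be invoked to replace the memory kernel by its Markovian surrogate.
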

This determines the optimal value of the free parameter $T(\alpha)$ at which, given a coupling constant $\alpha$, the KMS Lindbladian more closely resembles the real dynamics. 

In order to connect the gaps of this Lindbladian for different values of $T(\alpha)$, we use Lemma \ref{lem:CKG-monotonicity} as follows.
 Let us now consider $T(\alpha)$ as an arbitrary parameter, and define $\delta= \beta/T(\alpha)$ as above, so that in an evolution that is well approximated by the Lindbladian, $\delta \propto \beta \Gamma \alpha$ will be small. We can write the Dirichlet form of this Lindbladian (see App. \ref{app:KMSsymm}) as: 
\begin{align}
	\Em_{\delta}(X,Y)= \sum_{a,\nu_1,\nu_2}e^{-\frac{(\beta\nu_1-\beta\nu_2)^2}{8\delta^2}} {\widehat{g}}_{a}\norbra{-\nu_1 } {\widehat{g}}_{a}\norbra{-{\nu_2} }\frac{ e^{-\beta(\nu_1+\nu_2)/4}}{2\cosh(\beta(\nu_1-\nu_2)/4)}\langle [A^k_{\nu_1},X],[A^l_{\nu_2},Y]\rangle_{\rho_\beta}.
\end{align}

While for small $\delta$ the real evolution is close to a Lindbladian, we can only prove directly that the Lindbladian is gapped at $\delta'=1$. However, Lemma \ref{lem:CKG-monotonicity} allows us to connect the gap at both points.

In Sec. \ref{sec:delta1} we prove a lower bound of the gap at $\delta'=1$ in the case where the bath is mixing and its temperature $\beta^{-1}$ is larger than some constant depending on the system's parameters, while in Sec.~\ref{sec:delta2} we do the same by expanding around a weakly interacting Hamiltonian.

Given Lemma \ref{lem:CKG-monotonicity}, we then bound the gap for varying $\delta$ given the one at $\delta'=1$, so that, for $\delta \le \delta'$ (see \eqref{eq:prooflemmamb}),
\begin{align}\label{eq:boundsgaps}
	\lambda_\delta\geq \lambda_{\delta'}.
\end{align}
This allows us to establish, for $T(\alpha) \ge \beta$, that
$ \lambda_{\beta/T(\alpha)} \ge 
\lambda = \Omega(1)$, so we also lower bound the gap in the case of $T(\alpha)=\frac{1}{2 \alpha \Gamma}\sqrt{2+3 \Gamma \tau}$ (that is, the choice of observation time in Lemma \ref{le:approxKMS}). Moreover, it also holds that $\norm{\rho^{KMS*}_S(t) - \frac{e^{-\beta H_S^* }}{Z^*}}_1\le 2 \|\rho_\beta^{-1}\| e^{ - \lambda \alpha^2 t }$. Then, using \eqref{eq:distanceGibbs}, Lemma \ref{le:approxKMS} and the triangle inequality allows us to conclude that
\begin{equation}\label{eq:trianglebound}
	\norm{\rho_S(t) - \frac{e^{-\beta H_S }}{Z}}_1\le  2 \|\rho_\beta^{-1}\| e^{ - \lambda \alpha^2 t }+ \mathcal{O}\left({ \alpha^3\,(\Gamma\, t)\Big(\Gamma\tau+(\Gamma\beta )\,e^{(\alpha\,\Gamma\beta)^2}\Big)}\right)+ \alpha^2 (\Gamma\,\beta).
\end{equation}

\noindent By choosing $\alpha = \tilde{\mathcal{O}} \left ( \frac{\epsilon \lambda}{\log \norm{\rho_\beta^{-1}}\Gamma^2 (\tau+\beta)} \right)$ and resolving for the time $t$, we obtain Lemma \ref{lem:bathgap}. Note that $\lambda$ is the gap at $\delta'=1$, which corresponds to $\alpha= \frac{\sqrt{2+3 \Gamma \tau}}{2 \beta \Gamma}$ given the definition of $T(\alpha)$.

In order to prove Theorem \ref{thm:bath}, let us now consider that the number of jump operators is proportional to the system size, $3N$, so that $\Gamma,\tau = \mathcal{O} (N)$ and that $\|\rho_\beta^{-1}\|= e^{\mathcal{O}(N)}.$ Thus, choosing $ \alpha^2 t=\mathcal{O}(N)+\mathcal{O}(\log \epsilon^{-1})$ and $\alpha=\widetilde{\mathcal{O}}(N^{-3}\epsilon )$ we obtain $\norm{\rho_S(t) - \frac{e^{-\beta H_S }}{Z}}_1 \le \epsilon$ as desired. For the setting of perturbations around product Hamiltonians, the same proof also holds given the result in App. \ref{sec:delta2}.

\subsection{Gaps at $\delta'=1$ for high temperatures}\label{sec:delta1}

\noindent Building on the main result of~\cite{rouzé2024optimalquantumalgorithmgibbs}, in this section we prove that for $\delta'=1$ (that is $T(\alpha)= \beta$) the generator associated to the coefficients in  Eq.~\eqref{eq:bathKMS} (that is, the one in \eqref{eq:KMSMB}) is gapped at high temperatures. Let us label it as $\mathcal{L}^\beta$. 

To this end, we need to introduce some notation. First, we need to express the Lindbladian in the time picture. Following~\cite{scandi2025thermalizationopenmanybodysystems}, each term in the sum $\mathcal{L}^\beta=\sum_a \mathcal{L}^\beta_a$  reads: 
\begin{align}\label{eq:timeLind}
	\lind_a^{\beta} [\rho] 
	&=-\frac{i}{2}\sqrbra{H^{a,\beta}_{LS},\rho}+\ \int_{-\infty}^{\infty}\frac{\de\tau_1}{\sqrt{\pi}\, T(\alpha)} \int_{-\infty}^{\infty}\de \tau_2\int_{-\infty}^{\infty}\de\tau_3 \; e^{-\frac{\tau_1^2}{T(\alpha)^2}}{{g}}_{a}\norbra{\tau_2}{{g}}_{a}\norbra{\tau_3} \mathcal{D}^{a}_{(\tau_1+\tau_2),(\tau_1-\tau_3)}[\rho]
\end{align}
where we introduced the dissipator $\mathcal{D}^{a}$:
\begin{align}\label{eq:Dt1t2}
	\mathcal{D}^{a}_{t_1,t_2}[\rho] := \Big(A^a(t_1)\rho A^a(t_2)-\frac{1}{2}\{A^a(t_2)A^a(t_1),\rho\}\Big)\,,
\end{align}
and the Lamb-shift Hamiltonian:
\begin{equation}\label{eq:LSH}
	H^{a,\beta}_{LS} := \int_{-\infty}^{\infty}\frac{\de\tau_1}{\sqrt{\pi}\, T(\alpha)} \int_{-\infty}^{\infty}\de \tau_2\int_{-\infty}^{\infty}\de\tau_3 \;  F(\tau_1){{g}}_{a}\norbra{\tau_2}{{g}}_{a}\norbra{\tau_3} A^{a}(\tau_1-\tau_3)A^{a}(\tau_1+\tau_2)\,,
\end{equation}
where we introduced the function:
\begin{align}\label{eq:LSft}
	F(t) := \frac{2}{\beta}\int_{-\infty}^{\infty}\de x\; {\rm sech}\norbra{\frac{2\pi x}{\beta}}\sin\norbra{\frac{\beta (t-x)}{2 T(\alpha)^2}} e^{-\frac{(t-x)^2}{T(\alpha)^2}}e^{\frac{\beta^2}{16T(\alpha)^2}}\,.
\end{align}
It is important to keep in mind that the evolution of the jump operators is done with respect to the renormalized Hamiltonian
$H_S^*=H_S+\alpha^2 H_{\operatorname{L}}$. Additionally, we highlighted the dependency of the Lindbladian on the inverse temperature $\beta$, which enters through the correlation function of the bath ${{g}}_{a}\norbra{t}$. 

At this point, we need to introduce two additional ingredients. Let $A^{a,r}(t):=e^{i(H_S^*)^{r}t} A^a e^{-i(H_S^*)^{r}t}$ be a modified evolution of the jump operator $A^a$, where now $(H_S^*)^{r}$ only has support on a region of radius $r$ around the support of $A^a$. This makes $A^{a,r}(t)$ strictly local at all times. Then, we define $\mathcal{D}^{a,r}_{t_1,t_2}$ and $(H^{a,\beta}_{LS} )^r$ by replacing $A^a(t)$ in Eq.~\eqref{eq:Dt1t2} and Eq.~\eqref{eq:LSH} with $A^{a,r}(t)$. Finally, $\lind_a^{\beta,r}$ is defined by plugging $\mathcal{D}^{a,r}_{t_1,t_2}$ and $(H^{a,\beta}_{LS} )^r$ into Eq.~\eqref{eq:timeLind}. Studying $\lind_a^{\beta,r}$ will give us information about the locality structure of the full Lindbladian $\lind_a^{\beta}$. It should be noticed that, since $H_{\operatorname{L}}$ is quasi-local, the renormalized Hamiltonian $H_S^*$ satisfies a Lieb-Robinson bound such that
\begin{equation}
	\norm{ A^a(t) - e^{-it (H^*_{S})^r}A^a e^{it (H^*_{S})^r}}\le C e^{-\mu l}\left(e ^{v_{LR}^*t}-1 \right),\label{eq:LRquasilocal}
\end{equation}
with $C,\mu,v_{LR}^*$ are some constants generically depending on the geometry of the problem at hand, and on the details of $H_S^*$.

We also need to introduce the infinite temperature limit of the Lindbladian, which we denote by $\lind_a^{0}$. This reads:
\begin{align}
	\lind^{0}_a [\rho] 
	&=-\frac{i}{2}\sqrbra{H^{a,0}_{LS},\rho}+ \sum_a  \int_{-\infty}^{\infty}\frac{\de\tau_1}{\sqrt{\pi}\, T(\alpha)} \int_{-\infty}^{\infty}\de \tau_2\int_{-\infty}^{\infty}\de\tau_3 \; e^{-\frac{\tau_1^2}{T(\alpha)^2}}{{g}}_{a}\norbra{\tau_2}{{g}}_{a}\norbra{\tau_3} \mathcal{D}^{a}_{0,0}[\rho]\,,
\end{align}
where we introduced the infinite temperature Lamb-shift Hamiltonian:
\begin{align}
	H^{a,0}_{LS} :=\sum_a \int_{-\infty}^{\infty}\frac{\de\tau_1}{\sqrt{\pi}\, T(\alpha)} \int_{-\infty}^{\infty}\de \tau_2\int_{-\infty}^{\infty}\de\tau_3 \;  F(\tau_1){{g}}_{a}\norbra{\tau_2}{{g}}_{a}\norbra{\tau_3} A^{a}A^{a}\,.
\end{align}
If we take the jump operators as Paulis, this is trivially proportional to the identity. In this case $\mathcal{L}_{a}^{0}$ corresponds to the Lindbladian of a single-qubit depolarizing channel, which has rapid mixing with contant decay rate $\kappa_0$.

With this notations in hand, we can now prove that:
\begin{lemma}\label{lem:gapHT}
	For $\beta < \beta^*=\mathcal{O}(1)$ and a mixing bath, the gap of the Lindbladian in Eq.~\eqref{eq:timeLind} has a lower bound $\lambda_{1} \ge c $, where $c$ is a constant independent of system size.
\end{lemma}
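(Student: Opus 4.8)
The plan is to prove Lemma~\ref{lem:gapHT} by adapting the high-temperature arguments of \cite{rouzé2024optimalquantumalgorithmgibbs,rouzé2024efficientthermalizationuniversalquantum}, treating $\mathcal{L}^\beta=\sum_a\mathcal{L}^\beta_a$ as a quasi-local perturbation of its infinite-temperature limit $\mathcal{L}^0=\sum_a\mathcal{L}^0_a$. For single-site Pauli jumps one has $\mathcal{D}^a_{0,0}[\rho]=A^a\rho A^a-\rho$ and $H^{a,0}_{LS}$ proportional to the identity, so each $\mathcal{L}^0_a$ is a single-site primitive (depolarizing-type) generator and $\mathcal{L}^0$ is a sum of mutually commuting single-site generators whose spectral gap is an $N$-independent constant $\kappa_0$. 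The whole point is then to show that this gap is stable under switching on $\beta$, uniformly in $N$, as long as $\beta\le\beta^*$ for some $\beta^*=\Omega(1)$.

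The first step I would carry out is to establish quasi-locality of the finite-temperature terms. Using the Lieb--Robinson bound \eqref{eq:LRquasilocal} for the renormalized Hamiltonian $H_S^*$---which is itself quasi-local by Lemma~\ref{lemma:quasiLocHCG}---I would bound $\|\mathcal{L}^\beta_a-\mathcal{L}^{\beta,r}_a\|_{1\to1}$ by splitting each of the time integrals in \eqref{eq:timeLind} and \eqref{eq:LSH} at a cutoff $t_0$ growing proportionally to $r$: inside the light cone $|t|\le t_0$ the truncation error is exponentially small in $r-v_{LR}^*t_0$, while outside it is $\mathcal{O}\big(\int_{|t|>t_0}|g_a(t)|\,dt\big)$, which decays because the mixing-bath hypothesis of Definition~\ref{def:bath} controls the moments of $g_a$. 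The admissible growth $|t|^{2+2D}$ of the test function $h$ there is exactly what is needed to beat the $r^{D}$ combinatorial weight of a radius-$r$ ball in $D$ dimensions, so the truncation error decays in $r$ fast enough for the subsequent estimates, with all constants uniform in $N$ and in $\beta\le\beta^*$.

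The second step is to quantify closeness to infinite temperature: by Definition~\ref{def:bath}(1) the bath memory time at high temperature is $\tau=\mathcal{O}(\beta)$, and at $T(\alpha)=\beta$ (i.e. $\delta'=1$) all the auxiliary factors appearing in \eqref{eq:timeLind}--\eqref{eq:LSft}, such as $e^{\beta^2/16\,T(\alpha)^2}=e^{1/16}$ and the Gaussians $e^{-(\beta(\nu_1-\nu_2))^2/4}$ in \eqref{eq:bathKMS}, are $\mathcal{O}(1)$; hence $\mathcal{D}^a_{(\tau_1+\tau_2),(\tau_1-\tau_3)}$ and $H^{a,\beta}_{LS}$ converge to their $\tau=0$ values and $\|\mathcal{L}^{\beta,r}_a-\mathcal{L}^{0,r}_a\|_{1\to1}=\mathcal{O}(\beta)$, uniformly in $N$ and $r$. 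With quasi-locality and this near-depolarizing structure in hand, the third step is to feed them into the oscillator-norm contraction argument of \cite{rouzé2024optimalquantumalgorithmgibbs} (or, alternatively, the approximate-tensorization argument of \cite{rouzé2024efficientthermalizationuniversalquantum} together with the exponential decay of correlations of the high-temperature Gibbs state $\rho_\beta$), which turns a constant gap for $\mathcal{L}^0$ into a lower bound $\lambda_1\ge\kappa_0-\mathcal{O}(\beta)\ge c$ valid for all $\beta\le\beta^*$, with $\beta^*$ and $c$ depending only on $\kappa_0$, $v_{LR}^*$, the locality parameters, and the mixing-bath constants.

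The hard part will be handling the renormalization $H_S\to H_S^*=H_S+\alpha^2H_{\operatorname{L}}$ inside the Heisenberg evolution of the jump operators: one must make sure that $H_S^*$ still obeys a Lieb--Robinson bound with $N$-independent velocity (this is where Lemma~\ref{lemma:quasiLocHCG} and \eqref{eq:LRquasilocal} are used) and that the $\alpha^2H_{\operatorname{L}}$ correction does not spoil the high-temperature expansion. Beyond that, the genuine labour is bookkeeping: verifying that every $\beta$- and $N$-independent quantity generated by the triple time integrals in \eqref{eq:timeLind} is finite under the mixing-bath hypotheses, and then propagating these constants through the argument of \cite{rouzé2024optimalquantumalgorithmgibbs} to pin down an explicit threshold $\beta^*=\mathcal{O}(1)$ and an explicit constant $c$.
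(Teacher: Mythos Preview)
Your proposal is correct and follows essentially the same route as the paper's proof: both reduce to the oscillator-norm contraction argument of \cite{rouzé2024optimalquantumalgorithmgibbs} by establishing (i) a quasi-locality estimate for $\mathcal{L}^\beta_a$ via Lieb--Robinson bounds for $H_S^*$ and the decay of $g_a$, and (ii) a high-temperature bound $\|\mathcal{L}^\beta_a-\mathcal{L}^0_a\|=\mathcal{O}(\beta)$ using $\tau_a=\mathcal{O}(\beta)$ from the mixing-bath hypothesis, with $\mathcal{L}^0_a$ the single-site depolarizing generator. The paper packages the locality estimate as a telescoping shell bound $\sum_{r\ge r_0}\|\mathcal{L}_a^{\beta,r\dagger}-\mathcal{L}_a^{\beta,r-1\dagger}\|_{\infty\to\infty}\le \Delta(r_0)$ (Lemma~\ref{le:quasiLL}) rather than a single tail bound, and then extracts the spectral gap from the resulting rapid-mixing rate via \cite[Lemma~6]{temme2015fast}; these are minor technical refinements of exactly the strategy you outline.
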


The proof closely follows the one presented in~\cite{rouzé2024optimalquantumalgorithmgibbs} (see Appendices B.1 and B.2 in \cite{rouzé2024optimalquantumalgorithmgibbs}), and it follows from:

	\begin{lemma} \label{le:quasiLL}
	Given the Lindbladian in Eq.~\eqref{eq:timeLind}, one can prove the following two bounds:
	\begin{enumerate}
		\item Let $r_0$ be big enough so that $e^{-\frac{\mu r_0}{2}}\leq 1$, where $\mu$ is defined in Eq.~\eqref{eq:LRquasilocal}. It holds that:\label{it:quasiLL}
		\begin{align}
			\sum_{r\ge r_0}\|\mathcal{L}_{a}^{\beta,r\dagger}\!\!-\mathcal{L}_{a}^{\beta,r-1\dagger }\|_{\infty\to\infty}\!&\le \Delta(r_0)\,,\,,\label{eq:LemmalocLindbound}
		\end{align}
		with $\Delta(r_0) := \sum_{r \ge r_0} \phi(r)$, where, defining $r_0^*=v^*_{LR}T(\alpha)$,
		\begin{align}\label{eq:phiDef}
			\phi(r) :=& \norbra{4 + \frac{\beta e^{\frac{\beta^2}{16 T(\alpha)^2}}}{\sqrt{\pi} \,T(\alpha)}} \norbra{\frac{4 \Gamma_a(\eta_h+1)}{\left|h\norbra{\frac{\mu (r-1) T(\alpha)}{4\,r_0^*}}\right|+1}+4 \Gamma_a C e^{-\mu\frac{r-1}{2}}} \\ \nonumber &+16 \Gamma_a  e^{-\frac{\mu (r-1)}{4 r_0^*}}\norbra{ 1 + \frac{ \, T(\alpha)\,}{\beta} \,\norbra{1+e^{\frac{T^2}{4\beta^2}}e^{-(r-1)\frac{ \mu( \,T(\alpha)-\beta)}{4 \beta \,r_0^*}}}}\,.
		\end{align}
		\item The difference between the Lindbladian at a finite temperature and at $\beta=0$ is such that:\label{it:HT}
		\begin{align}
			\|\mathcal{L}_a^{\beta \dagger}-\mathcal{L}_{a}^{0 \dagger }\|_{\infty\to\infty}&\le \eta(\beta) \,,
		\end{align}
		where:
		\begin{align}
			\eta(\beta):= 2\,J\,\Gamma_a\norbra{\frac{2\, T(\alpha)}{\sqrt{\pi}} +4  \,\beta\norbra{e^{\frac{T(\alpha)^2}{4 \beta ^2}}+\norbra{\frac{T(\alpha)}{\beta}}^2}+ \norbra{1+\frac{\beta e^{\frac{\beta^2}{16 T(\alpha)^2}}}{4\sqrt{\pi} \,T(\alpha) }}\tau_a}\,,\label{eq:etaExpr}
		\end{align}
		and we introduced the constant $J= \mathcal{O}(1)$ satisfying $\|[H_S^*,A_a]\| \leq J$.
	\end{enumerate}
\end{lemma}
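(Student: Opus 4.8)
\emph{Overview of the approach.} Both inequalities will be proven by tracking how a localized modification of the Heisenberg evolution $t\mapsto A^a(t)$ propagates through the triple time integrals defining the dissipator \eqref{eq:Dt1t2} and the Lamb-shift Hamiltonian \eqref{eq:LSH}--\eqref{eq:LSft} inside \eqref{eq:timeLind}. For item \ref{it:quasiLL} the modification is replacing $(H_S^*)^r$ by $(H_S^*)^{r-1}$ inside all propagators; for item \ref{it:HT} it is letting $\beta\to 0$, which---recalling that at $\delta'=1$ one has $T(\alpha)=\beta$, and that Definition \ref{def:bath} forces $\tau_a=\mathcal O(\beta)$ at small $\beta$---drives all three integration times $\tau_1,\tau_2,\tau_3$ to zero. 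The operator-level inputs are: the quasi-local Lieb--Robinson bound \eqref{eq:LRquasilocal}, which applies because Lemma \ref{lemma:quasiLocHCG} makes $H_S^*$ quasi-local; the Lipschitz estimate $\|A^a(t)-A^a\|\le |t|\,\|[H_S^*,A^a]\|\le J|t|$ together with $\|A^{a,r}(t)\|=1$; the mixing-bath tail and first-moment estimates $\int_{|\tau|>t^*}|g_a(\tau)|\,\de\tau\le \sqrt{\Gamma}\,\eta_h\,h(t^*)^{-1}$ and $\int |\tau|\,|g_a(\tau)|\,|g_a(\tau')|\,\de\tau\,\de\tau'\propto \Gamma_a\tau_a$; and elementary Gaussian/Laplace integrals, e.g. $\int\tfrac{\de\tau}{\sqrt\pi\,T}e^{-\tau^2/T^2}e^{c|\tau|}\le 2e^{c^2T^2/4}$.

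\emph{Item \ref{it:quasiLL}.} Fix $r$ and split $\mathcal L_a^{\beta,r\dagger}-\mathcal L_a^{\beta,r-1\dagger}$ into its coherent part $-\tfrac{i}{2}[(H^{a,\beta}_{LS})^r-(H^{a,\beta}_{LS})^{r-1},\,\cdot\,]$ and its dissipative part, an integral over $(\tau_1,\tau_2,\tau_3)$ of $\tfrac{1}{\sqrt\pi T}e^{-\tau_1^2/T^2}g_a(\tau_2)g_a(\tau_3)$ times $\mathcal D^{a,r}_{(\tau_1+\tau_2),(\tau_1-\tau_3)}-\mathcal D^{a,r-1}_{(\tau_1+\tau_2),(\tau_1-\tau_3)}$. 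In both parts the integrand's $\infty\to\infty$ norm is controlled, using $\|A^{a,r}(t)\|=1$, by $\|A^{a,r}(t)-A^{a,r-1}(t)\|$ for $t\in\{\tau_1\pm\tau_2,\tau_1\pm\tau_3\}$, and by the triangle inequality with \eqref{eq:LRquasilocal} this is at most $2C\,e^{-\mu(r-1)}(e^{v_{LR}^*|t|}-1)$ and also at most $2$. I then split each time integral over the bath variables $\tau_2,\tau_3$ at a threshold $t^*(r)\propto\mu(r-1)/v_{LR}^*$: on $|\tau_i|\le t^*(r)$ the Lieb--Robinson bound produces a term exponentially small in $r$ (of order $\Gamma_a C\,e^{-\mu(r-1)/2}$), and on $|\tau_i|>t^*(r)$ the crude bound $2$ combined with the mixing-bath tail estimate produces the term of order $\Gamma_a(\eta_h+1)\big(|h(\mu(r-1)T/(4r_0^*))|+1\big)^{-1}$; carrying out the residual $\tau_1$-integral of $e^{-\tau_1^2/T^2}e^{v_{LR}^*|\tau_1|}$ in the dissipative part (a constant) and of $|F(\tau_1)|$ in the coherent part (bounded by $\tfrac{\beta\,e^{\beta^2/(16T^2)}}{\sqrt\pi T}$ via $|\sin|\le 1$ and $\int\mathrm{sech}(u)\,\de u=\pi$), and collecting the $v_{LR}^*$- and $T(\alpha)$-dependent Gaussian factors, assembles $\phi(r)$ of \eqref{eq:phiDef}. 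Summing over $r\ge r_0$ yields $\Delta(r_0)$; the sum converges (and $\Delta(r_0)\to0$) once $h$ is taken, say, to be $h(t)=1+t^2$, which is admissible in Definition \ref{def:bath}, makes $\sum_r h(c\,r)^{-1}<\infty$, and keeps $\eta_h<\infty$.

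\emph{Item \ref{it:HT}.} From \eqref{eq:timeLind}, $\mathcal L_a^{0\dagger}$ is obtained from $\mathcal L_a^{\beta\dagger}$ by replacing $\mathcal D^a_{(\tau_1+\tau_2),(\tau_1-\tau_3)}$ with $\mathcal D^a_{0,0}$ in the dissipative part and $A^a(\tau_1-\tau_3)A^a(\tau_1+\tau_2)$ with $A^aA^a$ in $H^{a,\beta}_{LS}$. In the dissipative part, $\|\mathcal D^a_{t_1,t_2}-\mathcal D^a_{0,0}\|_{\infty\to\infty}\le 2(\|A^a(t_1)-A^a\|+\|A^a(t_2)-A^a\|)\le 2J(|t_1|+|t_2|)\le 2J(2|\tau_1|+|\tau_2|+|\tau_3|)$; integrating against $\tfrac{1}{\sqrt\pi T}e^{-\tau_1^2/T^2}g_a(\tau_2)g_a(\tau_3)$ and using the Gaussian first moment ($\propto T(\alpha)$) and the bath first moment ($\propto\Gamma_a\tau_a$) gives the $\tfrac{2T(\alpha)}{\sqrt\pi}$ and $\tau_a$ contributions to $\eta(\beta)$. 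In the coherent part, $\|A^a(\tau_1-\tau_3)A^a(\tau_1+\tau_2)-A^aA^a\|\le J(|\tau_1-\tau_3|+|\tau_1+\tau_2|)$, and one bounds $\int\tfrac{\de\tau_1}{\sqrt\pi T}|F(\tau_1)|(|\tau_1|+\cdots)$ from \eqref{eq:LSft} by replacing $|\sin(y)|$ with $e^{|y|}$ for $y=\beta(t-x)/(2T^2)$ (giving $e^{\beta^2/(16T^2)}$) and performing the residual Gaussian integrals in $t-x$ (giving $e^{T(\alpha)^2/(4\beta^2)}$ and the $\beta(T(\alpha)/\beta)^2$ term); multiplying by $2J\Gamma_a$ and adding up yields $\eta(\beta)$ of \eqref{eq:etaExpr}.

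\emph{Main obstacle.} The crux is item \ref{it:quasiLL}: the splitting time $t^*(r)$ and the free function $h$ must be chosen so that the exponentially decaying Lieb--Robinson contribution and the merely polynomially decaying mixing-bath tail are simultaneously summable in $r$. The subtlety is that the effective decay rate in $r$ is controlled by $r_0^*=v_{LR}^*T(\alpha)$, which worsens as the observation time grows---this is precisely why the lemma (and hence Lemma \ref{lem:gapHT}) is established at $\delta'=1$, where $T(\alpha)=\beta=\mathcal O(1)$ keeps $r_0^*=\mathcal O(1)$ and makes $\Delta(r_0)$ finite with $\Delta(r_0)\to0$. A secondary technical point is the careful Laplace estimation of the kernel $F$ in \eqref{eq:LSft}, whose competing $\sin$, Gaussian, and $\mathrm{sech}$ factors account for the unusual $e^{T(\alpha)^2/(4\beta^2)}$ and $e^{\beta^2/(16T(\alpha)^2)}$ terms in $\phi$ and $\eta$.
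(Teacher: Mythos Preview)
Your strategy---split into dissipative and Lamb-shift pieces, control each by $\|A^{a,r}(t)-A^{a,r-1}(t)\|$ (item \ref{it:quasiLL}) or $\|A^a(t)-A^a\|\le J|t|$ (item \ref{it:HT}), then trade the Lieb--Robinson bound against tail estimates---is exactly the paper's. The substantive difference in item \ref{it:quasiLL} is that the paper splits \emph{both} the Gaussian variable $\tau_1$ and the bath variable (their $x$ and $q$, both cut at $\mu(r-1)/(4v_{LR}^*)$), producing three regions, whereas you split only the bath variables $\tau_2,\tau_3$. The second line of $\phi(r)$ in \eqref{eq:phiDef} is precisely the large-$|\tau_1|$ contribution: for the dissipator it is the Gaussian tail $e^{-x_0^2/T^2}$, and for the Lamb shift it comes from the pointwise bound $|F(x)|\le \tfrac{4\sqrt\pi T}{\beta}e^{-|x|/T}\big(1+e^{T^2/(4\beta^2)}e^{-|x|(T-\beta)/(\beta T)}\big)$, obtained from $\mathrm{sech}\le e^{-|\cdot|}$ and $|\sin|\le 1$. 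Your one-variable splitting cannot generate that line.

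There is also a small gap in your inner-region Lamb-shift estimate. On $|\tau_2|,|\tau_3|\le t^*(r)$ the Lieb--Robinson bound still leaves a factor $e^{v_{LR}^*|\tau_1|}$, so the residual integral should be $\int\tfrac{\de\tau_1}{\sqrt\pi T}|F(\tau_1)|\,e^{v_{LR}^*|\tau_1|}$, not $\int\tfrac{\de\tau_1}{\sqrt\pi T}|F(\tau_1)|$ as you write. That integral does converge (the Gaussian inside $F$ eventually wins), but only under the extra constraint $v_{LR}^*\beta<\pi$ coming from the $\mathrm{sech}$ tail, and it yields an $r_0^*$-dependent constant rather than $\tfrac{\beta e^{\beta^2/(16T^2)}}{\sqrt\pi T}$. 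The paper's additional $\tau_1$-cutoff avoids this entirely and is what recovers the stated $\phi(r)$. Your treatment of item \ref{it:HT} matches the paper's up to the way $\int|F(\tau_1)|\,|\tau_1|\,\de\tau_1$ is estimated (the paper uses the pointwise exponential bound on $|F|$ rather than $|\sin|\le e^{|\cdot|}$).
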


\begin{proof}[Proof of Lemma~\ref{lem:gapHT}] Suppose one could find two functions $\Delta(r_0)$ and $\eta(\beta)$ such that 
	\begin{enumerate}
		\item $\Delta(r_0)$ decays faster than $1/r_0^{2D+1}$ in a $D-$dimensional system;
		\item and $\eta(\beta)=\mathcal{O}(\beta)$;
	\end{enumerate}
and that satisfy the bounds:
\begin{align}
	\sum_{r\ge r_0}\|\mathcal{L}_a^{\beta,r \dagger}\!\!-\mathcal{L}_{a}^{\beta,r-1\dagger }\|_{\infty\to\infty}\!&\le \Delta(r_0)\,,\label{eq:locLindbound}\\
	\|\mathcal{L}_a^{\beta \, \dagger}-\mathcal{L}_{a}^{0 \, \dagger }\|_{\infty\to\infty}&\le \eta(\beta)\,.\label{eq:highTemperatureRed}
\end{align}
Then, in~\cite{rouzé2024optimalquantumalgorithmgibbs} it was shown that rapid mixing below a finite threshold temperature $\beta^*$ follows from the fact that  $\mathcal{L}_{a}^{0}$  has constant gap $\kappa_0$ (as it corresponds to a single-qubit depolarizing channel). 

The existence of such functions $\Delta(r_0)$ and $\eta(\beta)$ follows from Lemma~\ref{le:quasiLL}. Choosing $T(\alpha)=\beta$ and using the mixing property of the bath, we have
	\begin{align}
		\eta(\beta):= \mathcal{O}\norbra{ J \Gamma_a (\beta + \tau_a)} = \mathcal{O}(J \Gamma_a \beta),\label{eq:d34}
	\end{align}
	and $\Delta(r_0) = \sum_{r\ge r_0} \phi(r)$ where
	\begin{align}
		\phi(r) :=&  \mathcal{O}\norbra{\frac{\Gamma_a \eta_h}{\left|h\norbra{\frac{\mu (r-1) }{4\,v_{LR}^*}}\right|+1}} +  \mathcal{O}\norbra{ \Gamma_a e^{- \frac{\mu(r-1)}{4 \beta v_{LR}^* }}},
	\end{align}
	which decays as $h(t)^{-1}$ (which can be as fast as $\widehat g_a(t)$, considering Def. \ref{def:bath}) with an exponentially small correction.
	
	Using the result shown in appendix B.1 in \cite{rouzé2024optimalquantumalgorithmgibbs} , we have that the Lindbladian in Eq.~\eqref{eq:timeLind} has rapid mixing $\norm{\rho^{KMS*}_S(t) - \frac{e^{-\beta H_S^*}}{Z^*}}_1\le 4n e^{-(\kappa_0- \kappa) \alpha^2 t}$, where
	\begin{equation}
		\kappa = 4(2r_0+1)^{2D}\,\eta(\beta)\label{boundcrude}+L(r_0),
	\end{equation}
	where we define 
	\begin{equation}
		L(r_0)=5  (2r_0+1)^{2D} \Delta(r_0) +\left( 5+2r_0+2(2r_0+1)^{D} \right)\sum_{l \ge r_0} (2l+1)^{2D-1}\Delta(l) + 2 \sum_{l'\ge r_0} \sum_{l \ge l'} (2l+1)^{2D-2} \Delta(l).
	\end{equation}
	This function decays quickly with $r_0$ given the constraint on $h(t)$ in Def. \ref{def:bath}. Thus, we can choose $r_0,\beta$ in such a way that $L(r_0) < \frac{1}{4} \kappa_0$. This fixes $r_0$ as a constant, so that we can choose $\beta^*$ such that $4(2r_0+1)^{2D}\,\eta(\beta^*)<\frac{1}{4} \kappa_0$. This sets the expression for $\beta^*$, depending on $h(t)$ and constants $\mu, J , v_{LR}^*$ and both $\max_a \vert \Gamma_a \vert$ and $\max_a \vert \tau_a \vert$.
	
	Rapid mixing with a particular rate (in this case, $\frac{1}{2}\mu_0$)  implies that rate bounds the spectral gap by Lemma 6 in \cite{temme2015fast}. Thus, we conclude that $\lambda_1 \ge \frac{1}{2}\mu_0$ for $\beta <\beta^*=\mathcal{O} \norbra{1}$.
\end{proof}

We are left with proving Lemma~\ref{le:quasiLL}, which we do in the rest of the section: 
\begin{proof}[Proof of Lemma~\ref{le:quasiLL} (Item~\ref{it:quasiLL})] 
	Let us begin by noticing that the terms in Eq.~\eqref{eq:LemmalocLindbound} can be divided in dissipator and Lamb-shift rotation as:
	\begin{align}\label{eq:Linddist}
		\|\mathcal{L}_a^{\beta,r\dagger}\!\!-\mathcal{L}_{a}^{\beta,r-1\dagger }\|_{\infty\to\infty} \leq 2\, \|(H^{a,\beta}_{LS})^r-(H^{a,\beta}_{LS})^{r-1}\| + \|\mathcal{D}_a^{\beta,r\dagger}\!\!-\mathcal{D}_{a}^{\beta,r-1\dagger }\|_{\infty\to\infty}\,.
	\end{align}
	In what follows, we drop the index $a$ from the jump operators for simplicity of notation. We begin by analyzing the second term in the equation. To avoid cluttering the exposition, let us introduce:
	\begin{align}
		\mathcal{D}^{A,B}_{t_1,t_2}[\rho] := \Big(A(t_1)\rho B(t_2)-\frac{1}{2}\{B(t_2)A(t_1),\rho\}\Big)\,,
	\end{align}
	where $A$ and $B$ are two Hermitian operators. Then, a simple triangle inequality gives:
	\begin{align}
		\|(\mathcal{D}^{A^{r},A^{r}}_{t_1,t_2})^\dagger - (\mathcal{D}^{A^{r-1},A^{r-1}}_{t_1,t_2})^\dagger \|_{\infty\to\infty} \leq \|(\mathcal{D}^{A^{r},A^{r}}_{t_1,t_2})^\dagger - (\mathcal{D}^{A^{r},A^{r-1}}_{t_1,t_2})^\dagger \|_{\infty\to\infty}+\|(\mathcal{D}^{A^{r},A^{r-1}}_{t_1,t_2})^\dagger - (\mathcal{D}^{A^{r-1},A^{r-1}}_{t_1,t_2})^\dagger \|_{\infty\to\infty}\,,\label{eq:tria29}
	\end{align}
	where the single term can be bounded as:
	\begin{align}  \nonumber
		\|(\mathcal{D}^{A^{r},A^{r}}_{t_1,t_2})^\dagger - (\mathcal{D}^{A^{r},A^{r-1}}_{t_1,t_2})^\dagger \|_{\infty\to\infty} :&= \sup_{\|X\|=1} \, \Norm{A^r(t_1)X (A^r(t_2)-A^{r-1}(t_2))-\frac{1}{2}\{A^r(t_1)(A^r(t_2)-A^{r-1}(t_2)),X\}}\leq
		\\
		&\leq\,2\|(A^r(t_2)-A^{r-1}(t_2))\|\,,
	\end{align}
	and similarly for the second term in Eq.~\eqref{eq:tria29}. This gives us:
	\begin{align}
		& \|\mathcal{D}_a^{\beta,r\dagger}\!\!-\mathcal{D}_{a}^{\beta,r-1\dagger }\|_{\infty\to\infty} \noindent 
		\\
		&\leq 2
		\ \int_{-\infty}^{\infty}\frac{\de\tau_1}{\sqrt{\pi}\, T(\alpha)} \int_{-\infty}^{\infty}\de \tau_2\int_{-\infty}^{\infty}\de\tau_3 \; e^{-\frac{\tau_1^2}{T(\alpha)^2}}|{{g}}_{a}\norbra{\tau_2}\|{{g}}_{a}\norbra{\tau_3}| \bigg(\|(A^r(\tau_1+\tau_2)-A^{r-1}(\tau_1+\tau_2))\|+\nonumber
		\\
		&\qquad\qquad\qquad\qquad\qquad\qquad\qquad\qquad\qquad\qquad\qquad\qquad\qquad\qquad\qquad+\|(A^r(\tau_1-\tau_3)-A^{r-1}(\tau_1-\tau_3))\|\bigg)\,.\label{eq:boundDissLoc}
	\end{align}
	We begin by analyzing the first line (the second can be done similarly). First, we can carry out the integral over $\tau_3$, which gives a contribution of $\sqrt{\Gamma_a}$. Then, let us change variables $\tau_1 = x$ and $\tau_2 = q$, and introduce the two positive constants $q_0$ and $x_0$. In analogy with Lemma~\ref{lemma:quasiLocHCG}, we divide the integration in three regions: the region in which $|q|\leq q_0$ and $|x|\leq x_0$ (for some arbitrary positive $q_0,\,x_0$); the region in which $|q|\leq q_0$ and $|x|\geq x_0$; and finally the region in which $|q|\geq q_0$. 
	
	The integral over the first region can be bounded using the Lieb-Robinson bound in Eq.~\eqref{eq:LRquasilocal}, that is:
	\begin{align}
		2C e^{-\mu (r-1)}\int_{-x_0}^{x_0}\frac{\de x}{\sqrt{\pi}\, T(\alpha)}\int_{-q_0}^{q_0}\de q\; e^{-\frac{x^2}{T(\alpha)^2}}|{{g}}_{a}\norbra{q}||e ^{v_{LR}^*|q+x|}-1| \leq 2C \sqrt{\Gamma_a}\,e^{-\mu (r-1)}|e ^{v_{LR}^*(|q_0|+|x_0)|}-1|\,.
	\end{align}
	The initial factor $2$ arises from comparing both $A^r(\tau_1+\tau_2)$ and $A^{r-1}(\tau_1+\tau_2)$ with $A(\tau_1+\tau_2)$.
	On the other hand, the integral over the second region gives (using the trivial bound):
	\begin{align}
		4\int_{|x|\geq x_0}\frac{\de x}{\sqrt{\pi}\, T(\alpha)}\int_{-\infty}^{\infty}\de q\; e^{-\frac{x^2}{T(\alpha)^2}}|{{g}}_{a}\norbra{q}|\leq 4\sqrt{\Gamma_a}{\rm erfc}\norbra{\frac{x_0}{T(\alpha)}} \leq 4\sqrt{\Gamma_a} \,e^{-\frac{x_0^2}{T(\alpha)^2}}\,.
	\end{align}
	Finally, we also have:
	\begin{align}
		4\int_{-\infty}^\infty\frac{\de x}{\sqrt{\pi}\, T(\alpha)}\int_{|q|\geq q_0}\de q\; e^{-\frac{x^2}{T(\alpha)^2}}|{{g}}_{a}\norbra{q}|\norbra{\frac{|h(q)|+1}{|h(q)|+1}}\leq 4\sqrt{\Gamma_a}\norbra{\frac{\eta_h+1}{|h(q_0)|+1}}\,.
	\end{align}
	The same procedure also applies to the second term in Eq.~\eqref{eq:boundDissLoc}, giving exactly the same bounds. Summing everything up, choosing $q_0=x_0 = \frac{\mu (r-1)}{4v_{LR}^*}$, and assuming $r$ is big enough so that $e^{-\frac{\mu (r-1)}{2}}\leq 1$, we get:
	\begin{align}\label{eq:distD}
		\|\mathcal{D}_a^{\beta,r\dagger}\!\!-\mathcal{D}_{a}^{\beta,r-1\dagger }\|_{\infty\to\infty} \leq 16 \Gamma_a \norbra{ C e^{-\mu\frac{r-1}{2}}+e^{-\frac{\mu^2(r-1)^2}{16(r_0^*)^2}} + \frac{\eta_h+1}{\left|h\norbra{\frac{\mu (r-1) T(\alpha)}{4\,r_0^*}}\right|+1}}
	\end{align}
	where we have introduced $r_0^* := \norbra{ v_{LR}^* T(\alpha)}$.
	
	We now estimate the first term in Eq.~\eqref{eq:Linddist}, the norm-distance between the two Lamb-shift Hamiltonians. Using the same decomposition as in Eq.~\eqref{eq:LRlocal} we get:
	\begin{align}
		&\|(H^{a,\beta}_{LS})^r-(H^{a,\beta}_{LS})^{r-1}\|\leq
		\\
		&\leq \int_{-\infty}^{\infty}\frac{\de\tau_1}{\sqrt{\pi}\, T(\alpha)} \int_{-\infty}^{\infty}\de \tau_2\int_{-\infty}^{\infty}\de\tau_3 \;  |F(\tau_1)||{{g}}_{a}\norbra{\tau_2}||{{g}}_{a}\norbra{\tau_3}| \bigg(\|A^r_{}(\tau_1-\tau_3)-A^{r-1}_{}(\tau_1-\tau_3)\|+\nonumber
		\\
		&\qquad\qquad\qquad\qquad\qquad\qquad\qquad\qquad\qquad\qquad\qquad\qquad\qquad\qquad\qquad+ \|A^{r}_{}(\tau_1+\tau_2)-A^{r-1}_{}(\tau_1+\tau_2)\|\bigg)\,.
	\end{align}
	We now apply the same procedure as above. We first analyze the norm in the first line, and divide the integral in three parts. The first one gives:
	\begin{align}
		2C e^{-\mu (r-1)}\int_{-x_0}^{x_0}\frac{\de x}{\sqrt{\pi}\, T(\alpha)}\, \int_{-q_0}^{q_0}\de q\; |F(x)||{{g}}_{a}\norbra{q}||e ^{v_{LR}^*|q+x|}-1| \leq C \sqrt{\Gamma_a}\,e^{-\mu (r-1)}|e ^{v_{LR}^*(|q_0|+|x_0)|}-1| \frac{\beta e^{\frac{\beta^2}{16 T(\alpha)^2}}}{\sqrt{\pi} \,T(\alpha)}\,,
	\end{align}
	where we used the fact that:
	\begin{align}
		\int_{-\infty}^{\infty}\frac{\de x}{\sqrt{\pi}\, T(\alpha)} \;|F(x)| \leq  \frac{\beta e^{\frac{\beta^2}{16 T(\alpha)^2}}}{2\sqrt{\pi} \,T(\alpha)}\,.
	\end{align}
	For the next bound, we also need the fact that:
	\begin{align}
		|F(x)| \leq \frac{4\sqrt{\pi}\, T(\alpha)}{\beta} \,e^{-\frac{|x|}{T(\alpha)}}\norbra{1+e^{\frac{T(\alpha)^2}{4\beta^2}}e^{-|x|\frac{(\beta \,T(\alpha)-\beta^2)}{\beta^2 \,T(\alpha)}}}\,,\label{eq:uppBoundFX}
	\end{align}
	which can be verified by using ${\rm sech}(x)\leq e^{-|x|}$ and $\sin(x)\leq 1$ in Eq.~\eqref{eq:LSft}. This gives us the estimate:
	\begin{align}
		4\int_{|x|\geq x_0}\frac{\de x}{\sqrt{\pi}\, T(\alpha)}\int_{-\infty}^{\infty}\de q\;|F(x)||{{g}}_{a}\norbra{q}| \leq \frac{16\, T(\alpha)\sqrt{\Gamma_a}}{\beta} \,e^{-\frac{x_0}{T(\alpha)}}\norbra{1+e^{\frac{T^2}{4\beta^2}}e^{-x_0\frac{(\beta \,T(\alpha)-\beta^2)}{\beta^2 \,T(\alpha)}}}.
	\end{align}
	Finally, the last integral can also be bounded in a similar fashion:
	\begin{align}
		4\int_{-x_0}^{x_0}\frac{\de x}{\sqrt{\pi}\, T(\alpha)}\int_{|q|\geq q_0}\de q\;|F(x)||{{g}}_{a}\norbra{q}|\norbra{\frac{|h(q)|+1}{|h(q)|+1}}\leq \frac{2\sqrt{\Gamma_a}\,\beta e^{\frac{\beta^2}{16 T(\alpha)^2}}}{\sqrt{\pi} \,T(\alpha)}\norbra{\frac{\eta_h+1}{|h(q_0)|+1}}\,.
	\end{align}
	Putting everything together, choosing $q_0=x_0 = \frac{\mu (r-1)}{4v_{LR}^*}$, and assuming $e^{-\frac{\mu (r-1)}{2}}\leq 1$, we have:
	\begin{align} \label{eq:distH}
		\|(H^{a,\beta}_{LS})^r-(H^{a,\beta}_{LS})^{r-1}\|\leq\frac{4\Gamma_a\beta e^{\frac{\beta^2}{16 T(\alpha)^2}}}{\sqrt{\pi} \,T(\alpha)}\Bigg (& C e^{-\mu\frac{r-1}{2}}\,+ \norbra{\frac{\eta_h+1}{\left|h\norbra{\frac{\mu (r-1) T(\alpha)}{4\,r_0^*}}\right|+1}}+
		\\ \nonumber
		&\qquad\qquad+\frac{4\sqrt{\pi}\, T(\alpha)^2\,e^{-\frac{\beta^2}{16 T(\alpha)^2}}}{\beta^2} \,e^{-\frac{\mu (r-1)}{4 r_0^*}}\norbra{1+e^{\frac{T^2}{4\beta^2}}e^{-(r-1)\frac{ \mu( \,T(\alpha)-\beta)}{4 \beta \,r_0^*}}}\Bigg )\,.
	\end{align}
	Then by adding \eqref{eq:distD} and \eqref{eq:distH}, we obtain the expression for $\phi$ in~\eqref{eq:phiDef}.
	\end{proof}
	\begin{proof}[Proof of Lemma~\ref{le:quasiLL} (Item~\ref{it:HT})] 
		Again, we consider the dissipator and the unitary rotation separately. Following the same steps that led to Eq.~\eqref{eq:boundDissLoc}, we get:
		\begin{align}
			& \|\mathcal{D}_a^{\beta\dagger}\!\!-\mathcal{D}_{a}^{0\dagger }\|_{\infty\to\infty} \leq\nonumber
			\\
			&\leq 2
			\ \int_{-\infty}^{\infty}\frac{\de\tau_1}{\sqrt{\pi}\, T(\alpha)} \int_{-\infty}^{\infty}\de \tau_2\int_{-\infty}^{\infty}\de\tau_3 \; e^{-\frac{\tau_1^2}{T(\alpha)^2}}|{{g}}_{a}\norbra{\tau_2}||{{g}}_{a}\norbra{\tau_3}| \bigg(\|(A^a(\tau_1+\tau_2)-A^a)\|+\|(A^a(\tau_1-\tau_3)-A^a)\|\bigg)
		\end{align}
		We can bound the operator norm in the first line as:
		\begin{align}
			\|(A^a(\tau_1+\tau_2)-A^a)\|\leq (|\tau_1|+|\tau_2|)\|[H_S,A^a]\|\,,\label{eq:boundCommcose}
		\end{align}
		and similarly for the second line. Choosing a constant $J$ such that for all $A^a$, we have $\|[H_S^*,A^a]\| \leq J $, we carry out the integral to obtain:
		\begin{align}\label{eq:beta01}
			\|\mathcal{D}_a^{\beta\dagger}\!\!-\mathcal{D}_{a}^{0\dagger }\|_{\infty\to\infty} \leq 2\,J\,\Gamma_a\norbra{\frac{2\, T(\alpha)}{\sqrt{\pi}} + \tau_a}\,.
		\end{align}
		We have a similar behavior also for the unitary term, since we have that:
		\begin{align}
			&\|H^{a,\beta}_{LS}-H^{a,0}_{LS}\|\leq\nonumber
			\\
			&\leq \int_{-\infty}^{\infty}\frac{\de\tau_1}{\sqrt{\pi}\, T(\alpha)} \int_{-\infty}^{\infty}\de \tau_2\int_{-\infty}^{\infty}\de\tau_3 \;  |F(\tau_1)||{{g}}_{a}\norbra{\tau_2}||{{g}}_{a}\norbra{\tau_3}| \bigg(\|A^a(\tau_1-\tau_3)-A^a\|+ \|A^a(\tau_1+\tau_2)-A^a\|\bigg)\,,
		\end{align}
		so that we can apply Eq.~\eqref{eq:boundCommcose} and the upper bound in Eq.~\eqref{eq:uppBoundFX} to obtain:
		\begin{align}\label{eq:beta02}
			\|H^{a,\beta}_{LS}-H^{a,0}_{LS}\|\leq 2\,J\,\Gamma_a\norbra{4  \,\beta\norbra{e^{\frac{T(\alpha)^2}{4 \beta ^2}}+\norbra{\frac{T(\alpha)}{\beta}}^2}+\norbra{\frac{\beta e^{\frac{\beta^2}{16 T(\alpha)^2}}}{4\sqrt{\pi} \,T(\alpha) }} \, \tau_a}\,.
		\end{align}
		Adding Eq.~\eqref{eq:beta01} and~\eqref{eq:beta02} gives the claim in Eq.~\eqref{eq:etaExpr}. 
\end{proof}

\subsection{Gaps at $\delta'=1$ for weakly interacting Hamiltonians}\label{sec:delta2}

\noindent In~\cite{smid2025rapidmixingquantumgibbs}, it was shown that the same techniques of~\cite{rouzé2024optimalquantumalgorithmgibbs} can be applied to the following situation: consider a non-interacting Hamiltonian $H_S^0 := \sum_i h_i$, where each of the $h_i $ only acts on one component of the systems. Then, let $H_S^\lambda := H_S^0 + \lambda V_S$, where $V_S$ is an interaction term that can be decomposed as:
\begin{align}
	V_S = \sum_{r\geq 1} \sum_{C \in C(r)} W_C\,,\label{eq:intCharact}
\end{align}
where $C(r)$ is the set of balls of radius $r$, and the terms $W_C$ satisfy $\|W_C\|\leq K e^{-\nu r}$. Denote $\lind_a^{\lambda}$ the Lindbladian thermalizing to the Gibbs state of $(H_S^\lambda)^*$ (the renormalized interacting Hamiltonian). Moreover, let us also denote $\lind_a^{\lambda=0}$ the one thermalizing to $(H_S^0)^*$. Then, since $\lind_a^{\lambda=0}$ is gapped, one can use the same techniques of~\cite{rouzé2024optimalquantumalgorithmgibbs} to prove that there exists a  $\lambda^*$ sufficiently small (but independent of the system size), such that $\forall\lambda\leq \lambda^*$, also $\lind_a^{\lambda}$ is gapped.

The reason why this is possible appears evident once the main ingredients in the proof of Lemma~\ref{lem:gapHT} are laid out: first, one needs a Lindbladian for which one can prove a size-independent gap. In the high-temperature case this was given by the Lindbladian of the depolarizing channel, in this case, by $\lind_a^{\lambda=0}$. Second, one needs to prove that the Lindbladian is sufficiently local for all the parameters (see Lemma~\ref{le:quasiLL}, item~\ref{it:quasiLL}). Both in the proof of the high temperature regime and in here, this relies on the existence of Lieb-Robinson bounds for Hamiltonians with exponential tails. This also means that the proof of Lemma~\ref{le:quasiLL}, item~\ref{it:quasiLL} carries out here completely unchanged. Finally, the third ingredient is that small perturbation around the gapped Lindbladian only perturb it linearly. This corresponds to Eq.~\eqref{eq:d34}. This is the only ingredient that does not carry out from the previous section, which we prove here: 
\begin{lemma}
	The difference between the Lindbladian at a finite interaction and at $\lambda=0$ is such that:
	\begin{align}
		\|\mathcal{L}_a^{\lambda \dagger}-\mathcal{L}_{a}^{\lambda=0 \dagger }\|_{\infty\to\infty}&\le \eta(\lambda) \,,
	\end{align}
	where:
	\begin{align}
		\eta(\lambda):= 4 \,C_{D,\nu} K\lambda\,\Gamma_a\norbra{\frac{2\, T(\alpha)}{\sqrt{\pi}} +4  \,\beta\norbra{e^{\frac{T(\alpha)^2}{4 \beta ^2}}+\norbra{\frac{T(\alpha)}{\beta}}^2}+ \norbra{1+\frac{\beta e^{\frac{\beta^2}{16 T(\alpha)^2}}}{4\sqrt{\pi} \,T(\alpha) }}\tau_a}\,,\label{eq:etaExpr2}
	\end{align}
	where $K$ and $\nu$ are the constants appearing in Eq.~\eqref{eq:intCharact}, and $C_{D,\nu}$ is a constant depending on the dimension and on $\nu$.
\end{lemma}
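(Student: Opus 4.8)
The plan is to prove this lemma exactly as the $\lambda\to0$ analogue of Lemma~\ref{le:quasiLL}, item~\ref{it:HT}, which is the $\beta\to0$ reduction in Section~\ref{sec:delta1}. Concretely, I would replace the comparison between $A^a(t)$ and the static $A^a$ used there by a comparison between the two Heisenberg evolutions $A^a_\lambda(t):=e^{i(H_S^\lambda)^* t}A^a e^{-i(H_S^\lambda)^* t}$ and $A^a_0(t):=e^{i(H_S^0)^* t}A^a e^{-i(H_S^0)^* t}$ under the two renormalized Hamiltonians. First I would write $\mathcal{L}_a^{\lambda\dagger}-\mathcal{L}_a^{\lambda=0\,\dagger}$ in the time picture of Eq.~\eqref{eq:timeLind} and split it, as in Eq.~\eqref{eq:Linddist}, into the difference of the two dissipators and (twice, up to the convention factor) the norm difference of the two Lamb-shift Hamiltonians $H^{a,\lambda}_{LS}-H^{a,\lambda=0}_{LS}$. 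Since the only place where the two generators differ is the Hamiltonian used to evolve the jump operators, the same triangle-inequality manipulations that led to Eq.~\eqref{eq:boundDissLoc} reduce both pieces to time integrals of the form
\[
\int\frac{{\rm d}\tau_1}{\sqrt\pi\,T(\alpha)}\,{\rm d}\tau_2\,{\rm d}\tau_3\;w(\tau_1)\,|g_a(\tau_2)|\,|g_a(\tau_3)|\,\Big(\|A^a_\lambda(\tau_1{+}\tau_2){-}A^a_0(\tau_1{+}\tau_2)\|+\|A^a_\lambda(\tau_1{-}\tau_3){-}A^a_0(\tau_1{-}\tau_3)\|\Big),
\]
with $w(\tau_1)=e^{-\tau_1^2/T(\alpha)^2}$ for the dissipator and $w(\tau_1)=|F(\tau_1)|$ for the Lamb-shift term. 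These are precisely the integrals already carried out in Eqs.~\eqref{eq:beta01}--\eqref{eq:beta02}, using $\int|g_a|=\sqrt{\Gamma_a}$, $\int|\tau||g_a|\le\sqrt{\Gamma_a}\,\tau_a$, the pointwise bound \eqref{eq:uppBoundFX} on $|F|$, and $\int|F(\tau_1)|\,{\rm d}\tau_1/(\sqrt\pi T(\alpha))\le \tfrac{\beta}{2\sqrt\pi T(\alpha)}e^{\beta^2/16T(\alpha)^2}$.

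The only genuinely new ingredient is the replacement for the estimate $\|A^a(t)-A^a\|\le|t|\,\|[H_S^*,A^a]\|\le|t|J$ used there, namely a bound on $\|A^a_\lambda(t)-A^a_0(t)\|$ that is linear in $\lambda$ with a size-independent constant. This I would obtain from the Duhamel identity
\[
A^a_\lambda(t)-A^a_0(t)=i\int_0^t e^{i(H_S^\lambda)^* s}\big[\delta H,\,A^a_0(t-s)\big]e^{-i(H_S^\lambda)^* s}\,{\rm d}s,\qquad \delta H:=(H_S^\lambda)^*-(H_S^0)^*,
\]
which gives $\|A^a_\lambda(t)-A^a_0(t)\|\le|t|\,\sup_{|u|\le|t|}\|[\delta H,A^a_0(u)]\|$. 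Here $\delta H=\lambda V_S+\alpha^2(H_{\operatorname{L}}^\lambda-H_{\operatorname{L}}^0)$; the second term is built from $V_S$ through the evolved jump operators and is quasi-local in the sense of Lemma~\ref{lemma:quasiLocHCG}, hence $O(\alpha^2\lambda)$ near the support of $A^a$ and absorbable into the constant. For the leading term, $A^a_0(u)$ remains quasi-localized around the support of $A^a$ over the relevant window $|u|\lesssim T(\alpha)$ — the non-interacting part $H_S^0=\sum_i h_i$ is strictly single-site, and the Lamb-shift correction has a Lieb-Robinson velocity $\propto\alpha^2$, so the light cone of Eq.~\eqref{eq:LRquasilocal} for $(H_S^0)^*$ barely opens — and, plugging the decomposition \eqref{eq:intCharact} with $\|W_C\|\le Ke^{-\nu r}$ into $\|[\lambda V_S,A^a_0(u)]\|\le 2\lambda\sum_{r\ge1}\sum_{C\in C(r)}\|W_C\|$ restricted to balls $C$ overlapping the (light-cone-broadened) support of $A^a$, the $O(r^D)$ such balls of radius $r$ sum against $e^{-\nu r}$ to a constant. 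This yields $\sup_{|u|\lesssim T(\alpha)}\|[\delta H,A^a_0(u)]\|\le 2C_{D,\nu}K\lambda$ with $C_{D,\nu}$ depending only on $D$, $\nu$ (and on $v_{LR}^*,\mu$, absorbed into $C_{D,\nu}$). Substituting $\|A^a_\lambda(t)-A^a_0(t)\|\le 2C_{D,\nu}K\lambda\,|t|$ in place of $|t|J$ into the three-region integral bounds of Eqs.~\eqref{eq:beta01}--\eqref{eq:beta02} then reproduces those bounds verbatim with $J\mapsto 2C_{D,\nu}K\lambda$, and adding the dissipator and Lamb-shift contributions gives exactly $\eta(\lambda)$ as stated in \eqref{eq:etaExpr2}.

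I expect the main obstacle to be making the uniform-in-$u$ estimate $\sup_{|u|\lesssim T(\alpha)}\|[\delta H,A^a_0(u)]\|=O(\lambda)$ clean: one must trade the exponential decay $e^{-\nu r}$ of the interaction terms against the Lieb-Robinson spreading of $A^a_0(u)$, and simultaneously handle the fact that the renormalized Hamiltonian $(H_S^\lambda)^*$ appearing in $\delta H$ itself depends — quasi-locally, at order $\alpha^2$ — on the interacting dynamics, so that the bound is mildly self-referential. Both of these are the direct analogues of difficulties already resolved in Lemma~\ref{lemma:quasiLocHCG} and in the proof of Lemma~\ref{le:quasiLL}, item~\ref{it:quasiLL}, so once the light-cone bookkeeping is in place the remaining work is a routine repeat of the computation behind Eqs.~\eqref{eq:beta01}--\eqref{eq:beta02}; no new idea beyond those of Section~\ref{sec:delta1} is required, which is precisely why the strategy of \cite{smid2025rapidmixingquantumgibbs} transfers to this setting.
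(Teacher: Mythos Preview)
Your proposal follows the same overall route as the paper: split the generator difference into dissipator and Lamb-shift parts, reduce everything to the single estimate $\|A^{a,\lambda}(t)-A^{a,0}(t)\|\le 2C_{D,\nu}K\lambda\,|t|$, and then reuse the integrals from Eqs.~\eqref{eq:beta01}--\eqref{eq:beta02} verbatim with $J\mapsto 2C_{D,\nu}K\lambda$. The only genuine difference lies in how that central estimate is obtained. The paper does \emph{not} argue via Lieb--Robinson spreading of $A^{a,0}(u)$: instead it inserts a spatial truncation at radius $r$, uses a double Duhamel over $(x,s)\in[0,\lambda]\times[0,t]$ to get $\|A^{a,\lambda,r}(t)-A^{a,0,r}(t)\|\le 2\lambda|t|\,\|V_S^r\|$, bounds $\|V_S^r\|$ by counting balls of each radius $\ell$ near the site of $A^a$ against $K e^{-\nu\ell}$, and then sends $r\to\infty$ so that the two outer terms of the triangle inequality vanish while the middle one stays bounded by $2C_{D,\nu}K\lambda|t|$.

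Your single-Duhamel-plus-commutator variant works too, but you are anticipating more difficulty than there is. Because $H_S^0=\sum_i h_i$ is strictly single-site, the associated Lamb-shift $H_{\operatorname L}^{0}$ built via Eq.~\eqref{eq:LSHam} from single-site $A^a(t)$ is again single-site, and hence so is $(H_S^0)^*$; consequently $A^{a,0}(u)$ never leaves its original site for \emph{any} $u$, and the supremum over $u$ that worries you is trivially uniform---no light-cone trade-off is needed. Likewise the $\alpha^2(H_{\operatorname L}^{\lambda}-H_{\operatorname L}^{0})$ contribution to $\delta H$ that you flag is simply dropped in the paper as higher order. Once these simplifications are made, your ball-counting and the paper's coincide, and the remaining computation is identical.
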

\begin{proof}
	Let us define $A^{a,\lambda}(t) := e^{i(H_S^\lambda)^*t} A^a e^{-i(H_S^\lambda)^*t}$ and $A^{a,\lambda,r}(t) := e^{i((H_S^\lambda)^*)^rt} A^a e^{-i((H_S^\lambda)^*)^rt}$, where $((H_S^\lambda)^*)^r$ acts non-trivially only in a ball of radius $r$ around the support of $A^a$. Then, it holds that:
	\begin{align}
		\|A^{a,\lambda}(t) -A^{a,0}(t) \|\leq \|A^{a,\lambda}(t) -A^{a,\lambda,r}(t) \| +  \|A^{a,\lambda,r}(t) -A^{a,0,r}(t)\|+ \|A^{a,0,r}(t) -A^{a,0}(t)\|\,.\label{eq:d64}
	\end{align}
	We can bound the norm in the middle as:
	\begin{align}
		\|A^{a,\lambda,r}(t) -A^{a,0,r}(t)\| &\leq \int_{0}^{\lambda}\de x\int_{0}^{t}\de s\;\|[(V^r)^{x,r}(s),A^{a,x,r}(t)]\|\leq 2\lambda \,|t| \|A^a\| \|V_S^r\|\leq
		\\
		&\leq 2\lambda\,|t| \sum_{r\geq \ell\geq 1} \norbra{\frac{\pi^{D/2}}{\Gamma\norbra{\frac{D}{2}+1}} \ell^{D}} \, K e^{-\nu \ell}
	\end{align}
	where in the second line used an an upper-bound on the number of balls of radius $\ell$ around $A^a$ by the volume of the ball in $D$ dimensions, and used Eq.~\eqref{eq:intCharact}. Since the sum above converges as $r\rightarrow\infty$, we can give the upper-bound:
	\begin{align}
		\|A^{a,\lambda,r}(t) -A^{a,0,r}(t)\| \leq 2\lambda\,|t| \sum_{ \ell= 1}^\infty \norbra{\frac{\pi^{D/2}}{\Gamma\norbra{\frac{D}{2}+1}} \ell^{D}} \, K e^{-\nu \ell} = 2 \,C_{D,\nu} K\lambda |t|\,,
	\end{align}
	where we implicitly defined $C_{D,\nu}$ to be the value of the sum above. Since this bound holds irrespective of $r$, we can take the limit $r\rightarrow\infty$ in Eq.~\eqref{eq:d64} to also obtain:
	\begin{align}
			\|A^{a,\lambda}(t) -A^{a,0}(t) \|\leq 2 \,C_{D,\nu} K\lambda |t|\,.
	\end{align}
	This is all we need to prove our claim. Indeed, since the dependency on time of the norm above is the same as the one appearing in the proof of Lemma~\ref{le:quasiLL}.\ref{it:HT}, by repeating the same steps from there we get:
	\begin{align}
		\eta(\lambda):= 4 \,C_{D,\nu} K\lambda\,\Gamma_a\norbra{\frac{2\, T(\alpha)}{\sqrt{\pi}} +4  \,\beta\norbra{e^{\frac{T(\alpha)^2}{4 \beta ^2}}+\norbra{\frac{T(\alpha)}{\beta}}^2}+ \norbra{1+\frac{\beta e^{\frac{\beta^2}{16 T(\alpha)^2}}}{4\sqrt{\pi} \,T(\alpha) }}\tau_a}\,.
	\end{align}
	This proves the claim.
\end{proof}

For a mixing bath, by choosing $T(\alpha) = \beta$ as per Def. \ref{def:bath}, the error scales as $\eta(\lambda) = \bigo{\lambda \,\beta}$. This shows that the product between the inverse temperature and the interaction strength actually is the important figure of merit for our result. In particular, following the discussion above, we obtain:
\begin{lemma}\label{lem:gaplowInt}
	For $(\beta\lambda) < \xi^*=\mathcal{O}(1)$ and a Markovian bath, the gap of the Lindbladian in Eq.~\eqref{eq:timeLind} defined in terms of $H^\lambda_S$ (see Eq.~\eqref{eq:intCharact}) has a lower bound $\lambda_{1} \ge c $, where $c$ is a constant independent of system size.
\end{lemma}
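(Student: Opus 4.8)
The plan is to mirror the proof of Lemma~\ref{lem:gapHT} essentially line by line, with the infinite-temperature generator $\lind_a^{0}$ replaced by the non-interacting generator $\lind_a^{\lambda=0}$ and the small parameter $\beta$ replaced by the product $\lambda\beta$. As emphasised in the text preceding the statement, three ingredients are required. The first is a reference Lindbladian whose gap is bounded below by a constant independent of system size; here this role is played by $\lind^{\lambda=0}=\sum_a\lind_a^{\lambda=0}$, the KMS generator thermalizing to the Gibbs state of $(H_S^0)^*$. Since $H_S^0=\sum_i h_i$ is a sum of single-site terms and each jump operator $A^a$ is a single-site Pauli, the Lamb-shift operator $H_{\operatorname{L}}$ built from the $A^a(t)$ via \eqref{eq:LSHam} acts only on the site supporting $A^a$, so $(H_S^0)^*$ is again a product Hamiltonian and each $\lind_a^{\lambda=0}$ acts non-trivially on a single site. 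Hence $\lind^{\lambda=0}$ is a tensor product of commuting single-site primitive KMS generators, and its gap equals $\kappa_0=\Omega(1)$, independent of $N$ (the $\|h_i\|$ being uniformly bounded).

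The second ingredient is the quasi-locality estimate of Lemma~\ref{le:quasiLL}, item~\ref{it:quasiLL}, namely $\sum_{r\ge r_0}\|\mathcal{L}_a^{\lambda,r\dagger}-\mathcal{L}_a^{\lambda,r-1\dagger}\|_{\infty\to\infty}\le\Delta(r_0)$ with $\Delta(r_0)$ decaying faster than $r_0^{-(2D+1)}$. That proof uses only a Lieb-Robinson bound with exponentially decaying tails for the Hamiltonian generating the jump-operator evolution; since $V_S$ has exponentially decaying interaction strengths by \eqref{eq:intCharact} and the Lamb-shift $H_{\operatorname{L}}$ is quasi-local by Lemma~\ref{lemma:quasiLocHCG}, the renormalized Hamiltonian $(H_S^\lambda)^*$ still obeys such a bound, so the argument transfers verbatim, with the Lieb-Robinson constants $v_{LR}^*,\mu$ now attached to $(H_S^\lambda)^*$. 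The third ingredient is that the interacting Lindbladian is a linear-in-$\lambda$ perturbation of the non-interacting one: this is precisely the lemma just proven above, giving $\|\mathcal{L}_a^{\lambda\dagger}-\mathcal{L}_a^{\lambda=0\dagger}\|_{\infty\to\infty}\le\eta(\lambda)$ with $\eta(\lambda)=\mathcal{O}(\lambda\beta)$ once the observation time is set to $T(\alpha)=\beta$ and the mixing-bath bound $\tau_a=\mathcal{O}(\beta)$ is used.

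With these three facts I would then invoke the perturbative argument of Appendix~B.1 of \cite{rouzé2024optimalquantumalgorithmgibbs} exactly as in the proof of Lemma~\ref{lem:gapHT}: the interacting Lindbladian has rapid mixing $\|\rho^{\operatorname{KMS}*}_S(t)-e^{-\beta (H_S^\lambda)^*}/Z^*\|_1\le 4n\,e^{-(\kappa_0-\kappa)\alpha^2 t}$ with $\kappa=4(2r_0+1)^{2D}\eta(\lambda)+L(r_0)$, where $L(r_0)$ is the same fixed functional of $\Delta$ as before and decays with $r_0$. One first fixes $r_0$ to be a constant large enough that $L(r_0)<\tfrac14\kappa_0$ — possible because $\Delta$ decays fast enough, which in turn uses the decay of $h(t)^{-1}$ from the mixing-bath definition — and then chooses the threshold $\xi^*=\beta^*\lambda^*$ small enough that $4(2r_0+1)^{2D}\eta(\lambda)<\tfrac14\kappa_0$ whenever $\beta\lambda<\xi^*$. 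This yields rapid mixing at rate at least $\tfrac12\kappa_0$, and by Lemma~6 of \cite{temme2015fast} the spectral gap at $\delta'=1$ satisfies $\lambda_1\ge\tfrac12\kappa_0=:c$, a constant independent of the system size.

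The main obstacle is the same as in the high-temperature case, and it is conceptual rather than computational: one must verify that the reference point $\lind^{\lambda=0}$ genuinely has a size-independent gap, and that all the geometric constants ($v_{LR}^*$, $\mu$, and the $\eta_h$ entering $\Delta$ through the mixing bath) remain bounded when the Lieb-Robinson bound is taken with respect to $(H_S^\lambda)^*$ rather than $H_S^*$. Both points are already settled here — the first by the tensor-product structure noted above, the second by Lemma~\ref{lemma:quasiLocHCG} together with the exponential tails of $V_S$ — so beyond those checks the proof is a mechanical transcription of the proof of Lemma~\ref{lem:gapHT} with $\beta$ replaced by $\lambda\beta$.
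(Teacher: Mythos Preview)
Your proposal is correct and follows essentially the same approach as the paper: the paper explicitly identifies the same three ingredients (the size-independent gap of $\lind^{\lambda=0}$, the unchanged quasi-locality bound from Lemma~\ref{le:quasiLL} item~\ref{it:quasiLL}, and the linear-in-$\lambda$ perturbation bound $\eta(\lambda)=\mathcal{O}(\lambda\beta)$), then concludes by invoking the argument of Lemma~\ref{lem:gapHT} verbatim with $\beta$ replaced by $\beta\lambda$. Your write-up is in fact slightly more explicit than the paper's, in that you spell out why $\lind^{\lambda=0}$ is a tensor product of single-site generators and hence has a constant gap.
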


\section{Faster convergence via reduction to the Davies Lindbladian in the commuting case}\label{app:davies}

\noindent Suppose $H_S$ is a $k$-local commuting Hamiltonian, so that it can be written as $H_S = \sum_{i}\; h_i$, where the sum is over all system sites, each $h_i$ acts over at most $k=\mathcal{O}(1)$ terms, and for any $i$ and $j$, it holds that $[h_i, \,h_j]=0$. For these systems, the commutativity of the $\{h_i\}$ implies that the jump operators are strictly local. This special feature allows us to approximate the physical evolution with a Davies Lindbladian \cite{Davies1974MarkovMasterI,Davies1976MarkovMasterII} with an error bound compatible with the other approximations, as  recalled here. The commutativity by itself also implies that the relevant minimum non-zero spacing between any two different frequencies $\nu_-^{\operatorname{min}}= \Omega \left (1 \right )$ is independent of the system size. 
Let the Davies generator be given in the frequency representation as: 
\begin{align}
	{\lind}^{\operatorname{D}} [\rho] 
	&=\sum_a \sum_{\nu\in B_H} \norbra{-\frac{i}{2}S_{a}^{\operatorname{D}}(\nu)\sqrbra{A^{a \dagger}_\nu A^a_\nu ,\rho}+\gamma_{a}^{\operatorname{D}}(\nu)\Big(A^a_\nu \rho A^{a\dagger}_\nu-\frac{1}{2}\{A^{a \dagger}_\nu A^{a}_\nu,\rho\}\Big)}\,.
\end{align} 
where the rates $\gamma_{a}^{\operatorname{D}}(\nu)$ and $S_{a}^{\operatorname{D}}(\nu)$ are related to the correlation function of the bath as:
\begin{align}
    \begin{cases}
        \gamma_{a}^{\operatorname{D}}(\nu) = \widehat{C}_{aa}\norbra{\nu}
        \\
        S_{a}^{\operatorname{D}}(\nu) = {\rm P.V.}\norbra{\frac{1}{\pi}\int_{-\infty}^{\infty}\de\omega^*\;\frac{\widehat{C}_{aa}(\omega^*)}{(\omega^*-\nu)}}
    \end{cases}\,.
\end{align}
In~\cite{scandi2025thermalizationopenmanybodysystems} it was shown that the rates of $\lind^{\operatorname{MB}}_\alpha$ are close to the ones of the Davies generator. This result, together with the locality of the jump operators, allows us to prove the following. 
\begin{lemma}\label{lem.Daviesapprox}Let $H_S$ be a $k$-local commuting Hamiltonian. Let $d_k$ be the dimension of the Hilbert space corresponding to a region of size $k$ in the physical space. Then, it holds that for any initial state $\rho_S(0)$:
\begin{align}
	&\| \rho_S(t)\!-\! e^{t \alpha^2 \mathcal{L}^{\operatorname{D}}}\rho_S(0)\|_1 
	\!=\!  \mathcal{O}\!\left(\alpha^3 (\Gamma t) (\Gamma\tau)\!\left( 1 \!+d_k^6  \right) \right)\!.
\end{align}
\end{lemma}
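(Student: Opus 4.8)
The plan is to route through the KMS Lindbladian $\cL^{\operatorname{MB}}_\alpha$ by a triangle inequality,
\begin{align*}
\norm{\rho_S(t)-e^{t\alpha^2\cL^{\operatorname{D}}}\rho_S(0)}_1\le\norm{\rho_S(t)-e^{t\alpha^2\cL^{\operatorname{MB}}_\alpha}\rho_S(0)}_1+\norm{e^{t\alpha^2\cL^{\operatorname{MB}}_\alpha}\rho_S(0)-e^{t\alpha^2\cL^{\operatorname{D}}}\rho_S(0)}_1,
\end{align*}
and to control the two summands separately, exploiting throughout that for a $k$-local commuting $H_S$ the jump operators $A^a$ and their Bohr components $A^a_\nu$ remain (quasi-)local, essentially supported on $\mathcal{O}(k)$ sites, and that the minimum nonzero gap between distinct Bohr frequencies is $\nu_-^{\operatorname{min}}=\Omega(1)$, independent of system size.

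For the first summand I would re-run the Dyson/Redfield estimate of \cite{scandi2025thermalizationopenmanybodysystems} behind Lemma~\ref{le:approxKMS} (i.e.\ \eqref{eq:closenessKMSintro}), specialised to this commuting setting. The discreteness of the spectrum with an $\Omega(1)$ gap makes the rotating-wave/secular step costless up to an error exponentially small in $T(\alpha)\nu_-^{\operatorname{min}}$, so one may tie the observation time to the bath memory $\tau$ rather than to $\beta$; the leftover error is the third-order term of the Dyson series, which is $\mathcal{O}(\alpha^3(\Gamma t)(\Gamma\tau))$ up to a multiplicative factor from the frequency sums. Concretely, the third-order term carries three insertions of $V=\sum_aA^a\otimes B^a$ and, after decomposing each $A^a=\sum_\nu A^a_\nu$ to match the detailed-balance form, a triple sum over Bohr frequencies; each such sum ranges over at most $\mathcal{O}(d_k^2)$ values for a $k$-local operator, giving the $d_k^6$. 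Collecting terms, the first summand is $\mathcal{O}\!\big(\alpha^3(\Gamma t)(\Gamma\tau)(1+d_k^6)\big)$.

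For the second summand I would use the Duhamel identity $e^{X}-e^{Y}=\int_0^1 e^{sX}(X-Y)e^{(1-s)Y}\,ds$ together with trace-norm contractivity of the CPTP semigroups $e^{s\alpha^2\cL^{\operatorname{MB}}_\alpha}$ and $e^{s\alpha^2\cL^{\operatorname{D}}}$ to get $\norm{e^{t\alpha^2\cL^{\operatorname{MB}}_\alpha}-e^{t\alpha^2\cL^{\operatorname{D}}}}_{1-1}\le t\alpha^2\,\norm{\cL^{\operatorname{MB}}_\alpha-\cL^{\operatorname{D}}}_{1-1}$, and then bound the generator difference. Its diagonal-in-frequency dissipative part is, up to the conventions made explicit in the Davies limit in the main text, the same $\widehat C_{aa}$, so the mismatch there is exactly the ``rates are close to Davies'' estimate of \cite{scandi2025thermalizationopenmanybodysystems}; the off-diagonal dissipative part of $\cL^{\operatorname{MB}}_\alpha$ (the $\nu_1\neq\nu_2$ terms) carries the Gaussian factor $e^{-(T(\alpha)(\nu_1-\nu_2))^2/4}\le e^{-(T(\alpha)\nu_-^{\operatorname{min}})^2/4}$; and the coherent part, i.e.\ the reconciliation of $B_\alpha$ and the renormalization $\alpha^2H_{\operatorname{L}}$ (hence $H_S^*$) against the Davies Lamb shift $S^{\operatorname{D}}_a(\nu)$, is again handled by rate-closeness plus the same Gaussian suppression. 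With $T(\alpha)=\Theta(\alpha^{-1})$ (e.g.\ $T(\alpha)=\tfrac{1}{2\alpha\Gamma}\sqrt{2+3\Gamma\tau}$) this factor is $e^{-\Omega(1/(\alpha\Gamma)^2)}$, smaller than any inverse polynomial in $\alpha$ in the regime where the approximation is meaningful, so $t\alpha^2\norm{\cL^{\operatorname{MB}}_\alpha-\cL^{\operatorname{D}}}_{1-1}$ is negligible against the first summand and the two combine to the claimed bound.

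I expect the main obstacle to be the bookkeeping in the first summand: re-deriving the Dyson/Markov error of \cite{scandi2025thermalizationopenmanybodysystems} so that in the commuting case it genuinely scales as $\Gamma\tau$ (the physical bath-memory time) rather than picking up the $\Gamma\beta$ artifact of the general non-commuting analysis, while tracking the precise power $d_k^6$ of the local dimension and no worse. This is where the strict locality of the commuting jump operators and the $\Omega(1)$ frequency gap must be used in earnest — the former to keep the frequency sums finite with an explicit $d_k$-count, the latter to justify that the secular (Davies) structure is reached with only super-exponentially small error in $T(\alpha)$. Reconciling the two Lamb-shift conventions ($H_S^*$ on the KMS side versus the explicit $S^{\operatorname{D}}$ on the Davies side) is a secondary but non-trivial point, which, as indicated in the main text, is already covered by the rate-closeness statement of \cite{scandi2025thermalizationopenmanybodysystems}.
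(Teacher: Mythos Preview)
Your overall strategy---triangle inequality, then bound each summand---matches the paper, but you route through a different intermediate Lindbladian, and this makes your accounting harder than necessary and places the $d_k^6$ factor in the wrong summand.

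The paper does not route through $\cL^{\operatorname{MB}}_\alpha$. Instead it introduces a \emph{coarse-grained} Lindbladian $\cL^{\operatorname{CG}}_\alpha$ (same dissipative structure in frequency, but with a principal-value coherent term $S_a^{\nu_1,\nu_2}$ rather than the KMS-symmetrized $B_\alpha$). For this object, \cite{scandi2025thermalizationopenmanybodysystems} already provides $\|\rho_S(t)-e^{t\alpha^2\cL^{\operatorname{CG}}_\alpha}\rho_S(0)\|_1=\mathcal{O}(\alpha^3(\Gamma t)(\Gamma\tau))$ with no $\beta$ and no $d_k$, so your ``main obstacle'' (re-running the Dyson estimate to kill the $\Gamma\beta$ term) simply does not arise: the first summand is quoted, not re-derived. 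The $d_k^6$ then comes entirely from the \emph{second} summand, the generator difference $\|\cL^{\operatorname{CG}}_\alpha-\cL^{\operatorname{D}}\|_{1-1}$: on the diagonal the rate estimates give $|\gamma_a^{\nu,\nu}-\gamma_a^{\operatorname{D}}(\nu)|\le \Gamma_a\tau_a/(\sqrt{\pi}\,T(\alpha))$ (likewise for $S_a$), off the diagonal everything is Gaussian-suppressed in $T(\alpha)\nu_-^{\operatorname{min}}$; locality of the commuting model gives $\|A_\nu^a\|\le d_k^2$ (hence $d_k^4$ per dissipator term), and the single frequency sum adds another $d_k^2$. With $T(\alpha)\propto\alpha^{-1}$ this yields $\mathcal{O}(\alpha\,\Gamma\,d_k^6\cdot \tau/T(\alpha))$, and Duhamel turns it into $\mathcal{O}(\alpha^3(\Gamma t)(\Gamma\tau)\,d_k^6)$.

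So your attribution of $d_k^6$ to a ``triple Bohr sum'' in the third-order Dyson term is not how the power arises: that term is controlled without any frequency decomposition (the $A^a$ have norm $1$), and the six is $d_k^4$ from $\|A_\nu^a\|^2$ times $d_k^2$ from counting frequencies in the generator comparison. Your route through $\cL^{\operatorname{MB}}_\alpha$ could in principle be made to work---its diagonal dissipative rates are exactly Davies, and its coherent $B_\alpha$ is purely off-diagonal hence Gaussian-suppressed---but then you must (i) verify that the renormalization $H_S\to H_S^*$ really reproduces the Davies Lamb shift $S_a^{\operatorname{D}}$ to the required order, and (ii) genuinely eliminate the $\Gamma\beta$ term of Lemma~\ref{le:approxKMS}, which is precisely the cost of the KMS-symmetrized coherent piece. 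The paper's choice of $\cL^{\operatorname{CG}}_\alpha$ sidesteps both issues at once.
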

\begin{proof}
Following~\cite{scandi2025thermalizationopenmanybodysystems}, we introduce the coarse grained Lindbladian:
\begin{align}
	{\lind}_\alpha^{\operatorname{CG}} [\rho] 
	&=\sum_a \sum_{\nu_1,\nu_2\in B_H} \norbra{-\frac{i}{2}S_{a}^{\nu_1,\nu_2}\sqrbra{A^{a\dagger}_{\nu_2} A^a_{\nu_1},\rho}+\gamma_{a}^{\nu_1,\nu_2}\Big(A^a_{\nu_1} \rho A^{a\dagger}_{\nu_2} -\frac{1}{2}\{A^{a \dagger}_{\nu_2} A^a_{\nu_1},\rho\}\Big)}\,,
\end{align}
where we introduced the rates:
\begin{align}
	\begin{cases}
		\gamma_{a}^{\nu_1,\nu_2}=\frac{e^{-(T(\alpha)\,\nu_-)^2/4}}{\sqrt{\pi}}\int_{-\infty}^{\infty}\de\Omega\;{\widehat{C}}_{aa}\norbra{\frac{\Omega}{T(\alpha)}+\nu_+}\,e^{-\Omega^2}\;\vspace{0.2cm}
		\\
		S_{a}^{\nu_1,\nu_2}=\frac{e^{-(T(\alpha)\,\nu_-)^2/4}}{\sqrt{\pi}}\int_{-\infty}^{\infty}\frac{\de\Omega}{\pi}\int_{-\infty}^{\infty}\de\Omega' \;\frac{\widehat{C}_{aa}(\Omega')}{(\Omega'-\nu_+)-\frac{\Omega}{T(\alpha)}}\;e^{-\Omega^2} \,
	\end{cases}\,,\label{eq:ratesCG}
\end{align} 
where $\nu_+ = \frac{\nu_1+\nu_2}{2}$ and $\nu_- = \nu_1-\nu_2$.
In~\cite{scandi2025thermalizationopenmanybodysystems} it was proven that:
\begin{align}
    \| \rho_S(t)\!-\! e^{t \alpha^2 \mathcal{L}^{\operatorname{CG}}_\alpha}\rho_S(0)\|_1 
	\!=\!  \mathcal{O}\!\left(\alpha^3 (\Gamma t) (\Gamma\tau) \right)\,.
\end{align}
Then, it is sufficient to prove that $\|\mathcal{L}^{\operatorname{CG}}_\alpha-\lind^{\operatorname{D}}\|_{1-1}\leq \bigo{\alpha \Gamma^2\tau}$ to prove the claim (see the discussion in~\cite{scandi2025thermalizationopenmanybodysystems}). This result relies on two facts: first, the rates in Eq.~\eqref{eq:ratesCG} satisfy~\cite{scandi2025thermalizationopenmanybodysystems}:
\begin{align}
	\begin{cases}
		|\gamma_{a}^{\nu_1,\nu_2}  -\gamma_{a}^{\operatorname{D} }(\nu_1)| \leq \frac{\Gamma_a \tau_a}{\sqrt{\pi} T(\alpha)}& \qquad\qquad{\rm if}\; \nu_1=\nu_2 \\
		|\gamma_{a}^{\nu_1,\nu_2}|\leq e^{-(T(\alpha)\,\nu_-^{\operatorname{min}})^2/4}\,\Gamma_a\norbra{1+\frac{ \tau_a}{\sqrt{\pi} T(\alpha)}} & \qquad\qquad{\rm otherwise}
	\end{cases}\;;
	\\ 
	\begin{cases}
		|S_{a}^{\nu_1,\nu_2}-S_{a}^{\operatorname{D}}(\nu_1) |  \leq\frac{\Gamma_a \tau_a}{\sqrt{\pi} T(\alpha)}& \qquad\qquad{\rm if}\; \nu_1=\nu_2 \\
		|S_{a}^{\nu_1,\nu_2}|\leq e^{-(T(\alpha)\,\nu_-^{\operatorname{min}})^2/4}\,\Gamma_a\norbra{1+\frac{ \tau_a}{\sqrt{\pi} T(\alpha)}} & \qquad\qquad{\rm otherwise}
	\end{cases}\;,
\end{align}
where $\Gamma_a$ and $\tau_A$ satisfy $\Gamma = \sum_a \Gamma_a$ and $(\Gamma\,\tau) = \sum_a (\Gamma_a\tau_a)$. Second, the jump operators $A^a_{\nu}$ become strictly local (see \cite[Lemma 11]{kastoryano2016quantumgibbssamplerscommuting}).
Locality is key in order to get an upper-bound on $\|A_a(\omega)\|$ that is independent of $N$. Indeed, we have:
\begin{align}
	\|A^a_{\nu}\| = \norm{\sum_{E_i-E_j = \nu} \; \ket{E_i}\bra{E_i}A^a\ket{E_j}\bra{E_j}}\leq \sum_{E_i-E_j = \nu} \; \norm{\ket{E_i}\bra{E_i}A_a\ket{E_j}\bra{E_j}}\leq d^2_k
\end{align}
where $d_k$ is the dimension of the Hilbert space over a region of size $k$. 

We are now ready to prove the claim. Thanks to the locality of the jumps, these are at most $d_k^2$. Then, we have that: 
\begin{align*}
	\|{\lind}^{\operatorname{CG}} [\rho] -{\lind}^{\operatorname{D}} [\rho] \|_1&\leq \sum_{a,\,\omega}\norbra{\frac{3\Gamma_a \tau_a}{\sqrt{\pi}\, T(\alpha)}d_k^4} + \sum_{\substack{a,\,\omega,\,\widetilde{\omega}\\\omega\neq\widetilde{\omega}}}\norbra{3\,e^{-(T(\alpha)\,\nu_-^{\operatorname{min}})^2/4}\,\Gamma_a\norbra{1+\frac{ \tau_a}{\sqrt{\pi} \,T(\alpha)}}d_k^4}\\
    &\leq 3\,\Gamma d_k^6\norbra{\frac{\tau}{\sqrt{\pi} \,T(\alpha)}+e^{-(T(\alpha)\,\nu_-^{\operatorname{min}})^2/4}\norbra{1+\frac{ \tau}{\sqrt{\pi}\, T(\alpha)}}d_k^2}\,.
\end{align*}
It should be noted that $\nu_-^{\operatorname{min}}$ is independent of the system size. Then, choosing $T(\alpha) \propto\alpha^{-1}$, we obtain that, up to exponentially small corrections:
\begin{align}
	&\| \rho_S(t)\!-\! e^{t \alpha^2 \mathcal{L}^{\operatorname{D}}}\rho_S(0)\|_1 
	\!=\!  \mathcal{O}\!\left(\alpha^3 (\Gamma t) (\Gamma\tau)\!\left( 1 \!+d_k^6  \right) \right)\!.
\end{align}
This proves the claim. 
\end{proof}

\begin{cor}
Assuming that the Davies generator satisfies a modified logarithmic Sobolev inequality with constant $\alpha_{\operatorname{MLSI}}(\cL^{\operatorname{D}})$, for a time $t=\widetilde{\mathcal{O}}\Big(\frac{\gamma_{\max}^4N^4\tau^2}{\alpha_{\operatorname{MLSI}}(\cL^{\operatorname{D}})^3\epsilon^2}\Big)$ and a sufficiently weak coupling $\alpha=\widetilde{\Omega}\Big(\frac{\epsilon\,\alpha_{\operatorname{MLSI}}(\cL^{\operatorname{D}})}{\gamma_{\max}^2N^2\tau}\Big)$, the system's state $\rho_S(t)$ is $\epsilon$-close to the Gibbs state in $1$-norm.
\end{cor}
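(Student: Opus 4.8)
The plan is to obtain the Corollary by feeding Lemma~\ref{lem.Daviesapprox} and the assumed modified logarithmic Sobolev inequality for $\cL^{\operatorname{D}}$ into a single triangle inequality and then optimizing the two free parameters $\alpha$ and $t$. The starting point is
\begin{align*}
\|\rho_S(t)-\rho_\beta^S\|_1\le\big\|\rho_S(t)-e^{t\alpha^2\cL^{\operatorname{D}}}\rho_S(0)\big\|_1+\big\|e^{t\alpha^2\cL^{\operatorname{D}}}\rho_S(0)-\rho_\beta^S\big\|_1,
\end{align*}
where I use that the unique fixed point of $\cL^{\operatorname{D}}$ is exactly $\rho_\beta^S=e^{-\beta H_S}/Z$: its dissipative part is GNS-detailed-balanced with respect to $\rho_\beta^S$, and the Lamb-shift term $\sum_{a,\nu}S^{\operatorname{D}}_a(\nu)A^{a\dagger}_\nu A^a_\nu$ is block-diagonal in the energy eigenbasis of $H_S$, hence commutes with $\rho_\beta^S$. (If one instead insists on the renormalized Hamiltonian $H_S^*$, then \eqref{eq:distanceGibbs} contributes a further $\alpha^2\beta\Gamma$, which is negligible for the parameters chosen below.)

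For the first term, Lemma~\ref{lem.Daviesapprox} gives $\mathcal{O}\big(\alpha^3(\Gamma t)(\Gamma\tau)(1+d_k^6)\big)$. Since $H_S$ is $k$-local commuting with $k=\mathcal{O}(1)$ one has $d_k=\mathcal{O}(1)$, and with $\Theta(N)$ single-qubit Pauli jumps, $\Gamma=\mathcal{O}(\gamma_{\max}N)$ and $\Gamma\tau=\mathcal{O}(\gamma_{\max}N\tau)$, so this term is $\mathcal{O}(\alpha^3\gamma_{\max}^2N^2\tau\,t)$. For the second term, $\alpha^2\cL^{\operatorname{D}}$ satisfies an $\alpha^2\alpha_{\operatorname{MLSI}}(\cL^{\operatorname{D}})$-MLSI, so by \eqref{eq.MLSIMixing} it is at most $\sqrt{2\,e^{-\alpha^2\alpha_{\operatorname{MLSI}}(\cL^{\operatorname{D}})t}\,\log\|(\rho_\beta^{S})^{-1}\|}$, and $\log\|(\rho_\beta^{S})^{-1}\|\le 2\beta\|H_S\|+N\log 2=\mathcal{O}(N)$ for a $k$-local Hamiltonian.

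To finish I would force each term to be at most $\epsilon/2$. The second is $\le\epsilon/2$ once $\alpha^2\alpha_{\operatorname{MLSI}}(\cL^{\operatorname{D}})\,t\ge L$ with $L:=\log\!\big(\widetilde{\mathcal{O}}(N/\epsilon^2)\big)=\widetilde{\mathcal{O}}(1)$; set $\alpha^2 t=L/\alpha_{\operatorname{MLSI}}(\cL^{\operatorname{D}})$. Substituting into the first-term bound, the requirement $\alpha^3\gamma_{\max}^2N^2\tau\,t=\mathcal{O}(\epsilon)$ becomes $\alpha=\widetilde{\Theta}\!\big(\epsilon\,\alpha_{\operatorname{MLSI}}(\cL^{\operatorname{D}})/(\gamma_{\max}^2N^2\tau)\big)$, the largest admissible coupling, and solving $t=L/\big(\alpha_{\operatorname{MLSI}}(\cL^{\operatorname{D}})\alpha^2\big)$ then gives $t=\widetilde{\mathcal{O}}\!\big(\gamma_{\max}^4N^4\tau^2/(\alpha_{\operatorname{MLSI}}(\cL^{\operatorname{D}})^3\epsilon^2)\big)$, exactly the claimed scaling; specializing to the $2$D toric code, where $\alpha_{\operatorname{MLSI}}(\cL^{\operatorname{D}})=\widetilde\Omega(1)$ by \cite{stengele2025modified}, recovers Theorem~\ref{thmCSSmemory}.

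There is no genuine analytic obstacle: the heavy lifting is already done by Lemma~\ref{lem.Daviesapprox} and the cited MLSI, so what remains is bookkeeping — tracking how $\Gamma$, $\Gamma\tau$, $d_k$ and $\log\|(\rho_\beta^{S})^{-1}\|$ scale with $N$ (this is what pins down the $N$-dependence of $\alpha$ and $t$), and keeping straight the physical time $t$ versus the Lindbladian time $\alpha^2 t$. The conceptually salient point, rather than a difficulty, is that Lemma~\ref{lem.Daviesapprox} only controls the Davies approximation to order $\alpha^3$, which forces $\alpha$ to be inverse-polynomial in $N$ and hence makes the physical thermalization time polynomial in $N$; this is intrinsic to asking for closeness in the global $1$-norm, mirroring the remark following Theorem~\ref{thm:bath}.
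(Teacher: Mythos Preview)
Your proposal is correct and follows essentially the same route as the paper: split via the triangle inequality, bound the first term by Lemma~\ref{lem.Daviesapprox} and the second by the MLSI through \eqref{eq.MLSIMixing}, then fix $\alpha^2 t$ to kill the exponential and solve for the largest admissible $\alpha$. Your additional remarks on the fixed point of $\cL^{\operatorname{D}}$, the negligibility of the $\alpha^2\beta\Gamma$ correction from \eqref{eq:distanceGibbs}, and the explicit $\mathcal{O}(N)$ bound on $\log\|(\rho_\beta^S)^{-1}\|$ are all correct and slightly more detailed than the paper's own proof.
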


\begin{proof}
By Lemma \ref{lem.Daviesapprox}, we have that 
\begin{align}\label{eq:bound1too}
\|\rho_S(t)-e^{t\alpha^2\cL^{\operatorname{D}}}(\rho_S(0))\|_1=\mathcal{O}\left(\alpha^3 \Gamma^2 t \tau\right).
\end{align}
Moreover, by the modified logarithmic Sobolev inequality together with Pinsker's inequality, we get (cf.~\eqref{eq.MLSIMixing})
\begin{align*}
\|e^{t\alpha^2\cL^{\operatorname{D}}}(\rho_S(0))-\rho_\beta\|_1\le \sqrt{2 e^{-\alpha^2\alpha_{\operatorname{MLSI}}(\cL^{\operatorname{D}}) t}\log\|\rho_\beta^{-1}\|}
\end{align*}
In order for the first bound to be below $\epsilon$, we need to choose $$\alpha^2 t=\Theta\left(\frac{1}{\alpha_{\operatorname{MLSI}}(\cL^{\operatorname{D}})}\Big(\log\log\|\rho_\beta^{-1}\|+\log\frac{1}{\epsilon}\Big)\right).$$
Next, we ensure that the bound in \eqref{eq:bound1too} is within $\mathcal{O}(\epsilon)$ by choosing 
\begin{align*}
\alpha=\Theta\left(\frac{\epsilon\alpha_{\operatorname{MLSI}}(\cL^{\operatorname{D}})}{\Gamma^2\tau\Big(\log\log\|\rho_\beta^{-1}\|+\log\frac{1}{\epsilon}\Big)}\right)=\widetilde{\Omega}\left(\frac{\epsilon\,\alpha_{\operatorname{MLSI}}(\cL^{\operatorname{D}})}{\gamma_{\max}^2N^2\tau  }\right).
\end{align*}
This gives the bound 
\begin{align*}
t=\widetilde{\mathcal{O}}\left( \frac{\gamma_{\max}^4N^4\tau^2}{\alpha_{\operatorname{MLSI}}(\cL^{\operatorname{D}})^3\epsilon^2} \right).
\end{align*}

\end{proof}

\noindent The modified logarithmic Sobolev inequality constant $\alpha_{\operatorname{MLSI}}(\cL^{\operatorname{D}})$ has recently been investigated for a variety of models \cite{capel2020modified,kochanowski2024rapidthermalizationdissipativemanybody,bardet2023rapid}, including for CSS Hamiltonians \cite{stengele2025modified} and on one-dimensional lattices \cite{bardet2024entropy}, where it was shown to scale polylogarithmically with the number of sites.

\end{document}